\newif\iflongver
\longvertrue

\iflongver
\documentclass[journal,twocolumn,10pt]{IEEEtran}
\else
\documentclass[conference]{IEEEtran}
\fi
\IEEEoverridecommandlockouts
\usepackage{cite}

\usepackage[utf8]{inputenc} 
\usepackage[T1]{fontenc} 

\usepackage{amsmath,amssymb,amsfonts}
\usepackage{amsthm}
\usepackage{dsfont}
\usepackage{cases}

\usepackage{cancel}

\usepackage{graphicx}
\usepackage{float}
\usepackage{algorithm, algpseudocode}
 
\usepackage{setspace}

\usepackage{xcolor}

\usepackage{tikz}
\usetikzlibrary{positioning}
\usetikzlibrary{fit}
\usetikzlibrary{shapes.geometric, arrows.meta, calc}
\usepackage{pgfplots}
\pgfplotsset{compat=1.18}

\theoremstyle{plain}
\newtheorem{thm}{\protect\theoremname}
\newtheorem{lem}[thm]{\protect\lemmaname}
\newtheorem{prop}[thm]{\protect\propositionname}
\newtheorem{cor}[thm]{\protect\corollaryname}

\theoremstyle{definition}
\newtheorem{defn}[thm]{\protect\definitionname}

\theoremstyle{remark}

\newtheorem{eg}{\protect\examplename}

\usepackage{etoolbox}
\AfterEndEnvironment{IEEEproof}{\medskip}

\providecommand{\definitionname}{Definition}
\providecommand{\lemmaname}{Lemma}
\providecommand{\propositionname}{Proposition}
\providecommand{\remarkname}{Remark}
\providecommand{\examplename}{Example}
\providecommand{\theoremname}{Theorem}
\providecommand{\corollaryname}{Corollary}

\floatstyle{ruled}
\newfloat{algorithm}{tbp}{loa}
\floatname{algorithm}{Algorithm}

\begin{document}

\title{Finite BSC-Decomposition of Binary-Input Symmetric-Output Channels and Applications to Channel Simulation
\thanks{This work was partially supported by a grant from the Research Grants Council of the Hong Kong Special Administrative Region, China [Project No: CUHK 14209823 (GRF)].}
}

\iflongver
\author{Jianguo Zhao and Cheuk Ting Li\\
Department of Information Engineering, The Chinese University of Hong Kong, Hong Kong, China\\
Email: zhaojg@link.cuhk.edu.hk, ctli@ie.cuhk.edu.hk
}
\else
\author{
\IEEEauthorblockN{Jianguo Zhao}
\IEEEauthorblockA{
\textit{Department of Information Engineering}\\
\textit{The Chinese University of Hong Kong}\\
Hong Kong, China\\
zhaojg@link.cuhk.edu.hk}
\and
\IEEEauthorblockN{Cheuk Ting Li}
\IEEEauthorblockA{
\textit{Department of Information Engineering}\\
\textit{The Chinese University of Hong Kong}\\
Hong Kong, China\\
ctli@ie.cuhk.edu.hk}
}
\fi

\maketitle

\begin{abstract}
We propose a finite BSC-decomposition method for representing an arbitrary binary-input symmetric-output channel (BISOC) as a mixture of binary symmetric channels (BSCs) with at most $\ell+1$ distinct crossover probabilities.
We show that the weighted average capacity of the component BSCs exceeds the capacity of the original BISOC by at most $(\pi^2\log e)/(48\ell^2)+O(\ell^{-5/2})$ bits, and the leading term is asymptotically optimal in the worst case.
Based on the finite BSC-decomposition, we design a channel simulation scheme for arbitrary BISOCs.
Combined with efficient channel simulation algorithms for BSCs, our scheme achieves a communication rate with $O((\log n / n)^{2/3})$ redundancy over the capacity for simulating $n$ copies of a BISOC.
\end{abstract}
\begin{IEEEkeywords}
Binary-input symmetric-output channel,
binary symmetric channel,
channel decomposition,
channel simulation
\end{IEEEkeywords}

\section{Introduction} \label{sec:intro}

A \emph{binary-input symmetric-output channel (BISOC)} is a channel $P_{Y|X}$ from $X \in \{\pm 1\}$ to $Y \in \mathbb{R}$ such that the conditional distribution of $Y$ given $X = 1$ is the same as that of $-Y$ given $X = -1$~\cite[Definition 4.8]{richardson2008modern}.
It includes binary symmetric channels (BSCs), binary erasure channels (BECs) and binary-input additive white Gaussian noise (BI-AWGN) channels as special cases.
Any BISOC can be decomposed into a mixture of BSCs in a capacity-preserving manner~\cite{landInformationCombining2006,
landBoundsInformationCombining2005,
goelaChannelPolarizationLens2020,wangSub47ScalingExponent2023}.
Specifically, there exists a random variable $W$ independent of $X$ such that $P_{Y|X,W}(\cdot|\cdot,w)$ is a BSC for any $w$ (a BSC is a BISOC with at most two possible outputs), and $I(X;Y|W)=I(X;Y)$, where the mutual informations are evaluated with uniformly distributed $X$.
A canonical choice is $W=|Y|$~\cite[Section 2]{landInformationCombining2006} (also see Theorem~\ref{thm:cano_BSC_decomp}).
This decomposition has been used in the analysis of iterative decoding~\cite{landInformationCombining2006,landBoundsInformationCombining2005,richardson2008modern} and channel polarization~\cite{goelaChannelPolarizationLens2020,wangSub47ScalingExponent2023} over BISOCs.

We call the number of distinct crossover probabilities among the component BSCs in a BSC-decomposition its \emph{cardinality}.
The canonical decomposition may have infinite cardinality, e.g., for BI-AWGN channels.
Some scenarios require a \emph{finite BSC-decomposition}.
For example, in channel simulation~\cite{bennett2002entanglement,cuff2013distributed,bennet2014reverse,li2024channel}, if our goal is to simulate a BISOC via noiseless communication from the encoder to the decoder (with common randomness), we may decompose the BISOC into BSCs with finitely many crossover probabilities, randomly partition the block into subblocks with the same crossover probability, and apply a BSC simulation scheme to each subblock.
This allows us to simulate a BISOC using existing BSC simulation schemes, such as PolarSim~\cite{sriramu2024fast} and the rejection-sampled syndrome encoder (RSSE)~\cite{zhaoRejectionSampledLinearCodes2026}. 
The communication rate redundancy of the resulting BISOC simulation scheme consists of two parts: the \emph{capacity excess} $I(X;Y|W)-I(X;Y)$ caused by the BSC-decomposition, and the redundancy of simulating the component BSCs, which is typically larger when the subblocks are short.
Therefore, we want the number of distinct crossover probabilities to be small to reduce the overhead from short subblocks, while keeping the capacity excess small.

Another application of finite BSC-decomposition is \emph{channel upgrading}, where a channel is approximated by an upgraded channel with a bounded output alphabet size~\cite{talHowConstructPolar2013,kartowskyGreedyMergeDegrading2019,ordentlichUpgradingAlgorithmOptimal2021}.
Such approximations are used in polar code construction to control the exponential growth of the output alphabet sizes of synthetic channels.
A finite BSC-decomposition of cardinality $\ell+1$ induces a symmetric upgraded channel with at most $2(\ell+1)$ output symbols and capacity $I(X;Y\,|\, W)$.
A similar tradeoff arises between output alphabet size and capacity excess.

This paper studies the problem of decomposing a BISOC $P_{Y|X}$ into a mixture of BSCs with bounded cardinality.
Specifically, for a given integer $\ell$, we seek a random variable $W$ independent of $X$ such that each conditional channel $P_{Y|X,W}(\cdot|\cdot,w)$ is a BSC and at most $\ell+1$ distinct crossover probabilities occur.
Our goal is to minimize the capacity excess $I(X;Y|W)-I(X;Y)$.
The canonical decomposition represents a BISOC as a mixture of $\mathrm{BSC}(\alpha)$ for possibly infinitely many values of $\alpha \in [0,1/2]$.
A finite BSC-decomposition can then be regarded as quantizing $\alpha$ into finitely many points (levels).
It is natural to quantize $\alpha$ uniformly, i.e., choose uniformly spaced quantization points in $[0,1/2]$.
Alternatively, we can perform uniform quantization after reparametrizing $\alpha$.
Common choices for parameterizing BSCs include the channel capacity $C(\alpha)=1-H_{\mathrm{b}}(\alpha)$,
the log-likelihood ratio $\log((1-\alpha)/\alpha)$ (commonly used in density evolution~\cite{richardsonCapacityLowdensityParitycheck2001}),
and the Bhattacharyya parameter $2\sqrt{\alpha(1-\alpha)}$ (commonly used in channel polarization analysis~\cite{arikan2009channel}).
We show that the appropriate choice for this problem is the \emph{angular parameterization}
\begin{equation*}
\theta=\arccos(1-2\alpha)\in[0,\pi/2].
\end{equation*}
Using uniformly spaced quantization points $\theta_j = j\pi/(2\ell)$ in the angular domain, we decompose the BISOC into a mixture of $\mathrm{BSC}(\alpha_j)$ with $\alpha_j = (1-\cos\theta_j)/2$ for $j \in [0:\ell]$ satisfying
\begin{equation}
I(X;Y|W) - I(X;Y) \leq \frac{\pi^2\log e}{32\ell^2}. \label{eq:intro_bound_uniform_quant}
\end{equation}
By randomly shifting the quantization points, as in subtractively dithered quantization~\cite{roberts1962picture,schuchman1964dither,ziv1985universal}, and averaging over the random shift, we improve the bound on the capacity excess to
\begin{equation}
I(X;Y|W) - I(X;Y) \leq \frac{\pi^2\log e}{48\ell^2} + O\Big(\frac{1}{\ell^{5/2}}\Big). \label{eq:intro_bound_random_shift}
\end{equation}
These two bounds are uniform, i.e., they do not depend on the BISOC.
We show that the leading term in \eqref{eq:intro_bound_random_shift} is asymptotically optimal, in the sense that the capacity excess is at least $(\pi^2\log e)/(48\ell^2)-o(1/\ell^2)$ for a specific BISOC.

A consequence of~\eqref{eq:intro_bound_random_shift} is that any BISOC can be upgraded to another BISOC with an output alphabet size at most $L=2(\ell+1)$, and the capacity increases by at most $(\pi^2\log e)/(12L^2)+O(L^{-5/2})$.
Compared to~\cite{ordentlichUpgradingAlgorithmOptimal2021} which showed a $(128\log e)/L^2$ bound for upgrading a binary-input channel, we show that the bound can be significantly improved if the channel is a BISOC.%

Based on the finite BSC-decomposition, we propose a BISOC simulation scheme.
It first uses common randomness to partition the block of $n$ BISOC copies into subblocks with the same crossover probability, and then simulates each subblock using a BSC simulation algorithm.
Combined with the RSSE~\cite{zhaoRejectionSampledLinearCodes2026}, our BISOC simulation scheme can achieve a communication rate with $O((\log n/n)^{2/3})$ redundancy over the BISOC capacity.
We demonstrate its practical performance through numerical experiments on BI-AWGN channels.

\iflongver
\else
All proofs are provided in the full version~\cite{zhao2026finite}.
\fi

\subsection{Related Work}

\subsubsection{BSC-decomposition}
A BISOC can be represented as a mixture of BSCs~\cite{landInformationCombining2006,
landBoundsInformationCombining2005,
goelaChannelPolarizationLens2020,wangSub47ScalingExponent2023}.
Following~\cite{wangSub47ScalingExponent2023}, we call such a representation the BSC-decomposition.
\iflongver
It has been used in the performance analysis of turbo codes~\cite{berrouShannonLimitErrorcorrecting1993}, LDPC codes~\cite{gallager1962low,mackayGoodErrorcorrectingCodes1999,richardsonCapacityLowdensityParitycheck2001} and polar codes~\cite{arikan2009channel} over BISOCs.
\else
It has been used in the performance analysis of turbo codes and LDPC codes~\cite{landInformationCombining2006,landBoundsInformationCombining2005} and polar codes~\cite{goelaChannelPolarizationLens2020,wangSub47ScalingExponent2023} over BISOCs.
These analyses mainly relied on the capacity-preserving canonical BSC-decomposition~\cite[Section 2]{landInformationCombining2006}.
\fi
\iflongver

Extrinsic information generated during iterative decoding can be analyzed through information combining~\cite{landInformationCombining2006,landBoundsInformationCombining2005,sutskoverExtremesInformationCombining2005}.
Using the capacity-preserving canonical decomposition~\cite[Section 2]{landInformationCombining2006}, the combined mutual information of BISOCs can be expressed exactly as an expectation of the corresponding BSC information-combining functions.
Bounds on this expectation can be applied to extrinsic information transfer (EXIT) chart analysis~\cite{tenbrinkConvergenceBehaviorIteratively2001,ashikhminExtrinsicInformationTransfer2004} and the estimation of LDPC iterative decoding thresholds.
Goela and Raginsky~\cite{goelaChannelPolarizationLens2020} expressed the Blackwell measure of a BISOC as a mixture of those of its component BSCs, and derived its evolution under the polar transform.
Based on this representation, they established one-step polarization inequalities for a class of channel functionals, recovering known results for capacity and the Bhattacharyya parameter.
Wang et al.~\cite{wangSub47ScalingExponent2023} reduced serial and parallel combinations of two BISOCs to computations on their BSC components.
Their analysis of a parallel combination followed by a serial combination yielded a tighter lower bound on the Bhattacharyya parameter, leading to an improved upper bound on the scaling exponent of polar codes.

\fi

Tal and Vardy~\cite{talHowConstructPolar2013} developed degrading and upgrading approximations of polar bit-channels to control the exponential growth of their output alphabet sizes, yielding an efficient polar code construction algorithm.
Ordentlich and Tal~\cite{ordentlichUpgradingAlgorithmOptimal2021} further showed that a binary-input channel can be upgraded to one with at most $L$ output symbols while increasing mutual information by at most $(128\log e)/L^2$.
For BISOCs, channel upgrading is closely related to finite BSC-decomposition, since a finite decomposition naturally induces a finite-output upgraded channel.

\iflongver
Most work to date has considered only the canonical BSC-decomposition and its applications to channel coding.
In this work, we study the finite BSC-decomposition with a prescribed bound on cardinality while controlling the resulting capacity excess.
To the best of our knowledge, this is the first work to apply BSC-decomposition to channel simulation.
\fi

\subsubsection{Channel simulation}

Channel simulation~\cite{bennett2002entanglement,cuff2013distributed,bennet2014reverse,li2024channel} concerns the problem of simulating a noisy channel using noiseless communication and common randomness between an encoder and a decoder.
\iflongver
In one-shot exact simulation with unlimited common randomness~\cite{harsha2010communication}, the objective is to minimize the expected communication cost while ensuring that output $Y$ of the decoder exactly follows a prescribed conditional distribution $P_{Y|X}$ given input $X$ of the encoder.
\else
In one-shot exact channel simulation~\cite{harsha2010communication}, an encoder and a decoder use noiseless communication and common randomness to simulate a prescribed channel $P_{Y|X}$ while minimizing the communication cost.
\fi
Such simulation can be implemented using sampling algorithms, including rejection sampling~\cite{devroye1986nonuniform,steiner2000towards} and its greedy variant~\cite{harsha2010communication,flamich2023adaptive,flamich2023faster}, and Poisson functional representation~\cite{sfrl_trans}.
When the input is arbitrary, the minimum worst-case expected communication cost lies between $C$ and $C+\log(C+1)+O(1)$~\cite{hillRejectionSamplingOptimal2026,sfrl_trans,harsha2010communication,braverman2014public}, where $C$ is the channel capacity.

\iflongver
A tighter lower bound is given by the functional information~\cite{hillRejectionSamplingOptimal2026,goc2024causal}, which generalizes the excess functional information lower bound in~\cite{sfrl_trans}.
By introducing the ring toss code, Hill et al.~\cite{hillRejectionSamplingOptimal2026} showed that rejection sampling achieves a communication cost within $\log e + 1$ bits of the functional information for channels with a uniformly bounded likelihood ratio $dP_{Y|X}/dP_Y$.

These communication bounds do not guarantee computational efficiency.
For a target distribution $P$ and a proposal distribution $Q$, Goc and Flamich~\cite{goc2024causal} showed that any causal rejection sampler requires an expected number of proposal samples that is at least exponential in the R\'enyi $\infty$-divergence $D_\infty(P\|Q)$.
For i.i.d. product distributions with $P \neq Q$, this lower bound grows exponentially with the blocklength, limiting the scalability of these sampling methods.

Structured simulation schemes address this computational bottleneck, e.g., by exploiting linear codes.
\else
Structured simulation schemes have been proposed to address computational bottlenecks in sampling-based simulation schemes~\cite{goc2024causal}, e.g., by exploiting linear codes.
\fi
Sriramu et al.~\cite{sriramu2024fast} developed PolarSim for simulating $n$ copies of symmetric binary-output channels, which asymptotically attains the capacity with $O(n\log n)$ complexity.
Zhao and Li~\cite{zhaoRejectionSampledLinearCodes2026} proposed the RSSE, which combines linear codes with rejection sampling to simulate finite-field additive exchangeable noise channels.
Their state-dependent construction achieves a communication cost within $\log e+1$ bits of 
\iflongver
the functional information.
\else
the functional information~\cite{hillRejectionSamplingOptimal2026}, which is a tighter lower bound than capacity.
\fi
\iflongver
In this work, we use finite BSC-decomposition to extend these efficient BSC simulation schemes to BISOCs.

For a comprehensive review of channel simulation, see~\cite{li2024channel}.
\fi

\iflongver
\subsection{Organization}
The rest of this paper is organized as follows.
Section~\ref{sec:preliminaries} introduces the notation and channel models.
Section~\ref{sec:finite_BSC_decomp} develops the finite BSC-decomposition and establishes upper and lower bounds on the capacity excess.
Section~\ref{sec:BISOC_sim} presents a BISOC simulation scheme based on the finite BSC-decomposition and analyzes its communication rate.
Section~\ref{sec:BSC_decomp_numerical_results} evaluates the performance of the proposed scheme on BI-AWGN channels.
Section~\ref{sec:channel_degrading} extends the quantization framework
to channel degrading and bounds the resulting capacity loss.
Finally, Section~\ref{sec:concluding_remarks} concludes the paper.

All proofs are provided in the appendices.
\fi

\section{Preliminaries} \label{sec:preliminaries}

\subsection{Notation}

Write $\log$ for base-$2$
\iflongver
logarithms and $\ln$ for natural logarithms,
\else
logarithms,
\fi
and entropies are measured in bits.
Let $[a:b] := \{a, a+1, \ldots, b\}$ for integers $a \leq b$.
For a discrete random variable $X$, we write its probability mass function as $p_X$.
For a general random variable $X$, we write its distribution as $P_X$.
The support of a distribution $P$ on $\mathbb{R}$, denoted by $\mathrm{supp}(P)$, is the smallest closed set $\mathcal{S} \subseteq \mathbb{R}$ such that $P(\mathcal{S}) = 1$.
The uniform distribution on a finite set or a bounded interval
$\mathcal{X}$ is denoted by $\mathrm{Unif}(\mathcal{X})$.
A channel with input $X$ and output $Y$ is defined as a Markov kernel $P_{Y|X}$ from the input space to the output space, so that for each $x$, $P_{Y|X}(\cdot|x)$ is a probability measure on the output space.
When the output alphabet is discrete, we also write the channel as $p_{Y|X}$.
The binary entropy function is $H_{\mathrm{b}}(\alpha) := -\alpha \log \alpha - (1-\alpha) \log (1-\alpha)$
for $\alpha \in [0, 1]$.
The capacity of a BSC with crossover probability $\alpha$ is $C(\alpha) := 1 - H_{\mathrm{b}}(\alpha)$.

\subsection{BISOC and BSC} \label{sec:BSC_decomp_preliminaries}

This subsection introduces the definitions of BISOC and BSC used in this paper.
\iflongver
We take the binary input alphabet to be $\{\pm 1\} := \{-1, 1\}$ and the output alphabet to be $\mathbb{R}$.
This convention simplifies the symmetry condition in Definition~\ref{def:BISOC} and it is equivalent to the $\{0,1\}$ convention.
\fi

\begin{defn} \label{def:BISOC}
A \emph{binary-input symmetric-output channel (BI-SOC)} is a conditional distribution $P_{Y|X}$ from $\{\pm 1\}$ to $\mathbb{R}$ such that for any Borel set $S \subseteq \mathbb{R}$, $P_{Y|X}(S|1) = P_{Y|X}(-S|-1)$, where $-S := \{-y : y \in S\}$, i.e., the conditional distribution of $Y$ given $X = 1$ is the same as that of $-Y$ given $X = -1$.
\end{defn}

The class of BISOCs includes the BI-AWGN channel, given by $Y = X + Z$ where $X \in \{\pm 1\}$ and $Z \sim \mathcal{N}(0, \sigma^2)$; the binary erasure channel (BEC); and the BSC defined below.

\begin{defn} \label{def:BSC}
A \emph{binary symmetric channel (BSC)} is a BISOC $P_{Y|X}$ where $|Y|$ is almost surely a constant, i.e., there exists a constant $y_0 \in \mathbb{R}$ such that $P_{Y|X}(\{\pm y_0\}|x) = 1$ for $x \in \{\pm 1\}$. 
The \emph{output coefficient} of the BSC is
\begin{equation*}
\beta = \beta(P_{Y|X}) := \arg\max_y P_{Y|X}(\{y\}|1).
\end{equation*}
The \emph{crossover probability} of the BSC is
\begin{equation*}
\alpha(P_{Y|X}) := 
\begin{cases}
    1-P_{Y|X}(\{\beta\} |1), &\beta \neq 0,\\
    1/2, &\beta = 0,\\
\end{cases}
\end{equation*}
which satisfies $0 \leq \alpha(P_{Y|X}) \leq 1/2$.
A BSC with crossover probability $\alpha$ is denoted as $\mathrm{BSC}(\alpha)$.
\end{defn}

Definition~\ref{def:BSC} is a slight generalization of the conventional BSC defined on alphabet $\{\pm 1\}$. 
A BSC $P_{Y|X}$ in Definition~\ref{def:BSC} can be obtained by multiplying the output $\hat{Y} \in \{\pm 1\}$ of a conventional $\mathrm{BSC}(\alpha(P_{Y|X}))$ by $\beta(P_{Y|X})$, i.e., $Y = \beta(P_{Y|X}) \hat{Y}$.

\section{Finite BSC-Decomposition} \label{sec:finite_BSC_decomp}

Let us first give a general definition of BSC-decomposition.

\begin{defn} \label{def:BSC_decomp}
A \emph{BSC-decomposition} of a BISOC $P_{Y|X}$ is characterized by a pair $(P_W, P_{Y|X,W})$, where $P_W$ is a distribution over $\mathcal{W}$ and $P_{Y|X,W}$ is a conditional distribution from $\{\pm 1\}\times \mathcal{W}$ to $\mathbb{R}$, satisfying the following two conditions: 
\begin{enumerate}
    \item $P_{Y|X}(\cdot|x) = \mathbb{E}_{W \sim P_W}[P_{Y|X,W}(\cdot|x, W)]$ for $x \in \{\pm 1\}$, i.e., $P_{Y|X}$ is a mixture of $P_{Y|X,W}$ according to $P_W$;
    \item $P_{Y|X,W}(\cdot | \cdot , w)$ is a BSC for $P_W$-almost every $w$.
\end{enumerate}
\end{defn}

The weighted average capacity of the component BSCs is
\begin{equation} \label{eq:weighted_average_capacity}
    I(X;Y\,|\, W)
    =\mathbb{E}_{W \sim P_W}\big[C\big(\alpha(P_{Y|X,W}(\cdot|\cdot,W))\big)\big],
\end{equation}
where $(X,W,Y) \sim P_X P_W P_{Y|X,W}$ with $P_X = \mathrm{Unif}(\{\pm 1\})$.
We use this convention throughout this paper and evaluate $I(X;Y)$ under the same $P_X$.
The following definition specifies the support and cardinality of a BSC-decomposition.

\begin{defn} \label{def:BSC_decomp_supp_card}
For a BSC-decomposition $(P_W,P_{Y|X,W})$, its \emph{random crossover probability} is $\hat{A} := \alpha(P_{Y|X,W}(\cdot|\cdot,W))$ where $W\sim P_W$.
Its \emph{support} is $\mathrm{supp}(P_{\hat{A}})$, i.e., the smallest closed set containing crossover probabilities of the component BSCs for $P_W$-almost every $w$.
Its \emph{cardinality} is $|\mathrm{supp}(P_{\hat{A}})|$, which may be finite or infinite.
\end{defn}

\iflongver
The following simple example illustrates these definitions.

\begin{eg} \label{eg:BSC_decomp}
Consider a $\mathrm{BSC}(\alpha)$ $p_{Y|X}$ defined on $\{\pm 1\}$ with $\alpha \in (0,1/2)$.
Let $p_W(0)=1-2\alpha$ and $p_W(1)=2\alpha$.
Let $p_{Y|X,W}(\cdot|\cdot,0)$ be $\mathrm{BSC}(0)$, a noiseless channel on $\{\pm 1\}$, and let $p_{Y|X,W}(\cdot|\cdot,1)$ be $\mathrm{BSC}(1/2)$, a purely noisy channel on $\{\pm 1\}$.
Then $(p_W, p_{Y|X,W})$ is a BSC-decomposition of $p_{Y|X}$ with cardinality $2$.
The capacity of $p_{Y|X}$ is $I(X; Y) = C(\alpha)$, while the weighted average capacity of the two component BSCs is $I(X; Y \,|\, W) = 1-2\alpha > C(\alpha)$.

\begin{figure}[ht!]
\vspace{-2ex}
\centering
\resizebox{\linewidth}{!}{
\begin{tikzpicture}[
    >={Stealth[length=1ex,width=1ex]},
    every path/.style={line width=0.15ex},
    label/.style={font=\normalsize}]

    \node[label] (mx) at (0, 3ex) {$-1$};
    \node[label] (px) at (0, -3ex) {$+1$};

    \node[label] (my) at (12ex, 3ex) {$-1$};
    \node[label] (py) at (12ex, -3ex) {$+1$};

    \draw[->] (mx) -- (my) node[midway, above](top_prob) {$1-\alpha$};
    \draw[->] (px) -- (py) node[midway, below](bottom_prob) {$1-\alpha$};
    \draw[->] (mx) -- (py);
    \draw[->] (px) -- (my);
    \node[below=0ex of top_prob] {$\alpha$};
    \node[above=0ex of bottom_prob] {$\alpha$};

    \node[above=0ex of mx] {$X$};
    \node[above=0ex of my] {$Y$};
    \path (mx) -- coordinate[pos=0.5, yshift=-3ex] (mid_point) (my); 
    \node[right=8ex of mid_point] {$=$};
\end{tikzpicture}
\hspace{-2ex}
\begin{tikzpicture}[
    >={Stealth[length=1ex,width=1ex]},
    every path/.style={line width=0.15ex},
    label/.style={font=\normalsize}]

    \node[label] (mx) at (0, 3ex) {$-1$};
    \node[label] (px) at (0, -3ex) {$+1$};

    \node[label] (my) at (12ex, 3ex) {$-1$};
    \node[label] (py) at (12ex, -3ex) {$+1$};

    \draw[->] (mx) -- (my) node[midway, above] {$1$};
    \draw[->] (px) -- (py) node[midway, below] {$1$};

    \node[above=0ex of mx] {$X$};
    \node[above=0ex of my] {$Y$};
    \path (mx) -- coordinate[pos=0.5, yshift=-3ex] (mid_point) (my); 
    \node[above=8ex of mid_point] {$W = 0$};
    \node[left=8ex of mid_point] {$(1-2\alpha)$};
\end{tikzpicture}
\hspace{-2ex}
\begin{tikzpicture}[
    >={Stealth[length=1ex,width=1ex]},
    every path/.style={line width=0.15ex},
    label/.style={font=\normalsize}]

    \node[label] (mx) at (0, 3ex) {$-1$};
    \node[label] (px) at (0, -3ex) {$+1$};

    \node[label] (my) at (12ex, 3ex) {$-1$};
    \node[label] (py) at (12ex, -3ex) {$+1$};

    \draw[->] (mx) -- (my) node[midway, above, yshift=-0.5ex] (top_prob) {$1/2$};
    \draw[->] (px) -- (py) node[midway, below, yshift=0.5ex] (bottom_prob) {$1/2$};
    \draw[->] (mx) -- (py);
    \draw[->] (px) -- (my);
    \node[below=-1ex of top_prob] {$1/2$};
    \node[above=-1.2ex of bottom_prob] {$1/2$};

    \node[above=0ex of mx] {$X$};
    \node[above=0ex of my] {$Y$};
    \path (mx) -- coordinate[pos=0.5, yshift=-3ex] (mid_point) (my); 
    \node[above=8ex of mid_point] {$W = 1$};
    \node[left=8ex of mid_point] {$+$ $2\alpha$};
\end{tikzpicture}
}
\caption{Decomposition of $\mathrm{BSC}(\alpha)$ into $\mathrm{BSC}(0), \mathrm{BSC}(1/2)$.} \label{fig:eg_BSC_decomp}
\end{figure}
\end{eg}
\fi

\iflongver
A BSC-decomposition can also be interpreted in terms of channel upgrading~\cite{talHowConstructPolar2013,kartowskyGreedyMergeDegrading2019,ordentlichUpgradingAlgorithmOptimal2021}.
From this viewpoint, a BSC-decomposition represents the original BISOC as an upgraded BISOC followed by a post processing step.

\begin{prop}
\label{prop:BSC_decomp_upgrading}
Let $(P_W,P_{Y|X,W})$ be a BSC-decomposition of a BISOC $P_{Y|X}$, with random crossover probability $\hat{A}$.
Define a channel $P_{Z|X}$ as follows:
first generate $\hat{A}\sim P_{\hat{A}}$ independently of $X$, pass $X$ through the conventional $\mathrm{BSC}(\hat{A})$ to obtain output $\hat{Y}\in\{\pm1\}$, and then take
\begin{equation*}
    Z=(1/2-\hat{A})\hat{Y} \in [-1/2,1/2].
\end{equation*}
Then $P_{Z|X}$ is a BISOC that is upgraded with respect to $P_{Y|X}$ (i.e., $P_{Y|X}$ is stochastically degraded with respect to $P_{Z|X}$), and satisfies
\begin{equation}
    \label{eq:BSC_decomp_upgrading_capacity}
    I(X;Z)=I(X;Y\,|\, W),
\end{equation}
i.e., the capacity of $P_{Z|X}$ is the weighted average capacity of the BSC-decomposition \eqref{eq:weighted_average_capacity}.
Moreover, the output alphabet size of $P_{Z|X}$ (the number of possible values of $Z$) is at most twice the cardinality of the BSC-decomposition.
\end{prop}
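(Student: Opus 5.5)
Four things need to be shown for the channel $P_{Z|X}$ of Proposition~\ref{prop:BSC_decomp_upgrading}: it is a BISOC, it is upgraded with respect to $P_{Y|X}$, it has capacity $I(X;Y\,|\,W)$, and its alphabet is small.

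\emph{Symmetry.} Conditioned on $\hat{A}=a$, the map $X\mapsto \hat{Y}$ is a conventional BSC. Hence $\hat{Y}$ given $X=1$ has the same law as $-\hat{Y}$ given $X=-1$. Multiplying by the deterministic factor $(1/2-a)$ and then mixing over $P_{\hat{A}}$ preserves this property, so $P_{Z|X}$ is a BISOC.

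\emph{Degradation.} I will exhibit a channel $P_{Y|Z}$, independent of $X$, such that composing it with $P_{Z|X}$ gives back $P_{Y|X}$. I will take a regular conditional distribution $P_{W|\hat{A}}$ from the joint law of $(W,\hat{A})$, which is valid because $\hat{A}$ is a measurable function of $W$. Given $Z=z$, the decoder does the following.
\begin{enumerate}
\item If $|z|<1/2$, it recovers $a=1/2-|z|$ and $\hat{y}=\mathrm{sign}(z)$.
\item It samples $W\sim P_{W|\hat{A}=a}$.
\item It outputs $Y=\beta(P_{Y|X,W}(\cdot|\cdot,W))\,\hat{y}$.
\item If $z=0$, so that $a=1/2$ and $\hat{y}$ is lost, it instead outputs $Y=\beta\,U$ with $U\sim\mathrm{Unif}(\{\pm1\})$ independent of everything else.
\end{enumerate}

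For the correctness check, note that $\hat{A}$ is independent of $X$, so sampling $W$ from $P_{W|\hat{A}}$ produces $(X,W)\sim P_XP_W$. Given $W=w$ with $a=\alpha(P_{Y|X,W}(\cdot|\cdot,w))$, the remark after Definition~\ref{def:BSC} says that $\beta\hat{Y}$ is distributed as $P_{Y|X,W}(\cdot|\cdot,w)$. For $a=1/2$ the uniform $U$ matches $\mathrm{BSC}(1/2)$ as well. Averaging over $W$ then gives $P_{Y|X}$ by condition 1 of Definition~\ref{def:BSC_decomp}. The delicate points are the edge case $a=1/2$ and the case $\beta=0$, where $Y=0$ regardless of $\hat{y}$; I expect both to be harmless.

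\emph{Capacity.} The pair $(\hat{A},\hat{Y})$ is a function of $Z$ except when $Z=0$, and in that case $\hat{Y}$ is independent of $X$. Also $\hat{A}$ is a function of $Z$ and is independent of $X$. Therefore
\[
I(X;Z)=I(X;Z,\hat{A})=I(X;\hat{A})+I(X;Z\,|\,\hat{A})=I(X;\hat{Y}\,|\,\hat{A}).
\]
The last equality holds because, for $a<1/2$, the variable $Z$ is in bijection with $\hat{Y}$ given $\hat{A}=a$, while for $a=1/2$ both sides are zero. This last quantity equals $\mathbb{E}[C(\hat{A})]$, which is exactly \eqref{eq:weighted_average_capacity}, and so \eqref{eq:BSC_decomp_upgrading_capacity} holds.

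\emph{Alphabet size.} The variable $Z$ takes values in $\{\pm(1/2-a): a\in\mathrm{supp}(P_{\hat{A}})\}$, which has at most $2|\mathrm{supp}(P_{\hat{A}})|$ elements.

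The main obstacle is the degradation step. It requires handling measurability of the conditional law $P_{W|\hat{A}}$ for a general $\mathcal{W}$, and I will justify that conditional law via the standard existence of regular conditional distributions.
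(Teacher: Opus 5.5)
Your proposal is correct and follows essentially the same route as the paper's proof. Symmetry holds because $P_{Z|X}$ is a mixture of BSCs; degradation is shown by recovering $(\hat{A},\hat{Y})$ from $Z$ (with a fresh uniform sign when $Z=0$), drawing $W\sim P_{W|\hat{A}}$ and outputting $\beta\hat{Y}$; and the capacity identity follows from $I(X;Z)=I(X;\hat{A},\hat{Y})=I(X;\hat{Y}\,|\,\hat{A})=\mathbb{E}[C(\hat{A})]$.

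Two small points:
\begin{enumerate}
\item Step~1 of your decoder should be conditioned on $z\neq0$ rather than $|z|<1/2$. The value $|z|=1/2$, i.e.\ $a=0$, must also be decoded as $a=1/2-|z|$, $\hat{y}=\mathrm{sgn}(z)$.
\item The existence of $P_{W|\hat{A}}$ comes from $\mathcal{W}$ being a standard Borel space, not from $\hat{A}$ being a function of $W$. To avoid any assumption on $\mathcal{W}$, it suffices to sample the real-valued output coefficient $\beta$ given $\hat{A}$, since each component BSC is determined by its pair $(\alpha,\beta)$.
\end{enumerate}
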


To see the upgrading relation, we describe how to generate $Y$ from $Z$.
First compute $\hat{A}=1/2-|Z|$, and take $\hat{Y}=\mathrm{sgn}(Z)$ if $Z\neq0$, or generate $\hat{Y}\sim\mathrm{Unif}(\{\pm1\})$ if $Z=0$.
Then generate $W$ conditional on $\hat{A}$ and recover $Y=\beta(P_{Y|X,W}(\cdot|\cdot,W))\hat{Y}$.
This post processing reproduces $Y$ without access to $X$ and thus $X \to Z \to Y$ forms a Markov chain.
The full proof is provided in Appendix~\ref{sec:BSC_decomp_upgrading_proof}.
\else
Another way to interpret a BSC-decomposition is via channel upgrading~\cite{talHowConstructPolar2013,kartowskyGreedyMergeDegrading2019,ordentlichUpgradingAlgorithmOptimal2021}. 
Define a BISOC $P_{Z|X}$, where we first generate $\hat{A} \sim P_{\hat{A}}$ independently of $X$, pass $X$ through the conventional $\mathrm{BSC}(\hat{A})$ to obtain output $\hat{Y} \in \{\pm 1\}$, and then take $Z = (1/2 - \hat{A})\hat{Y} \in [-1/2,1/2]$. 
Note that $P_{Y|X}$ is degraded with respect to $P_{Z|X}$, since we can generate $Y$ given $Z$ by first computing $\hat{A}=1/2-|Z|$ and $\hat{Y} = \mathrm{sgn}(Z)$ (generate $\hat{Y} \sim \mathrm{Unif}\{\pm 1\}$ if $Z=0$), generating $W$ conditional on $\hat{A}$, and then taking $Y = \beta(P_{Y|X,W}(\cdot|\cdot,W))\hat{Y}$. 
Therefore, given a finite BSC-decomposition, we obtain an upgraded channel $P_{Z|X}$ with output alphabet size at most twice the cardinality of the BSC-decomposition.
The capacity of $P_{Z|X}$ is $I(X;\hat{A},\hat{Y})=I(X;\hat{Y}|\hat{A}) = I(X;Y|W)$, which is exactly the weighted average capacity of the decomposition \eqref{eq:weighted_average_capacity}.
\fi



\iflongver
The BSC-decomposition in Example~\ref{eg:BSC_decomp} increases the channel capacity from $C(\alpha)$ to $1 - 2\alpha$.
In contrast, every 
\else
Every 
\fi
BISOC admits a canonical BSC-decomposition that preserves the capacity exactly~\cite[Section 2]{landInformationCombining2006}.
\iflongver
We include a proof in Appendix \ref{sec:cano_BSC_decomp_proof} for the sake of completeness.
\fi

\begin{thm}[Canonical BSC-Decomposition~\cite{landInformationCombining2006}] \label{thm:cano_BSC_decomp}
For any BISOC $P_{Y|X}$, $(P_{|Y|}, P_{Y|X,|Y|})$ is a BSC-decomposition with
\begin{equation*}
    I(X; Y \,|\, |Y|) = I(X; Y).
\end{equation*}
\end{thm}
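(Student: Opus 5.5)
We take $W=|Y|$ and check the two conditions of Definition~\ref{def:BSC_decomp}, then the capacity identity. The first step is to show that $|Y|$ is independent of $X$ under $P_X=\mathrm{Unif}(\{\pm1\})$. By Definition~\ref{def:BISOC}, for any Borel $S\subseteq[0,\infty)$ we have $P_{Y|X}(S\cup -S\,|\,1)=P_{Y|X}(-S\cup S\,|\,-1)$. Hence the law of $|Y|$ given $X=x$ does not depend on $x$, and we call it $P_{|Y|}$. Then $P_{Y|X,|Y|}$ is any regular conditional distribution of $Y$ given $(X,|Y|)$. Integrating it against $P_{|Y|}$ recovers $P_{Y|X}(\cdot|x)$ by the disintegration theorem, and since $|Y|\perp X$ this gives condition 1.

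Next we verify condition 2. Given $|Y|=t$, the output lies in $\{\pm t\}$, so $|Y|$ is almost surely constant in the component channel. We still need the symmetry $P_{Y|X,|Y|}(\{t\}|1,t)=P_{Y|X,|Y|}(\{-t\}|-1,t)$ for $P_{|Y|}$-a.e.\ $t$. We get this by choosing the disintegration symmetrically. Let $\mu_x:=P_{Y|X}(\cdot|x)$, so that $\mu_{-1}(S)=\mu_1(-S)$. Define $q(t):=\mathbb{P}(Y>0\mid |Y|=t, X=1)$ as the Radon--Nikodym derivative of $S\mapsto\mu_1(S\cap(0,\infty))$ with respect to the law of $|Y|$. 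Then the set $\{Y<0\}$ given $X=-1$ has the same conditional probability $q(t)$, by the reflection $S\mapsto -S$. Setting the component channel at $t>0$ to output $t\,\mathrm{sgn}$ with these probabilities makes it a BSC, and its crossover probability is $\min(q,1-q)(t)$. At $t=0$ the output is $0$ deterministically, which is a $\mathrm{BSC}(1/2)$ by Definition~\ref{def:BSC}. We expect this measure-theoretic step, choosing versions so that symmetry holds almost everywhere and handling null sets, to be the main technical obstacle, though it is routine.

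Finally, we prove the capacity identity. The function $y\mapsto|y|$ is deterministic, so $I(X;Y)=I(X;Y,|Y|)=I(X;|Y|)+I(X;Y\,|\,|Y|)$ by the chain rule for mutual information on general alphabets. Independence of $|Y|$ and $X$ gives $I(X;|Y|)=0$, which yields $I(X;Y)=I(X;Y\,|\,|Y|)$. This also equals the expression in \eqref{eq:weighted_average_capacity}, because conditionally on $|Y|=t$ the pair $(X,Y)$ is a uniform-input BSC with the crossover probability identified above.
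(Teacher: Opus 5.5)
Your proposal is correct and follows essentially the same route as the paper's proof. Both derive $|Y|\perp X$ from the reflection symmetry, identify the components as BSCs on $\{\pm t\}$, and conclude via $I(X;Y)=I(X;Y,|Y|)=I(X;|Y|)+I(X;Y\,|\,|Y|)$ with $I(X;|Y|)=0$. The one technical difference is minor: the paper shows that any version of $P_{Y|X,|Y|}$ satisfies the symmetry almost everywhere by comparing integrals over arbitrary $S_2\subseteq[0,\infty)$, whereas you build an explicitly symmetric version through the Radon--Nikodym density $q$, which is an equivalent way to handle the same step.
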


Let $A := \alpha(P_{Y|X,|Y|}(\cdot|\cdot,|Y|))$ be the random crossover probability of the canonical decomposition. 
The canonical decomposition may have infinite cardinality $|\mathrm{supp}(P_A)|$ (e.g., BI-AWGN), and hence cannot be used to give an upgraded channel with finite output alphabet size, and cannot be used for BISOC simulation in Section~\ref{sec:BISOC_sim}.
This motivates us to seek a decomposition of prescribed finite cardinality.


We propose a construction based on randomized quantization of $A$.
Choose $\ell+1$ quantization points $\alpha_0<\cdots<\alpha_\ell$ in $[0,1/2]$ such that $\mathrm{supp}(P_A)\subseteq[\alpha_0,\alpha_\ell]$.
For $A=\alpha\in[\alpha_j,\alpha_{j+1}]$, we randomly quantize $\alpha$ to either $\alpha_j$ or $\alpha_{j+1}$ with probabilities ensuring that the quantized value has mean $\alpha$.
Correspondingly, this represents $\mathrm{BSC}(\alpha)$ as a mixture of $\mathrm{BSC}(\alpha_j)$ and $\mathrm{BSC}(\alpha_{j+1})$.
Applying this rule to every component BSC of the canonical decomposition yields a BSC-decomposition $(P_W,P_{Y|X,W})$ supported on $\{\alpha_0,\ldots,\alpha_\ell\}$ and hence of cardinality at most $\ell+1$.
Let $\hat{C}$ be the piecewise-linear interpolant of $C$ at the points
\iflongver
$\alpha_0,\ldots,\alpha_\ell$, as illustrated in Fig.~\ref{fig:piecewise-linear_interpolation}.
\else
$\alpha_0,\ldots,\alpha_\ell$.
\fi
Conditional on $A=\alpha\in[\alpha_j,\alpha_{j+1}]$, the weighted average capacity of $\mathrm{BSC}(\alpha_j)$ and $\mathrm{BSC}(\alpha_{j+1})$ is $\hat{C}(\alpha)$.
Taking expectation over $A$, we obtain the \emph{capacity excess}
\begin{equation} \label{eq:capacity_excess}
    I(X;Y\,|\, W)-I(X;Y)=\mathbb{E}[\hat{C}(A)-C(A)].
\end{equation}
\iflongver
Formal statement and proof of this randomized quantization construction are given in Appendix~\ref{sec:alpha_nodes_to_BSC_decomp_proof}.

\begin{figure}[ht!]
    \centering
    \includegraphics[width=0.65\linewidth]{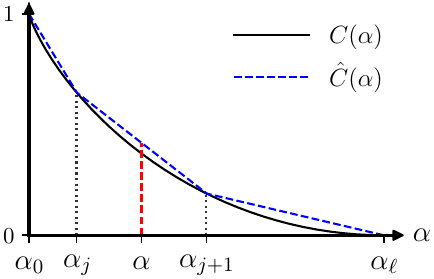}
    \caption{Piecewise-linear interpolation of capacity function. \label{fig:piecewise-linear_interpolation}}
    \vspace{-1ex}
\end{figure}
\fi

It remains to choose $\alpha_0,\ldots,\alpha_\ell$ to control the capacity excess in \eqref{eq:capacity_excess}.
To obtain an upper bound valid for any BISOC, we bound the linear interpolation error $\hat{C}(\alpha)-C(\alpha)$ uniformly in $\alpha$.
For $\alpha \in [\alpha_j,\alpha_{j+1}]$ with $\alpha_j > 0$,
\begin{align}
    \hat{C}(\alpha)-C(\alpha) &\leq \max_{\xi \in [\alpha_j,\alpha_{j+1}]}\frac{C''(\xi)}{2}(\alpha-\alpha_j)(\alpha_{j+1}-\alpha) \notag\\
    &\leq \frac{C''(\alpha_j)}{8}(\alpha_{j+1}-\alpha_j)^2. \label{eq:inter_error_upper_bound}
\end{align}
Consider points of the form $\alpha_j = \tilde{\alpha}(j/\ell)$ for a function $\tilde{\alpha}: [0,1] \to [0,1/2]$.
Then $\alpha_{j+1}-\alpha_j \approx \tilde{\alpha}'(j / \ell) / \ell$.
With $t = j/\ell \in (0,1)$, the upper bound \eqref{eq:inter_error_upper_bound} is approximately
\iflongver
\begin{equation}
    \frac{C''(\tilde{\alpha}(t))}{8\ell^2}(\tilde{\alpha}'(t))^2. \label{eq:approx_inter_error_upper_bound}
\end{equation}
\else
\begin{equation}
    C''(\tilde{\alpha}(t))(\tilde{\alpha}'(t))^2 /(8\ell^2). \label{eq:approx_inter_error_upper_bound}
\end{equation}
\fi
To minimize the maximum of \eqref{eq:approx_inter_error_upper_bound} over $t\in(0,1)$, we choose $\tilde{\alpha}$ such that \eqref{eq:approx_inter_error_upper_bound} is constant in $t$, i.e.,
\iflongver
\begin{equation} \label{eq:approx_inter_error_upper_bound2}
    \tilde{\alpha}'(t)
    = \frac{\lambda}{\sqrt{C''(\tilde{\alpha}(t))}}, \quad \forall t\in(0,1),
\end{equation}
\else
\begin{equation} \label{eq:approx_inter_error_upper_bound2}
    \tilde{\alpha}'(t)
    =\lambda /\sqrt{C''(\tilde{\alpha}(t))}, \quad \forall t\in(0,1),
\end{equation}
\fi
for some constant $\lambda >0$, so that \eqref{eq:approx_inter_error_upper_bound} equals $\lambda ^2/(8\ell^2)$.
Solving \eqref{eq:approx_inter_error_upper_bound2} with $\tilde{\alpha}(0)=0$ and $\tilde{\alpha}(1)=1/2$ gives $\tilde{\alpha}(t) = (1-\cos(\pi t/2))/2$.
Introducing the angle $\theta = \pi t/2 \in [0,\pi/2]$, we obtain the \emph{angular parameterization}
\iflongver
\begin{equation} \label{eq:angular_parameterization}
    \alpha(\theta) := \tilde{\alpha}(2\theta/\pi) = \frac{1-\cos\theta}{2}.
\end{equation}
\else
\begin{equation} \label{eq:angular_parameterization}
    \alpha(\theta) := \tilde{\alpha}(2\theta/\pi) = (1-\cos\theta)/2.
\end{equation}
\fi
Then, $t = j/\ell$ for $j \in [0:\ell]$ correspond to equally spaced angles $\theta_j := j\pi/(2\ell)$ for $j \in [0:\ell]$, with $\alpha_j = \alpha(\theta_j)$.
\iflongver
For $\theta$ in a short interval $(\theta_j,\theta_{j+1})$, the linear interpolation error, expressed in terms of $\theta$, has the leading term
\begin{align*}
    &\quad~\frac{C''(\alpha(\theta))}{2}
    \big(\alpha'(\theta)\big)^2
    (\theta-\theta_j)(\theta_{j+1}-\theta) \\
    &=\frac{\log e}{2}
    (\theta-\theta_j)(\theta_{j+1}-\theta),
\end{align*}
which has a simple quadratic form with a constant coefficient.
Thus, the equally spaced angles are a natural choice.
\fi
Building upon this parameterization, we obtain the following upper and lower bounds on the capacity excess.
\iflongver
Their proofs are provided in Appendices~\ref{sec:finite_BSC_decomp_proof}, \ref{sec:sharp_finite_BSC_decomp_proof} and \ref{sec:uniform_theta_lower_bound}, respectively.
\fi

\begin{thm}[Finite BSC-Decomposition I] \label{thm:finite_BSC_decomp}
For any integer $\ell \geq 1$ and any BISOC $P_{Y|X}$, there exists a BSC-decomposition $(P_W, P_{Y|X,W})$ of cardinality at most $\ell+1$ such that
\begin{equation}
    I(X; Y \,|\, W) - I(X; Y) \leq \dfrac{\pi^2\log e}{32\ell^2}. \label{eq:gamma_contruct}
\end{equation}
\end{thm}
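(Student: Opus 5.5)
The plan is to combine the canonical decomposition of Theorem~\ref{thm:cano_BSC_decomp} with the randomized quantization described above. The quantization points will be the equally spaced angles $\theta_j=j\pi/(2\ell)$, with $\alpha_j=\alpha(\theta_j)$ and $j\in[0:\ell]$. Since $\alpha_0=0$ and $\alpha_\ell=1/2$, the support of $P_A$ lies in $[\alpha_0,\alpha_\ell]$. Quantizing each component $\mathrm{BSC}(A)$ to its two neighbouring points, with the mean preserved, gives a BSC-decomposition of cardinality at most $\ell+1$. By \eqref{eq:capacity_excess}, its capacity excess is $\mathbb{E}[\hat C(A)-C(A)]$. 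It therefore suffices to prove the uniform bound
\[
\hat C(\alpha)-C(\alpha)\le \frac{\log e}{8}\Big(\frac{\pi}{2\ell}\Big)^2=\frac{\pi^2\log e}{32\ell^2}\quad\text{for all }\alpha\in[0,1/2].
\]

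To prove this, I work in the angular variable on one cell $[\theta_j,\theta_{j+1}]$ of width $\Delta=\pi/(2\ell)$. Let $g(\theta)=C(\alpha(\theta))$, and write $\hat C(\alpha(\theta))=a+b\,\alpha(\theta)$, where $b=(C(\alpha_{j+1})-C(\alpha_j))/(\alpha_{j+1}-\alpha_j)$ is the chord slope. Using $C''(\alpha)=\log e/(\alpha(1-\alpha))$, $\alpha(1-\alpha)=\sin^2\theta/4$, $\alpha'(\theta)=\sin\theta/2$ and $\alpha''(\theta)=\cos\theta/2$, we get
\[
g''(\theta)=\log e+C'(\alpha(\theta))\frac{\cos\theta}{2}.
\]
Define the comparison function
\[
q(\theta)=\hat C(\alpha(\theta))-g(\theta)-\frac{\log e}{2}(\theta-\theta_j)(\theta_{j+1}-\theta).
\]
It vanishes at both endpoints, and its second derivative is
\[
q''(\theta)=\big(b-C'(\alpha(\theta))\big)\frac{\cos\theta}{2}.
\]
Because $C$ is strictly convex, $C'$ is increasing. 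Hence there is a mean-value point $\theta^*$ with $q''\ge0$ on $[\theta_j,\theta^*]$ and $q''\le0$ on $[\theta^*,\theta_{j+1}]$. So $q$ is convex and then concave.

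The goal is $q\le 0$ on the cell. This gives $\hat C-C\le\frac{\log e}{2}(\theta-\theta_j)(\theta_{j+1}-\theta)\le \frac{\log e}{8}\Delta^2$, which is the claim. Here is the convex-then-concave argument. Suppose $q'(\theta_{j+1})\ge 0$. On the concave piece $q'$ is decreasing, so $q'\ge q'(\theta_{j+1})\ge0$ there. Then $q$ increases up to $q(\theta_{j+1})=0$, so $q\le0$ on $[\theta^*,\theta_{j+1}]$, and in particular $q(\theta^*)\le 0$. On the convex piece, $q$ lies below its chord between $q(\theta_j)=0$ and $q(\theta^*)\le0$, so $q\le0$ there as well. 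Everything therefore reduces to the single endpoint inequality
\[
q'(\theta_{j+1})=\big(b-C'(\alpha_{j+1})\big)\frac{\sin\theta_{j+1}}{2}+\frac{\log e}{2}\Delta\ \ge 0,
\]
that is, $\big(C'(\alpha_{j+1})-b\big)\sin\theta_{j+1}\le \Delta\log e$.

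This endpoint inequality is the main obstacle. It is an elementary inequality in two angles $\phi_1=\theta_j<\phi_2=\theta_{j+1}\le\pi/2$ with $\Delta=\phi_2-\phi_1$. Since $C'(\alpha)=\log\frac{\alpha}{1-\alpha}$, in natural-log units it reads
\[
\Big(2\ln\tan\tfrac{\phi_2}{2}-\text{(chord slope)}\Big)\sin\phi_2\le \phi_2-\phi_1 .
\]
I would try to prove it for arbitrary $0\le\phi_1<\phi_2\le\pi/2$, so that no dependence on $\ell$ remains. First I would write $C'(\alpha_{j+1})-b$ as a weighted average of $C'(\alpha_{j+1})-C'(s)=\int_s^{\alpha_{j+1}}C''$ over $s\in[\alpha_j,\alpha_{j+1}]$. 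Then I would change variables to $s=\alpha(\varphi)$, which turns $C''(s)\,ds$ into $2\log e\,d\varphi/\sin\varphi$. This leaves an integral of $1/\sin\varphi$ against explicit trigonometric weights, to be compared with $(\phi_2-\phi_1)/\sin\phi_2$.

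The case $\phi_1=0$, where $\alpha_0=0$ and $C'$ blows up, needs separate care, since the chord slope is then finite but $C'(0)$ is infinite. My expectation is that the inequality still holds because the weights vanish appropriately. If the direct calculus gets messy, the fallback is to check the inequality numerically and handle it by series expansion around $\phi_1\to\phi_2$ together with monotonicity in $\phi_1$.
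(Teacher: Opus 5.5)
\textbf{There is a genuine gap: the endpoint inequality your proof reduces to is false, and the intermediate claim behind it is false too.} The construction (canonical decomposition, mean-preserving randomized quantization at $\theta_j=j\pi/(2\ell)$) and the reduction to the per-cell bound $\hat C-C\le\frac{\log e}{8}\Delta^2$ match the paper. Your convex-then-concave argument also correctly shows that $q'(\theta_{j+1})\ge 0$ would suffice.

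Carrying out the substitution you propose, the endpoint condition reads, in natural units,
\[
\frac{2\sin\theta_{j+1}}{\cos\theta_j-\cos\theta_{j+1}}\int_{\theta_j}^{\theta_{j+1}}\frac{\cos\theta_j-\cos\psi}{\sin\psi}\,d\psi\le\Delta .
\]
For fixed $\theta_j\in(0,\pi/2)$ the left side expands as $\Delta+\frac{\cot\theta_j}{6}\Delta^2+O(\Delta^3)$. So the condition fails on cells with $\theta_j$ bounded away from $0$ and $\pi/2$ once $\Delta$ is small. It already fails for $\ell=3$ on the cell $[\pi/6,\pi/3]$, where the left side equals $(3+\sqrt3)\bigl(\frac{\sqrt3}{2}\ln(1+\frac{2}{\sqrt3})-\frac12\ln 3\bigr)\approx 0.5466>\pi/6\approx 0.5236$.

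The case $\theta_j=0$, which you flagged as delicate, is actually fine: there the left side is $-4\cot(\Delta/2)\ln\cos(\Delta/2)<\Delta$. The trouble is in the interior cells.

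Since $q(\theta_{j+1})=0$, the condition $q'(\theta_{j+1})\ge 0$ is also necessary for $q\le 0$. Hence your intermediate claim $\hat C-C\le\frac{\log e}{2}(\theta-\theta_j)(\theta_{j+1}-\theta)$ on the whole cell is itself false: the interpolation error exceeds the parabola just to the left of $\theta_{j+1}$. Your numerical fallback would expose this rather than rescue it.

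\textbf{The missing idea is that you only need $q\le 0$ at one point.} That point is your $\theta^*$ (where $C'(\alpha(\theta^*))=b$), which is exactly the maximizer of $E(\theta)=\hat C(\alpha(\theta))-C(\alpha(\theta))$. Write $\Delta_{\mathrm L}^*=\theta^*-\theta_j$ and $\Delta_{\mathrm R}^*=\theta_{j+1}-\theta^*$. Then $\max E=E(\theta^*)\le\frac{\log e}{2}\Delta_{\mathrm L}^*\Delta_{\mathrm R}^*\le\frac{\log e}{8}\Delta^2$ by AM--GM. This is how the paper proceeds:
\begin{itemize}
\item Since $q''=E'\cot\theta$ and $q$ vanishes at both ends, the Peano kernel representation gives $q(\theta^*)=-\int E'(\phi)\cot\phi\,\mathcal{K}(\phi,\theta^*)\,d\phi$, with $\mathcal{K}$ the nonnegative tent kernel of linear interpolation.
\item The pointwise bound $-E'(\phi)\cot\phi\le\frac{\log e}{\tan\theta^*}(\phi-\theta^*)$ then yields $q(\theta^*)\le\frac{\log e}{6\tan\theta^*}\Delta_{\mathrm L}^*\Delta_{\mathrm R}^*(\Delta_{\mathrm R}^*-\Delta_{\mathrm L}^*)$.
\item The decisive sign fact is $\Delta_{\mathrm R}^*\le\Delta_{\mathrm L}^*$, i.e., the maximizer lies to the right of the midpoint. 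It follows by integrating $0\le E'(t)\le \log e\,(\theta^*-t)$ on $[\theta_j,\theta^*]$ and $-E'(t)\ge\log e\,(t-\theta^*)$ on $[\theta^*,\theta_{j+1}]$, using $E(\theta_j)=E(\theta_{j+1})=0$.
\end{itemize}
In short, the first-order deviation of $E$ from the parabola is nonpositive at $\theta^*$ but positive near $\theta_{j+1}$. That is why a whole-cell comparison with the parabola cannot succeed, while the comparison at the maximizer alone does.
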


The BSC-decomposition underlying Theorem~\ref{thm:finite_BSC_decomp} is constructed by randomized quantization with fixed quantization points uniform over $\theta \in [0,\pi/2]$.
By randomly shifting the
\iflongver
points and averaging over the shift, we show that there exists a set of points yielding an asymptotically better upper bound, as stated in Theorem~\ref{thm:sharp_finite_BSC_decomp}. 
\else
points, we obtain an asymptotically better upper bound.
\fi

\begin{thm}[Finite BSC-Decomposition II]
\label{thm:sharp_finite_BSC_decomp}
For any integer $\ell \geq 6$ and any BISOC $P_{Y|X}$, there exists a BSC-decomposition $(P_W,P_{Y|X,W})$ of cardinality at most $\ell+1$ such that
\begin{equation}
    I(X;Y\,|\, W) - I(X;Y) \leq \frac{\pi^2\log e}{48\left(\ell - \frac{7}{10}\sqrt{\ell} - \frac{2}{5}\right)^2}. \label{eq:sharp_finite_BSC_asymptotic}
\end{equation}
\end{thm}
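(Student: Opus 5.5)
We prove Theorem~\ref{thm:sharp_finite_BSC_decomp} with a probabilistic-method argument: choose the quantization grid at random, average the capacity excess \eqref{eq:capacity_excess} over the randomness, and conclude that some realization is at least as good as the average. The grid will be shifted uniformly in the angular domain.

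\textbf{Construction.} Fix a spacing $\delta>0$ and a shift $U\sim\mathrm{Unif}([0,\delta))$. Take the interior angles $\theta=U+k\delta$ that lie in $(0,\pi/2)$, and add the two endpoints $\theta=0$ and $\theta=\pi/2$ so that $\mathrm{supp}(P_A)\subseteq[\alpha_0,\alpha_\ell]$. The number of interior points is at most $\lceil \pi/(2\delta)\rceil$. To keep the total at most $\ell+1$, we take $\delta \approx \pi/(2(\ell-1))$, or slightly larger; this is one source of the lower-order loss in the denominator.

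\textbf{Averaging.} By \eqref{eq:capacity_excess}, for each fixed $U$ the excess equals $\mathbb{E}_A[\hat C_U(A)-C(A)]$. Exchanging expectations via Fubini, it suffices to bound
\[
g(\theta):=\mathbb{E}_U\big[\hat C_U(\alpha(\theta))-C(\alpha(\theta))\big]
\]
uniformly in $\theta\in[0,\pi/2]$. Some shift $u$ then achieves at most $\sup_\theta g$.

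\textbf{Bounding $g$ at interior points.} Write $f(\theta)=C(\alpha(\theta))$. Since $C$ is convex in $\alpha$, $\hat C_U\ge C$. For $\theta$ in a cell $[\theta_j,\theta_{j+1}]$ away from the endpoints, the interpolation error is
\[
\tfrac12 C''(\alpha(\xi))\,(\alpha(\theta)-\alpha_j)(\alpha_{j+1}-\alpha(\theta)).
\]
By the computation preceding the theorem, its leading term is $\tfrac{\log e}{2}(\theta-\theta_j)(\theta_{j+1}-\theta)$. Under the uniform shift, the position of $\theta$ inside its cell is uniform on $[0,\delta]$, so the expectation of this leading term is $\frac{\log e}{2}\cdot\frac{\delta^2}{6}=\frac{\delta^2\log e}{12}$. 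With $\delta=\pi/(2\ell')$ this gives $\frac{\pi^2\log e}{48\ell'^2}$. We then need the exact error to be bounded by the leading term with a controlled correction. The plan is to show that $(C''\circ\alpha)\,(\alpha')^2$ equals $\log e$ times a factor that is at most $1+O(\theta)$ or similar. Note that $\alpha(\theta)-\alpha_j$ is not linear in $\theta$, so the correction involves the ratio of secants of $\cos$; we would bound these ratios by explicit elementary inequalities.

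\textbf{Main obstacle: the endpoint cells.} Near $\theta=0$, where $C''\to\infty$, and near $\pi/2$, the cells adjacent to the fixed endpoints have random length up to $\delta$. There the Taylor approximation is poor and the bound \eqref{eq:inter_error_upper_bound} fails at $\alpha_j=0$. To handle this, I would enlarge the spacing slightly away from the ends, or equivalently reserve about $c\sqrt{\ell}$ of the budget. Concretely, I would use a slightly smaller effective $\ell'=\ell-\tfrac{7}{10}\sqrt\ell-\tfrac25$, so that the relative curvature corrections are absorbed.

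First I would try to bound the exact error directly in the $\theta$ variable. Substituting $\alpha=\sin^2(\theta/2)$ turns $H_b$ into an explicit function of $\theta$, and I would compare the interpolation-error function against $\frac{\log e}{2}(\theta-\theta_j)(\theta_{j+1}-\theta)\,(1+\epsilon)$, with $\epsilon$ depending on $\theta_j/\delta$. Cells far from $0$ (with $\theta_j\gtrsim\sqrt{\delta}$, hence the $\sqrt\ell$ term) have $\epsilon=O(\sqrt\delta)$. For the few cells near $0$, I would use the worst-case bound of Theorem~\ref{thm:finite_BSC_decomp}-type estimates, since those cells are only hit with small probability mass. Balancing these two regimes should produce exactly the stated $\sqrt\ell$ correction. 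The condition $\ell\ge6$ ensures the denominator is positive and the elementary inequalities apply.
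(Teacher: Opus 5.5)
\textbf{There is a genuine gap, in the cells near $\theta=0$.} Most of your framework matches the paper's proof on $[\theta_\mathrm{B},\pi/2]$: a shifted angular grid, Fubini, a bound on $g(\theta)$ uniform in $\theta$, and the leading term $\frac{\log e}{2}\Delta_\mathrm{L}\Delta_\mathrm{R}$ averaging to $\frac{\log e}{12}\delta^2$. Your split at $\theta_\mathrm{B}\asymp\sqrt{\delta}$ is also the right scale.

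You justify the worst-case estimate near $0$ because ``those cells are only hit with small probability mass.'' But the theorem quantifies over all BISOCs, so $A$ may put all of its mass in $[0,\theta_\mathrm{B}]$. This is exactly why you reduced to $\sup_\theta g(\theta)$. For such $\theta$, the cell containing $\theta$ is a near-zero cell for every shift. The worst-case bound $\frac{\log e}{8}\delta^2$ then applies with full weight, and it is $3/2$ times the target $\frac{\log e}{12}\delta^2$. A single uniformly shifted grid analyzed this way only recovers the $\pi^2\log e/(32\ell^2)$ constant of Theorem I. The relative-curvature estimate does not help there either: any factor of the form $1+O(\Delta/\theta_\mathrm{L})$ blows up as $\theta_\mathrm{L}\to 0$, and your plan supplies no averaged estimate valid down to $\theta_\mathrm{L}=0$.

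\textbf{The missing idea is a non-uniform grid.} Give $[0,\theta_\mathrm{B}]$ its own deterministic uniform grid, finer by a factor of about $\sqrt{2/3}$, so that $\frac{\log e}{8}\Delta_\mathrm{I}^2$ matches the shifted-region bound. Apply the random shift only on $[\theta_\mathrm{B},\pi/2]$. The paper uses $\ell_\mathrm{I}=\lfloor 2\sqrt{\ell}\rfloor$ cells on $[0,\theta_\mathrm{B}]$ and $\ell-\ell_\mathrm{I}$ commonly shifted cells of width at most $\Delta_\mathrm{II}$. It chooses $\theta_\mathrm{B}$ by equating $\frac{\log e}{8}\Delta_\mathrm{I}^2$ with $\frac{\log e}{12}\Delta_\mathrm{II}^2/(1-\Delta_\mathrm{II}/\theta_\mathrm{B})$. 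The $\sqrt{\ell}$ term comes from balancing two costs: roughly $(\sqrt{3/2}-1)\theta_\mathrm{B}/\Delta$ extra fine-grid points, against the relative loss $\Delta_\mathrm{II}/\theta_\mathrm{B}$ in the shifted region.

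\textbf{Second, the constants $\frac{7}{10}\sqrt{\ell}+\frac{2}{5}$ are asserted rather than derived, and they depend on how sharp the curvature correction is.} Suppose the shifted-region bound is $\frac{\log e}{12}\Delta^2(1+cx+O(x^2))$ with $x=\Delta/\theta_\mathrm{B}$. Optimizing $\theta_\mathrm{B}$ then loses about $\sqrt{2c(\sqrt{3/2}-1)\,\ell}$ points, so the stated denominator needs $c$ essentially at most $1$.

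The paper gets $c=1$ from its Peano-kernel lemma. It applies the kernel to $E''=E'/\tan\theta-\log e$, together with $|E'|\le(\log e)\Delta(1+\Delta/\theta_\mathrm{L})$, giving the factor $1+x+x^2$. This yields about $\ell-0.67\sqrt{\ell}-0.39$, which leaves just enough room for $\frac{7}{10}$.

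Your mean-value route writes the error as $\frac{\log e}{2}\Delta_\mathrm{L}\Delta_\mathrm{R}\,\sin\eta_1\sin\eta_2/\sin^2\xi$ with $\eta_1,\eta_2,\xi$ in the cell. Bounding that ratio by $\theta\,\theta_{j+1}/\theta_j^2$ gives $c=3/2$ after averaging over the shift; bounding it by $(\theta_{j+1}/\theta_j)^2$ gives $c=2$. These lose about $0.82\sqrt{\ell}$ and $0.95\sqrt{\ell}$ points respectively, so that route cannot reach the stated bound for large $\ell$.

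To close both gaps you need the two-region construction, a correction factor with linear coefficient $1$, and the explicit inequality chain (valid for $\ell\ge 6$) that turns the balance equation into the stated denominator.
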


The right-hand side of upper bound \eqref{eq:sharp_finite_BSC_asymptotic} has leading term $(\pi^2\log e)/(48\ell^2)$.
Proposition~\ref{thm:uniform_theta_lower_bound} shows that this leading term is optimal in the worst case.

\begin{prop}
\label{thm:uniform_theta_lower_bound}
There exists a BISOC $P_{Y|X}$ such that, for any integer
$\ell \geq 1$ and any BSC-decomposition $(P_W, P_{Y|X,W})$
of cardinality at most $\ell+1$,
\begin{equation}
    I(X;Y \,|\, W) - I(X;Y) \geq \frac{\pi^2\log e}
    {48\left(\ell + \frac{1}{3}\ln(\ell+1) + \frac{4}{3} \right)^2}. \label{eq:uniform_theta_lower_bound_log}
\end{equation}
\end{prop}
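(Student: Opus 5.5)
The plan is to take the BISOC whose canonical crossover probability $A$ is uniform in the angular domain. That is, $A=\alpha(\Theta)=(1-\cos\Theta)/2$ with $\Theta\sim\mathrm{Unif}([0,\pi/2])$; such a channel is realised by letting $|Y|=\Theta$ and using $\mathrm{BSC}(\alpha(|Y|))$ on $\{\pm|Y|\}$. We then show that any decomposition of cardinality at most $\ell+1$ pays an interpolation error comparable to that of the equally spaced angles. The argument has three steps.

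\emph{Step 1 (reduction to interpolation).} Fix an arbitrary BSC-decomposition $(P_W,P_{Y|X,W})$ and let $S$ be its support, with $|S|\le\ell+1$. Conditioned on $W=w$, the component is supported on $\{\pm\beta_w\}$, so $|Y|=|\beta_w|$. Therefore, conditioned on $|Y|=y$, the canonical component $\mathrm{BSC}(\alpha(y))$ is a mixture of the components with $|\beta_w|=y$. Relative to the orientation $\beta=y$, each such component has crossover probability $\hat a_w$ or $1-\hat a_w$. Hence $\alpha(y)$ is a mean of points in $T:=S\cup(1-S)$, and $C(\hat a_w)=C(1-\hat a_w)$.

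Since $C$ is convex, the conditional average capacity is at least the lower convex envelope of $C|_T$ at $\alpha(y)$. On $[0,1/2]$ this envelope is at least the piecewise-linear interpolant $\hat C$ of $C$ at the nodes $S\cup\{1/2\}$. Here we use that a chord from $s$ to $1-s$ is flat at height $C(s)$, which dominates the interpolant through $1/2$. Because $\mathrm{supp}(P_A)=[0,1/2]$, the node set must also contain $0$. Taking expectation over $|Y|$ and using Theorem~\ref{thm:cano_BSC_decomp}, the excess is at least $\mathbb{E}[\hat C(A)-C(A)]$ for a node set $0=\alpha_0<\dots<\alpha_m=1/2$ with $m\le\ell+1$ intervals.

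\emph{Step 2 (per-interval lower bound).} Write the nodes as $\alpha_j=\alpha(\theta_j)$ and set $h_j=\theta_{j+1}-\theta_j$. On each interval we use the exact remainder
\[
\hat C(\alpha)-C(\alpha)=\tfrac12 C''(\xi)(\alpha-\alpha_j)(\alpha_{j+1}-\alpha),
\]
where $C''(\alpha)=\log e/(\alpha(1-\alpha))=4\log e/\sin^2\theta$. We also use $\alpha-\alpha_j=\sin\frac{\theta+\theta_j}{2}\sin\frac{\theta-\theta_j}{2}$, and similarly for $\alpha_{j+1}-\alpha$. Bounding the sine ratios from below, the error at $\theta$ is at least $\frac{\log e}{2}(\theta-\theta_j)(\theta_{j+1}-\theta)(1-\epsilon_j)$. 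Here $\epsilon_j$ is a relative correction of order $h_j/\theta_{j+1}$, and it is not small only on the first interval near $\theta=0$. Integrating against the density $2/\pi$ gives an interval contribution of at least $\frac{\log e}{6\pi}h_j^3(1-\epsilon_j)$.

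\emph{Step 3 (optimisation over spacings).} If all $\epsilon_j$ vanished, minimising $\sum_j h_j^3$ subject to $\sum_j h_j=\pi/2$ would give $(\pi/2)^3/m^2$. This yields $\pi^2\log e/(48m^2)\ge\pi^2\log e/(48(\ell+1)^2)$. To account for the corrections, define effective widths $\tilde h_j=h_j(1-\epsilon_j)^{1/3}$. We then show $\sum_j\tilde h_j\ge\pi/2-c\sum_j h_j^2/\theta_{j+1}$ and apply Hölder, $\sum_j\tilde h_j^3\ge(\sum_j\tilde h_j)^3/m^2$. The deficit $\sum_j h_j/\theta_{j+1}\cdot h_j$ behaves like a harmonic sum, which is what produces the $\frac13\ln(\ell+1)$ shift. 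A case split handles node sets whose widths are far from uniform: if some $h_j$ is large, the cubic sum is already large and the bound follows directly.

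The main obstacle is Step 2/3 near $\theta=0$. There the factors $\sin\frac{\theta+\theta_j}{2}/\sin\theta$ deviate substantially from $1/2$, so the explicit constants $\frac13\ln(\ell+1)+\frac43$ require carefully tracking $\epsilon_j$ as a function of $\theta_j/h_j$ uniformly over all node sets. I would first try to prove the clean monotone inequality $\epsilon_j\le h_j/(\theta_j+h_j)$ and sum it via $\sum_j h_j/\theta_{j+1}\le 1+\ln m$.
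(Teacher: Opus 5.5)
Your Step 1 is fine. The envelope argument over $T=S\cup(1-S)$ is a valid variant of the paper's route ($\mathbb{E}[\hat A\,|\,|Y|]\le A$ plus monotonicity and convexity of $\hat C$). It correctly reduces the claim to lower-bounding $\frac{2}{\pi}\sum_j\int_{\theta_j}^{\theta_{j+1}}E(\theta)\,d\theta$ over node sets with $\theta_0=0$ and $m\le\ell+1$ intervals.

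The gap is in Steps 2--3. They carry all of the quantitative content, and the route you sketch cannot reach the stated constants, for three reasons.

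\emph{The first interval.} A pointwise bound $E(\theta)\ge(1-\epsilon_j)\frac{\log e}{2}(\theta-\theta_j)(\theta_{j+1}-\theta)$ with constant $\epsilon_j$ is necessarily vacuous on the first interval. There $E(\theta)=H_{\mathrm{b}}(\alpha(\theta))-\frac{\alpha(\theta)}{\alpha_1}H_{\mathrm{b}}(\alpha_1)=O(\theta^2\log(1/\theta))$ as $\theta\downarrow 0$, while $\theta(\theta_1-\theta)$ is of order $\theta_1\theta$. So $\epsilon_0=1$ is forced, and indeed your target $\epsilon_j\le h_j/(\theta_j+h_j)$ equals $1$ at $j=0$. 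The first interval then contributes nothing, and for node sets with $\theta_1$ close to $\pi/2$ your bound tends to $0$. Your case split on a large $h_j$ does not help, because the large interval can be exactly the first one.

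\emph{The size of the correction.} Even with the first interval repaired, a relative correction of size $h_j/\theta_{j+1}$ is too lossy. This is also what the Lagrange form gives with the only available pointwise bound $C''(\xi)\ge C''(\alpha_{j+1})$: about $h_j\cot\theta_{j+1}$ on average. At equally spaced nodes with $m=\ell+1$ intervals, the sum you can certify is $\sum_j h_j^3(1-h_j/\theta_{j+1})=\frac{(\pi/2)^3}{m^2}\big(1-\frac{H_m}{m}\big)$, where $H_m=\sum_{k=1}^m 1/k$. This is strictly below the required $\frac{(\pi/2)^3}{(m+\frac13+\frac13\ln m)^2}$ for every $m\ge 2$. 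Asymptotically it corresponds to a shift of $\frac12\ln m$ instead of $\frac13\ln m$.

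\emph{The harmonic-sum step.} The bound $\sum_j h_j/\theta_{j+1}\le 1+\ln m$ is false for general node sets; geometric spacing $\theta_k\propto 2^k$ gives about $m/2$. It holds only when the widths are non-increasing, and your plan contains no argument reducing to that case.

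The paper avoids all three problems. It never compares $E$ pointwise with the quadratic. It uses the exact representation $E'(\theta)=(\log e)\sin\theta\int_\theta^{\theta^*}d\phi/\sin\phi$ and splits each interval at the maximizer $\theta^*$ of $E$.
\begin{itemize}
\item On the right part, integration by parts with $-E'(t)\ge(\log e)(t-\theta^*)$ gives $\int_{\theta^*}^{\theta_{\mathrm{R}}}E\ge\frac{\log e}{3}(\Delta_{\mathrm{R}}^*)^3$.
\item On the left part, $\sin\theta/\sin\phi\ge\theta/\phi$ gives $\int_{\theta_{\mathrm{L}}}^{\theta^*}E\ge\frac{\log e}{3}(\Delta_{\mathrm{L}}^*)^3\big(1-\frac{7\Delta_{\mathrm{L}}^*}{12\theta^*}\big)$.
\item H\"older then combines these into $\int_{\theta_{\mathrm{L}}}^{\theta_{\mathrm{R}}}E\ge\frac{\log e}{12}\Delta^3\big(1+\frac{\Delta}{3\theta_{\mathrm{R}}}\big)^{-2}$. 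This stays positive for $\theta_{\mathrm{L}}=0$ (factor $9/16$) and has first-order loss only $\frac23\Delta/\theta_{\mathrm{R}}$.
\end{itemize}
An exchange argument then handles the ordering. Swapping adjacent widths with $\Delta_j<\Delta_{j+1}$ strictly decreases the sum, because $t\mapsto((t+1)/(t+4/3))^2$ has a positive decreasing derivative. Hence the minimizing node set has non-increasing widths, so $\Delta_j/\theta_j\le 1/j$. Finally, H\"older with weights $1+\frac{1}{3j}$ gives exactly $m+\frac13\sum_{j\le m}\frac1j\le\ell+\frac13\ln(\ell+1)+\frac43$.

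Your plan is missing these three ingredients: an integrated rather than pointwise per-interval bound with the sharper $\frac23$ coefficient, the ordering lemma, and the weighted H\"older step.
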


\iflongver
The proof of Proposition~\ref{thm:uniform_theta_lower_bound} uses a BISOC for which the random crossover probability $A$ of its canonical decomposition satisfies $\Theta := \arccos(1-2A) \sim \mathrm{Unif}([0,\pi/2])$.
\fi

\section{BISOC Simulation} \label{sec:BISOC_sim}

We now propose a simulation scheme for $n$ copies of a given BISOC $P_{Y|X}$.
It has two ingredients: a finite BSC-decomposition and a BSC simulation algorithm.

Let $(P_W, P_{Y|X,W})$ be a BSC-decomposition of cardinality $\ell+1$ for $P_{Y|X}$, and $\hat{A}:=\alpha(P_{Y|X,W}(\cdot|\cdot,W))$ be its random crossover probability.
Write $\hat{\mathcal{A}} := \mathrm{supp}(p_{\hat{A}}) = \{\alpha_0, \ldots, \alpha_\ell\}$, where $\alpha_0 < \cdots < \alpha_\ell$.
Suppose we have a BSC simulation algorithm $(\texttt{BSCSimEnc}, \texttt{BSCSimDec})$ that can simulate $k$ copies of $\mathrm{BSC}(\alpha)$ for any $k \in [1:n]$ and $\alpha \in \hat{\mathcal{A}}$, where
\begin{itemize}
    \item $\texttt{BSCSimEnc}(k, \alpha, \mathbf{X}, \mathfrak{G})$ takes an integer $k \in [1:n]$, a crossover probability $\alpha \in \hat{\mathcal{A}}$, a sequence $\mathbf{X} \in \{\pm 1\}^k$ and a random number generator (RNG) $\mathfrak{G}$; and returns a prefix-free description $M \in \{0,1\}^*$.
    \item $\texttt{BSCSimDec}(k, \alpha, M, \mathfrak{G})$ takes $k \in [1:n]$, $\alpha \in \hat{\mathcal{A}}$, $M \in \{0,1\}^*$ and an RNG $\mathfrak{G}$; returns a sequence $\hat{\mathbf{Y}} \in \{\pm 1\}^k$ and an index $i_\text{head} \in [1:|M|+1]$ where $|M|$ is the length of $M$ and $i_\text{head}$ is the index of the first unread bit of $M$.
\end{itemize}
The simulation works as follows.
The encoder and the decoder first generate $W_1, \ldots, W_n \overset{\mathrm{i.i.d.}}{\sim} P_W$ using the common randomness;
then compute $\hat{A}_i := \alpha(P_{Y|X,W}(\cdot | \cdot , W_i))$ and count 
\begin{equation} \label{eq:N_alpha}
    N_\alpha := |\{i \in [1:n] : \hat{A}_i = \alpha\}|
\end{equation}
for $\alpha \in \hat{\mathcal{A}}$.
Subsequently, for each $\alpha \in \hat{\mathcal{A}}$, the encoder and the decoder invoke $(\texttt{BSCSimEnc}, \texttt{BSCSimDec})$ to simulate $N_\alpha$ copies of $\mathrm{BSC}(\alpha)$.
As a result, the encoder takes an input sequence $\mathbf{X} = (X_1, \ldots, X_n) \in \{\pm 1\}^n$ and the decoder obtains a sequence $\hat{\mathbf{Y}} = (\hat{Y}_1, \ldots, \hat{Y}_n) \in \{\pm 1\}^n$, which is the output of $\mathbf{X}$ through the product channel $\mathrm{BSC}(\hat{A}_1) \times \cdots \times \mathrm{BSC}(\hat{A}_n)$.
Finally, the decoder computes $B_i := \beta(P_{Y|X,W}(\cdot | \cdot , W_i))$ and $Y_i = B_i \hat{Y}_i$, and outputs $\mathbf{Y} = (Y_1, \ldots, Y_n) \in \mathbb{R}^n$.

The algorithmic description is given in Algorithm~\ref{alg:BISOC_sim}.
Its correctness follows directly from Definitions~\ref{def:BSC} and~\ref{def:BSC_decomp}.

\iflongver
\begin{algorithm}[ht!]
\textbf{Given:} $(P_W, P_{Y|X,W})$

\smallskip{}

\textbf{Procedure} $\texttt{Encode}$

\quad\textbf{Input:} $\mathbf{X} \in \{\pm 1\}^{n}$, $\texttt{BSCSimEnc}$, RNG $\mathfrak{G}$

\quad\textbf{Output:} description $M \in \{0,1\}^*$

\begin{algorithmic}[1]

\State{Initialize $N_\alpha \leftarrow 0$ for $\alpha \in \hat{\mathcal{A}}$}

\For{$i \in [1:n]$}
    \State{Generate $W_i \sim P_W$ using $\mathfrak{G}$}
    \State{$\hat{A}_i \leftarrow \alpha(P_{Y|X,W}(\cdot | \cdot , W_i))$}
    \State{$N_{\hat{A}_i} \leftarrow N_{\hat{A}_i} + 1$}
    \State{$X_{\hat{A}_i,N_{\hat{A}_i}} \leftarrow X_i$}
\EndFor

\For{$\alpha \in \hat{\mathcal{A}}$}
    \State{$\mathbf{X}_\alpha \leftarrow (X_{\alpha,1}, \ldots, X_{\alpha,N_{\alpha}})$}
    \State{Generate $M_\alpha$ using $\texttt{BSCSimEnc}(N_\alpha, \alpha, \mathbf{X}_\alpha, \mathfrak{G})$}
\EndFor

\State{\Return{$M = (M_{\alpha_0}, \ldots, M_{\alpha_\ell})$}}

\end{algorithmic}

\smallskip{}

\textbf{Procedure} $\texttt{Decode}$

\quad\textbf{Input:} $M \in \{0,1\}^*$, $\texttt{BSCSimDec}$, $\mathfrak{G}$

\quad\textbf{Output:} $\mathbf{Y}\in\mathbb{R}^{n}$

\begin{algorithmic}[1]

\State{Initialize $N_\alpha \leftarrow 0$ for $\alpha \in \hat{\mathcal{A}}$}
\For{$i \in [1:n]$}
    \State{Generate $W_i \sim P_W$ using $\mathfrak{G}$}
    \State{$\hat{A}_i \leftarrow \alpha(P_{Y|X,W}(\cdot | \cdot , W_i))$} 
    \State{$N_{\hat{A}_i} \leftarrow N_{\hat{A}_i} + 1$}
    \State{$B_i \leftarrow \beta(P_{Y|X,W}(\cdot | \cdot , W_i))$} 
\EndFor

\For{$j \in [0:\ell]$}
    \State{$\alpha \leftarrow \alpha_j$}
    \State{\parbox[t]{0.9\linewidth}{%
    Generate $\hat{\mathbf{Y}}_\alpha = (\hat{Y}_{\alpha,1}, \ldots, \hat{Y}_{\alpha, N_\alpha}) \in \{\pm 1\}^{N_\alpha}$ and $i_\text{head}$ using $\texttt{BSCSimDec}(N_\alpha, \alpha, M, \mathfrak{G})$}}
    \smallskip{}
    \State{$M \leftarrow M[i_\text{head}:|M|]$} \textcolor{black!60}{\Comment{$M \leftarrow (M_{\alpha_{j+1}}, \ldots, M_{\alpha_\ell})$}}
\EndFor

\State{Reset $N_\alpha \leftarrow 0$ for $\alpha \in \hat{\mathcal{A}}$}

\For{$i \in [1:n]$}
    \State{$N_{\hat{A}_i} \leftarrow N_{\hat{A}_i} + 1$}
    \State{$Y_i \leftarrow B_i\hat{Y}_{\hat{A}_i,N_{\hat{A}_i}}$}
\EndFor

\State{\Return{$\mathbf{Y} = (Y_1, \ldots, Y_n)$}}

\end{algorithmic}

\caption{BISOC simulator\label{alg:BISOC_sim}}
\end{algorithm}
\else
\begin{algorithm}[ht!]
\textbf{Given:} $(P_W,P_{Y|X,W})$

\begin{algorithmic}[1]
\Statex{\textbf{Common setup}}
\State{Generate $W_1, \ldots, W_n \overset{\mathrm{i.i.d.}}{\sim} P_W$ using $\mathfrak{G}$}
\State{Compute $\hat{A}_i$ and $B_i$ for $i \in [1:n]$}
\State{Count $N_\alpha$ for $\alpha \in \hat{\mathcal{A}}$}

\smallskip{}

\Statex{\textbf{Procedure} $\texttt{Encode}(\mathbf{X})$}
\For{$\alpha\in\hat{\mathcal{A}}$}
    \State Form $\mathbf{X}_\alpha=(X_i:\hat{A}_i=\alpha)$
    in increasing index order
    \State $M_\alpha\leftarrow
    \texttt{BSCSimEnc}(N_\alpha,\alpha,
    \mathbf{X}_\alpha,\mathfrak{G})$
\EndFor
\State{\Return{$M = (M_{\alpha_0}, \ldots, M_{\alpha_\ell})$}}

\smallskip{}

\Statex{\textbf{Procedure} $\texttt{Decode}(M)$}
\For{$\alpha\in\hat{\mathcal{A}}$}
    \State{$(\hat{\mathbf{Y}}_\alpha,i_{\text{head}})
    \leftarrow
    \texttt{BSCSimDec}(N_\alpha,\alpha,M,\mathfrak{G})$}
    \State{Assign $\hat{\mathbf{Y}}_\alpha$ to
    $(\hat{Y}_i:\hat{A}_i=\alpha)$ in increasing index order}
    \State{$M \leftarrow M[i_\text{head}:|M|]$} 
\EndFor
\State \Return{$\mathbf{Y} = (B_1\hat{Y}_1,\ldots,B_n\hat{Y}_n)$}
\end{algorithmic}
\caption{BISOC simulator}

\label{alg:BISOC_sim}
\end{algorithm}
\fi

\subsection{Theoretical Analysis}

We first derive an expression for the communication rate of Algorithm~\ref{alg:BISOC_sim}, which depends on the BSC-decomposition and the BSC simulation algorithm it uses.

\begin{lem} \label{thm:comm_rate}
Assume that, for each $k \in [1:n]$ and $\alpha \in \hat{\mathcal{A}}$, the BSC simulation algorithm used in Algorithm~\ref{alg:BISOC_sim} simulates $k$ copies of $\mathrm{BSC}(\alpha)$ with a communication rate of $C(\alpha) + D_k(\alpha)$.
Then the communication rate of Algorithm~\ref{alg:BISOC_sim} is
\begin{equation} \label{eq:BISOC_sim_comm_rate}
    I(X;Y \,|\, W) + \frac{1}{n}\sum_{\alpha \in \hat{\mathcal{A}}}
    \mathbb{E}[N_\alpha D_{N_\alpha}(\alpha)].
\end{equation}
\end{lem}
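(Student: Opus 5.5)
The plan is to express the total description length as a sum over the subblocks and take expectations in two stages: first conditional on the common randomness $W_1,\ldots,W_n$, then over it. The communication rate of Algorithm~\ref{alg:BISOC_sim} is $\frac{1}{n}\mathbb{E}[|M|]$, with $M=(M_{\alpha_0},\ldots,M_{\alpha_\ell})$. The descriptions are prefix-free and concatenated without separators, since the decoder recovers the boundaries through $i_\text{head}$. Hence $|M|=\sum_{\alpha\in\hat{\mathcal{A}}}|M_\alpha|$ exactly, and it suffices to compute $\mathbb{E}[|M_\alpha|]$ for each $\alpha$.

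The first step conditions on $(W_1,\ldots,W_n)$, which fixes the counts $(N_\alpha)_{\alpha\in\hat{\mathcal{A}}}$. Given these, $M_\alpha$ is produced by $\texttt{BSCSimEnc}(N_\alpha,\alpha,\mathbf{X}_\alpha,\mathfrak{G})$ with fresh randomness from $\mathfrak{G}$. The hypothesis says this simulation of $k=N_\alpha$ copies has rate $C(\alpha)+D_k(\alpha)$, so the conditional expected length is $N_\alpha\,(C(\alpha)+D_{N_\alpha}(\alpha))$ when $N_\alpha\ge1$. When $N_\alpha=0$, $\alpha$ contributes nothing, consistent with the convention $N_\alpha D_{N_\alpha}(\alpha)=0$. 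One point to check here is that the remaining randomness of $\mathfrak{G}$ after generating the $W_i$ is still a valid RNG for the subroutine, independent of the $W_i$. The other is that the rate guarantee holds for each input $\mathbf{X}_\alpha$, i.e.\ in the worst case or for each fixed input, so conditioning on the partition does not invalidate it.

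The second step takes the expectation over $W^n$. This gives
\begin{equation*}
\frac{1}{n}\mathbb{E}[|M|]=\frac{1}{n}\sum_{\alpha}\mathbb{E}[N_\alpha]C(\alpha)+\frac{1}{n}\sum_\alpha\mathbb{E}[N_\alpha D_{N_\alpha}(\alpha)].
\end{equation*}
The $W_i$ are i.i.d.\ $\sim P_W$, so $\mathbb{E}[N_\alpha]=n\,P(\hat{A}=\alpha)$. The first sum therefore equals $\sum_\alpha p_{\hat{A}}(\alpha)C(\alpha)=\mathbb{E}[C(\hat{A})]$, which is $I(X;Y\,|\,W)$ by \eqref{eq:weighted_average_capacity}. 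This yields \eqref{eq:BISOC_sim_comm_rate}.

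The argument is essentially bookkeeping. The main care point is the conditioning step: making precise that, given the random partition, each subblock is an honest invocation of the BSC simulator with a legitimate input sequence and independent randomness, so that its guaranteed rate applies. The rest is the linearity computation above.
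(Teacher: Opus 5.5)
Your proposal is correct and follows essentially the same route as the paper: write the rate as $\frac{1}{n}\mathbb{E}\big[\sum_{\alpha} N_\alpha (C(\alpha)+D_{N_\alpha}(\alpha))\big]$, use $\mathbb{E}[N_\alpha]=n\,p_{\hat{A}}(\alpha)$ (the paper notes $N_\alpha\sim\mathrm{Binom}(n,p_{\hat{A}}(\alpha))$), and identify $\mathbb{E}[C(\hat{A})]$ with $I(X;Y\,|\,W)$ via \eqref{eq:weighted_average_capacity}. Your remarks on conditioning on the partition, fresh randomness for each subroutine call, and prefix-free concatenation spell out details that the paper leaves implicit.
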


It has been shown in~\cite{zhaoRejectionSampledLinearCodes2026,hillRejectionSamplingOptimal2026} that BSC copies can be simulated with a communication cost at most the functional information~\cite{sfrl_trans, hillRejectionSamplingOptimal2026} plus $\log e + 1$ bits.
Substituting this bound into Lemma~\ref{thm:comm_rate} and bounding $I(X;Y \,|\, W)$ using Theorems~\ref{thm:finite_BSC_decomp} and~\ref{thm:sharp_finite_BSC_decomp} yields the following upper bound.

\begin{thm} \label{thm:rate_upper_bound}
For any BISOC $P_{Y|X}$, there exists an instance of Algorithm~\ref{alg:BISOC_sim} that simulates $n$ copies of $P_{Y|X}$ with a communication rate at most
\begin{equation} \label{eq:rate_upper_bound}
    I(X;Y) + \min_{\ell \geq 1}\!\Big(\gamma(\ell) + \frac{\ell\!-\!1}{2n}\log\!\Big(\frac{n}{\ell\!-\!1} + \frac{1}{3}\Big) + \frac{\ell\!-\!1}{n}\delta_0\Big),
\end{equation}
where $\delta_0 := (1/2)\log(\pi e/2) + \log e + 1$ and $\gamma(\ell)$ equals the right-hand side of \eqref{eq:gamma_contruct} for $\ell \in [1:18]$ and of \eqref{eq:sharp_finite_BSC_asymptotic} for $\ell \geq 19$.
\end{thm}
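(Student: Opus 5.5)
The plan is to apply Lemma~\ref{thm:comm_rate} with a concrete BSC simulator and bound the two resulting terms separately. Fix $\ell\ge 1$. Take the BSC-decomposition given by Theorem~\ref{thm:finite_BSC_decomp} when $\ell\le 18$ and by Theorem~\ref{thm:sharp_finite_BSC_decomp} when $\ell\ge 19$ (the latter needs $\ell\ge 6$, so this is allowed). Its cardinality is at most $\ell+1$, and its excess satisfies $I(X;Y\,|\,W)\le I(X;Y)+\gamma(\ell)$. Since the bound holds for each $\ell$, minimizing over $\ell$ at the end gives \eqref{eq:rate_upper_bound}.

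\textbf{BSC simulator.} Use the RSSE / rejection-sampling scheme of~\cite{zhaoRejectionSampledLinearCodes2026,hillRejectionSamplingOptimal2026}. For $k$ copies of $\mathrm{BSC}(\alpha)$ its expected cost is at most the functional information plus $\log e+1$ bits. I then need an explicit bound on the functional information of $k$ copies of $\mathrm{BSC}(\alpha)$, of the form
\[
kC(\alpha)+\tfrac12\log\big(k+\tfrac13\big)+\tfrac12\log(\pi e/2).
\]
The route is as follows. The functional information is at most $\mathbb{E}[\log \sup dP_{Y|X}/dP_Y]$-type quantities. By symmetry and exchangeability this reduces to the binomial number of crossovers $K\sim\mathrm{Bin}(k,\alpha)$, and the bound becomes $kC(\alpha)$ plus the entropy of the type index, roughly $\log$ of the inverse of the largest binomial mass. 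A Stirling/Robbins bound for the central binomial term gives $\tfrac12\log(\pi e k/2)$-like constants. The degenerate cases $\alpha\in\{0,1/2\}$ need no bits beyond $kC(\alpha)$ and are handled directly.

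Consequently $k D_k(\alpha)\le \tfrac12\log(k+\tfrac13)+\delta_0$ for $\alpha\in(0,1/2)$, and $kD_k(\alpha)=0$ when $k=0$ or $\alpha\in\{0,1/2\}$. I will use this as the main lemma. It is also the main obstacle, since matching the exact constants $\tfrac13$ and $\tfrac12\log(\pi e/2)$ requires a careful sharp binomial estimate. Here I would first try Robbins' bounds on $k!$ applied to the mode of the binomial.

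\textbf{Summing over levels.} By Lemma~\ref{thm:comm_rate}, the redundancy term is $\frac1n\sum_{\alpha}\mathbb{E}[N_\alpha D_{N_\alpha}(\alpha)]$. With the quantization points $\theta_j=j\pi/(2\ell)$, or shifted versions of them, at most $\ell-1$ of the levels lie strictly inside $(0,1/2)$. This holds because the endpoints $\alpha=0$ and $\alpha=1/2$ contribute zero; in the shifted construction I would verify the count of interior levels likewise. This step may need adjusting depending on the exact form of the construction in Theorem~\ref{thm:sharp_finite_BSC_decomp}.

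Hence the term is at most
\[
\frac1n\sum_{\text{interior }\alpha}\Big(\tfrac12\mathbb{E}\log\big(N_\alpha+\tfrac13\big)+\delta_0\Big).
\]
By Jensen's inequality (concavity of $\log$) applied first to each expectation and then to the average over the $\ell-1$ levels, using $\sum_\alpha \mathbb{E}N_\alpha\le n$, this is at most
\[
\frac{\ell-1}{2n}\log\Big(\frac{n}{\ell-1}+\frac13\Big)+\frac{\ell-1}{n}\delta_0 .
\]
For $\ell=1$ the term vanishes. Adding $I(X;Y)+\gamma(\ell)$ and minimizing over $\ell$ completes the proof.
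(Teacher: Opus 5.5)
The skeleton of your argument matches the paper's proof. You use Lemma~\ref{thm:comm_rate} and the decompositions from Theorems~\ref{thm:finite_BSC_decomp} and~\ref{thm:sharp_finite_BSC_decomp} with $\theta_0=0$ and $\theta_\ell=\pi/2$. Both constructions allow this, so at most $\ell-1$ levels are interior. You also use $D_k(0)=D_k(1/2)=0$ and apply Jensen's inequality twice. You apply $\alpha(1-\alpha)\le 1/4$ level by level before Jensen, whereas the paper bounds $\mathbb{E}[\hat{A}(1-\hat{A})]\le 1/4$ at the end; both give the same expression. If fewer than $\ell-1$ levels are interior, you can pad with empty levels, each contributing $\frac12\log\frac13+\delta_0>0$.

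The gap is in your main lemma $kD_k(\alpha)\le\frac12\log(k+\frac13)+\delta_0$. You correctly identify that the redundancy is governed by the entropy of the crossover count. The RSSE guarantee the paper uses is $kD_k(\alpha)\le H(W_{k,\alpha})+\log e+1$ with $W_{k,\alpha}\sim\mathrm{Binom}(k,\alpha)$, so you need an \emph{upper} bound on the Shannon entropy $H(W_{k,\alpha})$.

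Your plan bounds $\log$ of the inverse of the largest binomial mass via Stirling/Robbins at the mode, and this goes the wrong way. The quantity $-\log\max_w p_{W_{k,\alpha}}(w)$ is the min-entropy, which is always at most $H(W_{k,\alpha})$, so no estimate of it can bound $H(W_{k,\alpha})$ from above. The numbers confirm this: the min-entropy is about $\frac12\log(2\pi k\alpha(1-\alpha))$, while the entropy is about $\frac12\log(2\pi e k\alpha(1-\alpha))$. So the factor $e$ inside $\delta_0$ cannot come from a mode estimate. Reducing to the central term $\alpha=1/2$ would also need a separate monotonicity-in-$\alpha$ argument for binomial entropy.

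The missing step is a dithering argument. Take $U\sim\mathrm{Unif}(-1/2,1/2)$ independent of $W_{k,\alpha}$. Then $H(W_{k,\alpha})=h(W_{k,\alpha}+U)$, and the Gaussian maximum-entropy bound gives $h(W_{k,\alpha}+U)\le\frac12\log\big(2\pi e(k\alpha(1-\alpha)+\frac1{12})\big)\le\frac12\log\big(\frac{\pi e}{2}(k+\frac13)\big)$, uniformly over $\alpha\in(0,1/2)$. This is exactly where the $\frac13$ comes from (four times $\mathrm{Var}(U)=\frac1{12}$), and likewise the $\frac12\log(\pi e/2)$ in $\delta_0$. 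With this lemma in place, the rest of your argument goes through unchanged.
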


\begin{cor} \label{cor:rate_scaling}
For sufficiently large $n$, the redundancy term over $I(X;Y)$ in \eqref{eq:rate_upper_bound} is at most
\begin{equation*}
    K_0\bigg(
    \bigg(\frac{\log n}{n}\bigg)^{2/3} + \frac{\frac{1}{3}\log\log n + 2\delta_0}{n^{2/3}(\log n)^{1/3}}
    \bigg),
\end{equation*}
where $K_0 := (\pi^2\log e)^{1/3}/4 \approx 0.6059$.
\end{cor}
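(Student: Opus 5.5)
We plan to derive Corollary~\ref{cor:rate_scaling} from Theorem~\ref{thm:rate_upper_bound} by choosing one specific $\ell$ rather than solving the minimization exactly. For large $\ell$ we have $\gamma(\ell)=\frac{\pi^2\log e}{48(\ell-\frac{7}{10}\sqrt{\ell}-\frac{2}{5})^2}=\frac{\pi^2\log e}{48\ell^2}(1+O(\ell^{-1/2}))$. The remaining terms are at most $\frac{\ell}{2n}\log(\frac{n}{\ell-1}+\frac13)+\frac{\ell}{n}\delta_0$. Balancing $c/\ell^2$ against $\frac{\ell}{2n}\log(n/\ell)$, with $c=\pi^2\log e/48$, suggests $\ell^3\approx \frac{2cn}{\frac13\log n}$. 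Here $\log(n/\ell)\approx\frac23\log n$, since $\ell\asymp (n/\log n)^{1/3}$.

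Concretely, set $\ell=\lceil (\kappa n/\log n)^{1/3}\rceil$ with $\kappa=\pi^2\log e/8$. This is the constant that makes the leading terms combine into $K_0(\log n/n)^{2/3}$. For large $n$ we have $\ell\ge 19$, so $\gamma(\ell)$ is given by \eqref{eq:sharp_finite_BSC_asymptotic}. We then bound each term separately.

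\emph{First term.} We show $\gamma(\ell)\le \frac{c}{\ell^2}(1+O(\ell^{-1/2}))\le c\,\kappa^{-2/3}(\log n/n)^{2/3}(1+o(1))$.

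\emph{Second term.} Using $\ell-1\le \ell\le(\kappa n/\log n)^{1/3}+1$ and $\log(\frac{n}{\ell-1}+\frac13)\le \frac23\log n+\frac13\log\log n-\frac13\log\kappa+o(1)$, the term is at most
\[
\tfrac12\kappa^{1/3}n^{-2/3}(\log n)^{-1/3}\big(\tfrac23\log n+\tfrac13\log\log n+O(1)\big).
\]

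\emph{Third term.} It contributes $\kappa^{1/3}\delta_0\,n^{-2/3}(\log n)^{-1/3}$.

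Summing the leading parts gives $(c\kappa^{-2/3}+\frac13\kappa^{1/3})(\log n/n)^{2/3}$. With $\kappa=6c$ this equals $(6c)^{1/3}(\frac16+\frac13)=\frac12(6c)^{1/3}=\frac12(\pi^2\log e/8)^{1/3}=(\pi^2\log e)^{1/3}/4=K_0$, which confirms the choice of $\kappa$. The secondary terms have the form $\frac{\kappa^{1/3}}{2}(\frac13\log\log n+2\delta_0)/(n^{2/3}(\log n)^{1/3})$, and $\kappa^{1/3}/2=K_0$ exactly, matching the stated expression.

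The main obstacle is the lower-order bookkeeping. We must absorb the $O(\ell^{-1/2})$ correction in $\gamma$, the ceiling in $\ell$, the $-\frac13\log\kappa$ term, and the $+\frac13$ inside the log, all into the slack "for sufficiently large $n$". The correction from $\gamma$ is of relative order $(\log n/n)^{1/6}$ times the leading term, i.e.\ of order $n^{-5/6}$ up to logs. This is smaller than $n^{-2/3}(\log n)^{-1/3}$, so it is dominated.

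The $-\frac13\log\kappa$ term helps, since $\kappa>1$ makes it negative and gives slack. Similarly, the rounding of $\ell$ costs only $O(n^{-1}\log n)$. We would verify each of these and conclude the bound with constant exactly $K_0$ for all large $n$.
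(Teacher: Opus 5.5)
Your proposal is correct and takes essentially the same route as the paper. The paper also evaluates Theorem~\ref{thm:rate_upper_bound} at $\ell \approx c_0(n/\log n)^{1/3}$ with $c_0=(6c_1)^{1/3}$ and $c_1=(\pi^2\log e)/48$, which is your $\kappa^{1/3}$. It then expands $\gamma(\ell)$ and the logarithmic term, and uses the strictly negative $-c_0\log c_0/(2n^{2/3}(\log n)^{1/3})$ term (from $c_0>1$, your $\kappa>1$) to absorb the $O((\log n/n)^{5/6})$ and rounding errors. The only cosmetic difference is that the paper takes $\ell=\lceil\tilde\ell\rceil+1$ rather than $\lceil\tilde\ell\rceil$.
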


\section{Numerical Results} \label{sec:BSC_decomp_numerical_results}

This section evaluates Algorithm~\ref{alg:BISOC_sim} on the simulation of $n = 256$ copies of a BI-AWGN channel.
We use the BSC-decomposition obtained from the randomized quantization construction in Section~\ref{sec:finite_BSC_decomp}.
The quantization points $\alpha_0,\ldots,\alpha_\ell$ are chosen via the angular parameterization in \eqref{eq:angular_parameterization}, with equally spaced angles $\theta_j = j\pi/(2\ell)$ for $j \in [0:\ell]$.
The BSC simulation algorithm is the $S^*$-dependent RSSE \cite{zhaoRejectionSampledLinearCodes2026} based on punctured or shortened BCH codes \cite{boseClassErrorCorrecting1960,hocquenghemCodesCorrecteursDerreurs1959}.
Accordingly, the computational complexity is measured by the total expected number of rejection sampling iterations over the $n$ channel copies.
We consider two choices of the BSC-decomposition cardinality $\ell + 1$:
i) the optimal $\ell^*$ that minimizes the communication rate, and
ii) the smallest $\ell$ such that the expected number of iterations does not exceed $3000$.

\begin{figure}[ht!]
    \includegraphics[width=0.44\textwidth]{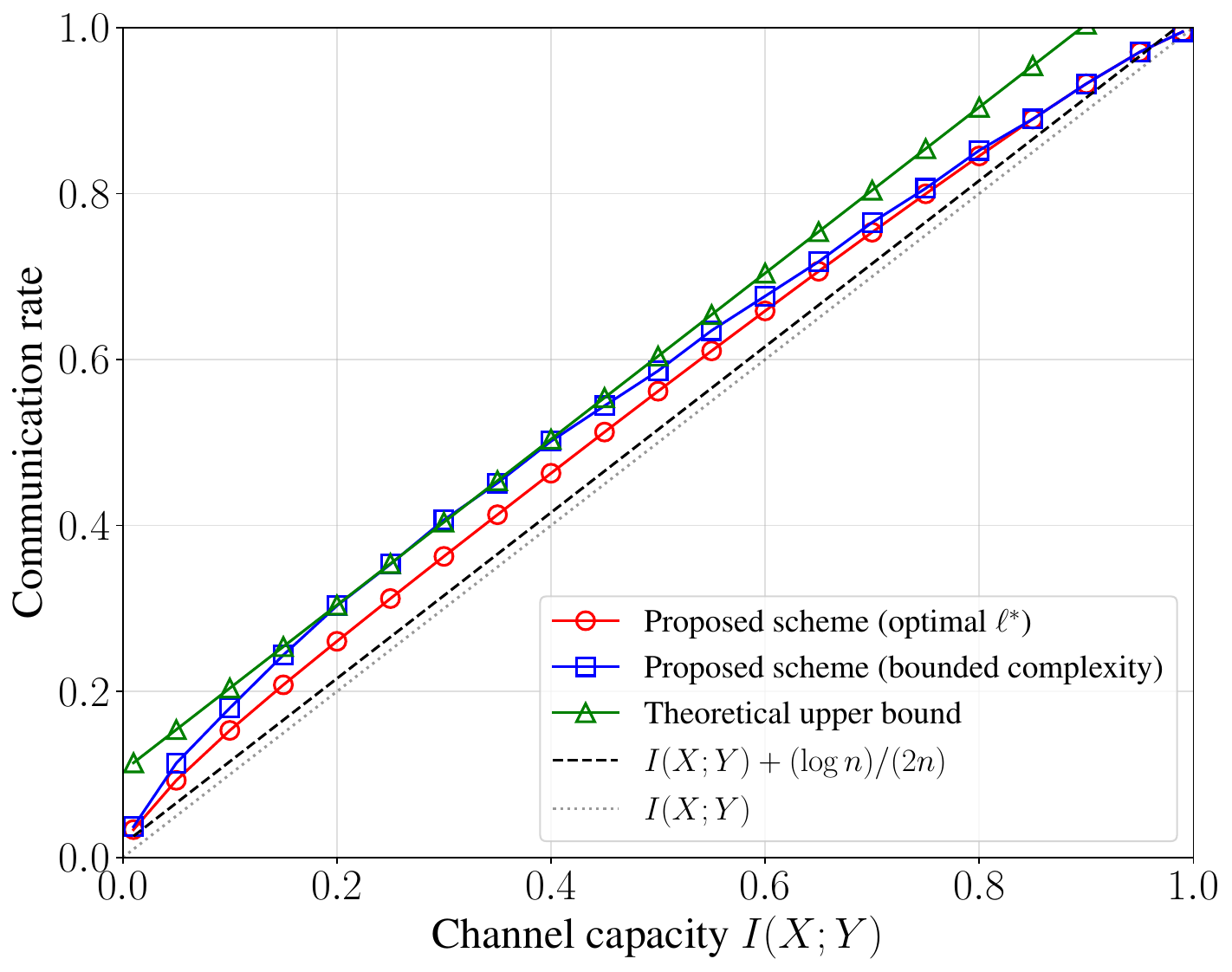}
    \vspace{-2ex}
    \caption{Simulation of $256$ copies of a BI-AWGN channel\label{fig:BI-AWGN_256_rate}}
    \vspace{-1ex}
\end{figure}

\begin{figure}[ht!]
    \includegraphics[width=0.44\textwidth]{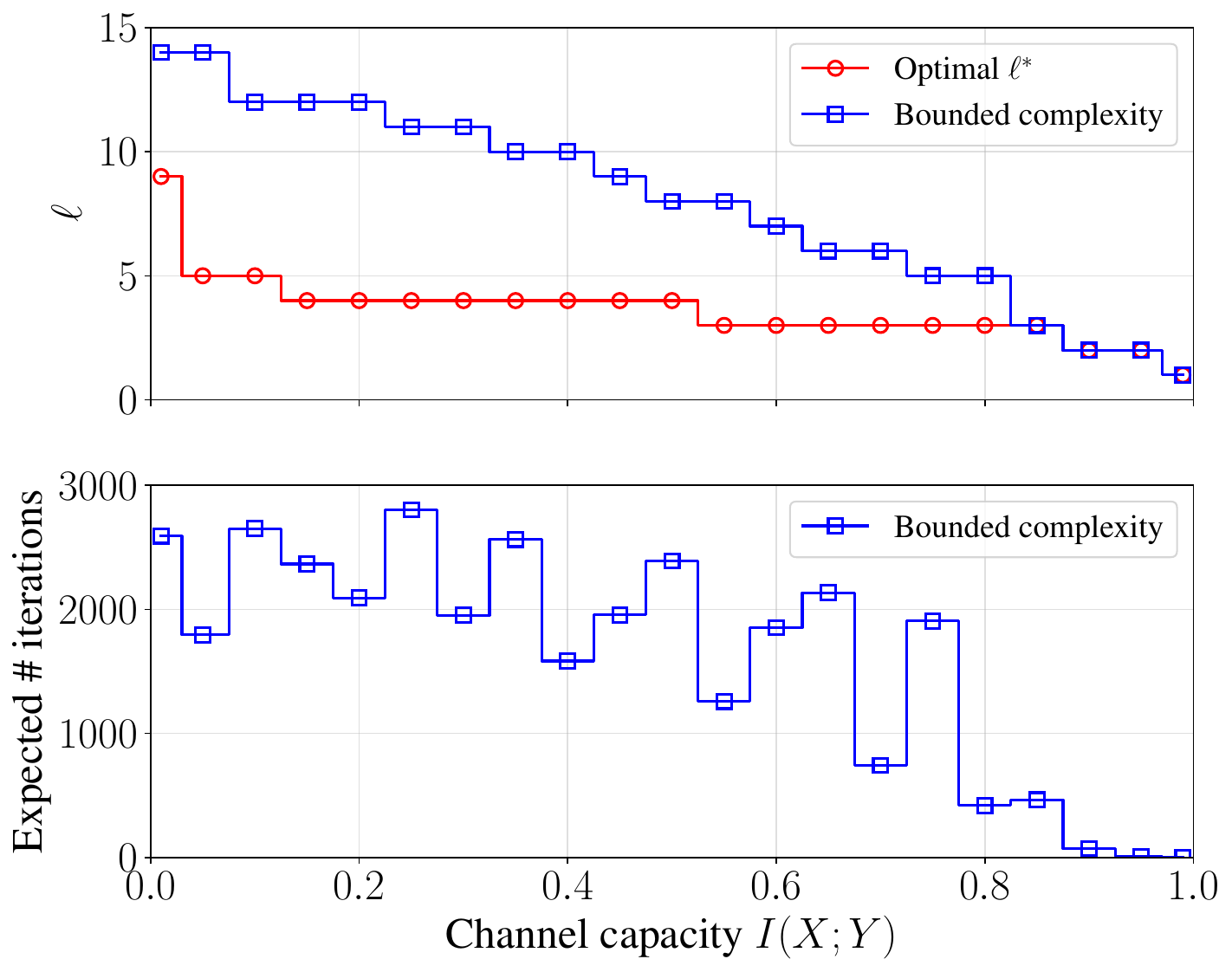}
    \vspace{-2ex}
    \caption{Selected $\ell$ and complexity of the proposed scheme\label{fig:BI-AWGN_256_ell}}
    \vspace{-1ex}
\end{figure}

Fig.~\ref{fig:BI-AWGN_256_rate} plots the communication rate of Algorithm~\ref{alg:BISOC_sim} and the upper bound of Theorem~\ref{thm:rate_upper_bound} against the channel capacity $I(X;Y)$.
In addition, the curve $I(X;Y) + (\log n)/(2n)$ is plotted, which closely approximates the lower bound at this blocklength~\cite{flamichRedundancyNonSingularChannel2025a}.
The values of selected $\ell$ under the two criteria are given in Fig.~\ref{fig:BI-AWGN_256_ell}.
The proposed scheme with optimal $\ell^*$ stays within $0.05$ of $I(X;Y) + (\log n)/(2n)$ across the entire capacity range, and approaches it as the capacity tends to $0$ or $1$.
The theoretical upper bound exceeds $I(X;Y) + (\log n)/(2n)$ by around $0.09$, since Theorem~\ref{thm:rate_upper_bound} provides a worst-case upper bound, which is conservative in this example.
The proposed scheme with bounded complexity remains below the upper bound and converges to the optimal-$\ell^*$ curve as the capacity tends to $1$, showing that the complexity constraint incurs only a modest rate penalty.

\begin{figure}[ht!]
    \includegraphics[width=0.44\textwidth]{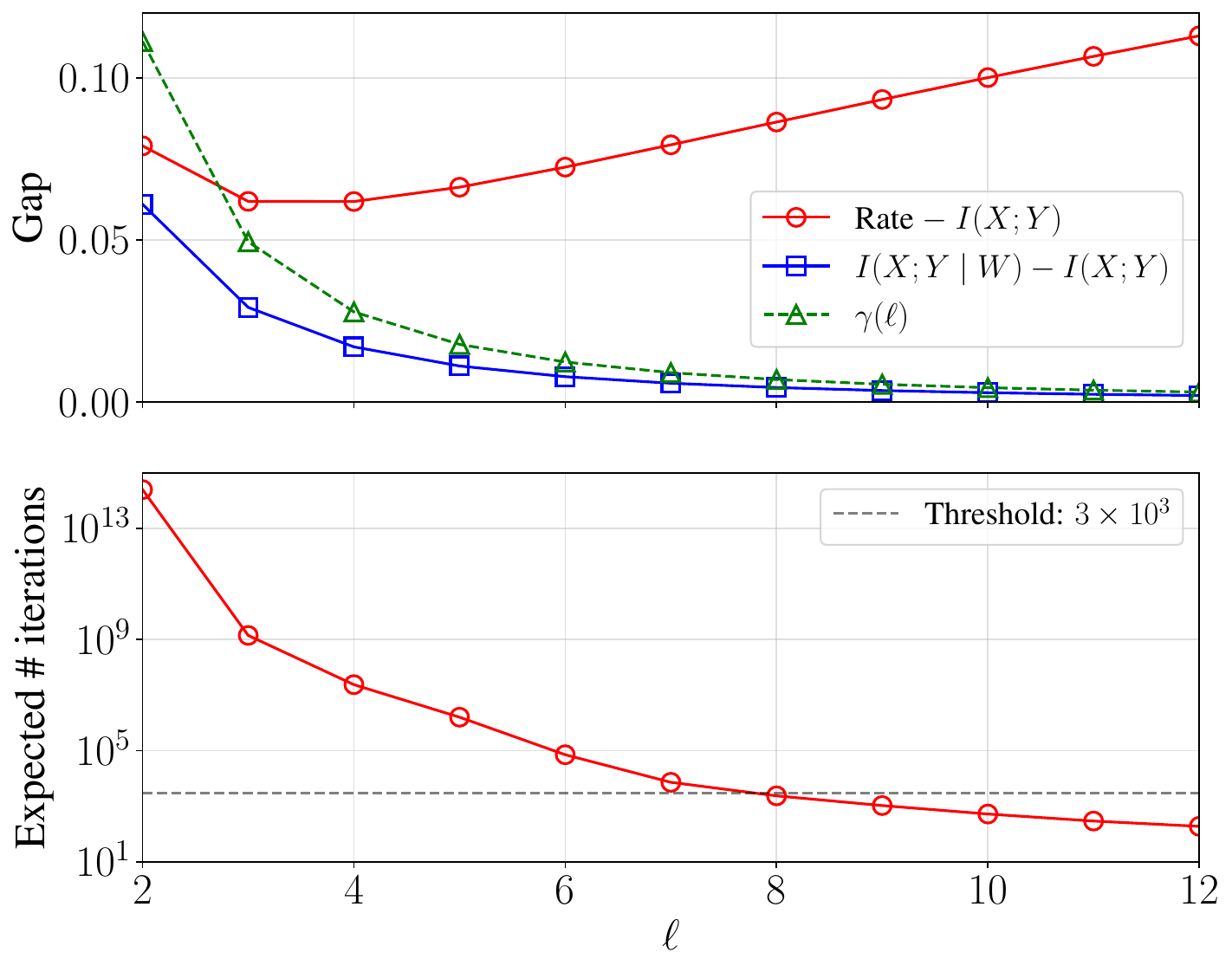}
    \vspace{-2ex}
    \caption{Rate-complexity trade-off vs. $\ell$ at $I(X;Y) = 0.5$\label{fig:BI-AWGN_256_ell_trade-off}}
    \vspace{-1ex}
\end{figure}

Fig.~\ref{fig:BI-AWGN_256_ell_trade-off} shows the rate-complexity trade-off against $\ell$ at $I(X;Y) = 0.5$.
The upper panel plots the gap between the communication rate and $I(X;Y)$ as a function of $\ell$, together with the capacity excess $I(X;Y \,|\, W) - I(X;Y)$ and its upper bound $\gamma(\ell) = (\pi^2\log e)/(32\ell^2)$.
The lower panel plots the expected number of iterations.
The rate gap first decreases with $\ell$, reaches its minimum at $\ell^* = 3$, and then increases.
At small $\ell$, the gap is dominated by the capacity excess, which decays with $\ell$; at large $\ell$, the gap is dominated by the finite blocklength redundancy term in \eqref{eq:rate_upper_bound}, which grows with $\ell$.
The expected number of iterations decreases monotonically with
$\ell$.
The first $\ell$ that meets the complexity constraint is $\ell = 8$.
Compared with $\ell^*$, it incurs a negligible rate penalty and reduces the complexity sharply.

\iflongver
\section{Application to Channel Degrading}
\label{sec:channel_degrading}

This paper focuses on finite BSC-decomposition of BISOCs, which can be applied to constructing an upgraded BISOC.
Based on the duality between channel upgrading and degrading, the quantization framework developed in Section~\ref{sec:finite_BSC_decomp} can also be used to construct degraded BISOCs.
Let $A$ be the random crossover probability in the canonical BSC-decomposition.
The upgrading construction randomly quantizes each realization $A=\alpha\in[\alpha_{j-1},\alpha_j]$ to one of the two endpoints, with expected quantized value $\alpha$, whereas the degrading construction merges all realizations within the interval to their conditional mean $\mathbb{E}[A\,|\, A\in[\alpha_{j-1},\alpha_j]]$.
The resulting capacity loss is bounded by the same maximal piecewise-linear interpolation error of $C$.
So the angular quantization can be reused to yield the following result.

\begin{thm}
\label{thm:finite_output_degraded_channel}
For any integer $\ell \geq 1$ and any BISOC $P_{Y|X}$, there exists a BISOC $P_{Z|X}$ that is degraded with respect to $P_{Y|X}$, has a canonical BSC-decomposition of cardinality at most $\ell$, and satisfies
\begin{equation}
    I(X;Y)-I(X;Z)
    \leq \frac{\pi^2\log e}{32\ell^2}.
    \label{eq:finite_output_degraded_bound}
\end{equation}
\end{thm}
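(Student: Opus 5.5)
We take the canonical decomposition of Theorem~\ref{thm:cano_BSC_decomp}, with random crossover probability $A$ and $I(X;Y)=\mathbb{E}[C(A)]$, and merge its values inside angular cells. Use the $\ell+1$ equally spaced angles $\theta_j=j\pi/(2\ell)$, $j\in[0:\ell]$, and the points $\alpha_j=\alpha(\theta_j)$. These give $\ell$ intervals $\mathcal{I}_j=[\alpha_{j-1},\alpha_j]$, $j\in[1:\ell]$. Make them disjoint by assigning each shared endpoint to one of its two cells. Let $J$ be the index with $A\in\mathcal{I}_J$, and set $\bar{A}:=\mathbb{E}[A\,|\,J]$.

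Define $P_{Z|X}$ as follows. Draw $J$ independently of $X$ according to $P_J$, pass $X$ through the conventional $\mathrm{BSC}(\bar{A})$ to get $\hat{Y}$, and output $Z=(1/2-\bar{A})\hat{Y}$. When several cells have the same conditional mean, their outputs coincide, which only merges components. The channel $P_{Z|X}$ is a BISOC, and its canonical decomposition is supported on $\{\mathbb{E}[A\,|\,J=j]\}_{j}$, so it has cardinality at most $\ell$. The capacity satisfies $I(X;Z)=I(X;Z\,|\,|Z|)=\mathbb{E}[C(\bar{A})]$. Here we use Theorem~\ref{thm:cano_BSC_decomp} for $P_{Z|X}$, together with the observation that, given $|Z|$, the value $C(\bar A)$ is determined.

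\emph{Degradedness.} We show that $\mathrm{BSC}(\bar{A}_j)$ is degraded with respect to the mixture of $\mathrm{BSC}(\alpha)$ over $A\,|\,J=j$. Given $Y$, compute $|Y|$, hence $A=\alpha$ and the sign $\hat{Y}_Y=\mathrm{sgn}(Y)$, with a fair coin if $Y=0$. Also compute $J$, and output $Z=(1/2-\bar{A}_J)\hat{Y}'$, where $\hat{Y}'$ is $\hat{Y}_Y$ passed through a further randomization. A cleaner way to organise this: conditional on $J=j$, the channel $X\to(A,\hat{Y}_Y)$ has crossover probability $A$, and discarding $A$ gives a BSC with crossover probability $\mathbb{E}[A\,|\,J=j]=\bar{A}_j$. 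This works because the crossover event is independent of $X$ by symmetry. So we take $\hat{Y}'=\hat{Y}_Y$, and no extra noise is needed. Then $X\to Y\to Z$ is a Markov chain, and $P_{Z|X}$ is exactly the channel defined above, since $J$ is independent of $X$ (for a BISOC, $|Y|$ is independent of $X$).

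\emph{Capacity loss.} We have
\[
I(X;Y)-I(X;Z)=\mathbb{E}\big[C(A)-C(\bar{A})\big]=\sum_j P(J=j)\big(\mathbb{E}[C(A)\,|\,J=j]-C(\bar{A}_j)\big).
\]
Since $C$ is convex, on $\mathcal{I}_j$ we have $C\le\hat{C}$, where $\hat{C}$ is the chord interpolant. Hence $\mathbb{E}[C(A)\,|\,J=j]\le\mathbb{E}[\hat{C}(A)\,|\,J=j]=\hat{C}(\bar{A}_j)$, by linearity of $\hat{C}$ on the cell. It follows that each term is at most $\hat{C}(\bar{A}_j)-C(\bar{A}_j)\le\sup_{\alpha\in\mathcal{I}_j}(\hat{C}-C)(\alpha)$.

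The remaining step is the uniform interpolation error bound $\hat{C}(\alpha)-C(\alpha)\le\pi^2\log e/(32\ell^2)$ for the angular grid. This is the same estimate that underlies Theorem~\ref{thm:finite_BSC_decomp}, whose proof we may reuse directly.

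The main obstacle is this uniform bound, in particular near $\alpha=0$, where $C''$ blows up and the estimate \eqref{eq:inter_error_upper_bound} does not apply. If we redo it, we would write $g(\theta)=C(\alpha(\theta))$. On each angular cell of width $h=\pi/(2\ell)$, we would compare the chord error in $\alpha$ to the quantity $\frac{\log e}{2}(\theta-\theta_{j-1})(\theta_j-\theta)\le\frac{\log e}{8}h^2=\frac{\pi^2\log e}{32\ell^2}$. To do so we would use the identity $C''(\alpha(\theta))\alpha'(\theta)^2=\log e$, together with a convexity comparison showing that the interpolation in $\alpha$ is dominated by the interpolation in $\theta$ of $g$ plus a correction from $\alpha''$.
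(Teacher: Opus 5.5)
Your proposal follows the paper's proof essentially step for step. It uses the same angular cells with each shared endpoint assigned to one side, $\bar A=\mathbb{E}[A\,|\,J]$, and $Z=(1/2-\bar A)\hat Y$. It gets $\mathbb{E}[C(A)\,|\,J]\le\hat C(\bar A)$ from convexity of $C$ and affinity of $\hat C$ on each cell, and then reuses the angular interpolation bound from the proof of Theorem~\ref{thm:finite_BSC_decomp}. That lemma already covers the cell with left endpoint $\theta=0$, so your sketched comparison via $\alpha''$ is unnecessary.

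One step must be corrected. Under Definition~\ref{def:BSC}, the component BSC at $|Y|=y$ may have output coefficient $\beta=-y$. In that case $\mathrm{sgn}(Y)$ disagrees with $X$ with probability $1-A$, not $A$. For example, take the BISOC with $Y=X$ or $Y=-2X$ with probability $1/2$ each, independently of $X$; it has capacity $1$, yet your $Z=\tfrac12\,\mathrm{sgn}(Y)$ is independent of $X$. The fix is to take $\hat Y=Y/\beta(P_{Y|X,|Y|}(\cdot|\cdot,|Y|))$ when $A<1/2$ and a fair coin when $A=1/2$, as the paper does. This gives $\mathbb{P}(\hat Y\neq X\,|\,X,|Y|)=A$, and the rest of your argument then goes through unchanged.
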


Expressing \eqref{eq:finite_output_degraded_bound} in terms of the output alphabet size $L=2\ell$, we improve the $(64\log e)/L^2$ capacity loss upper bound in~\cite[Theorem 8]{ordentlichUpgradingAlgorithmOptimal2021} to $(\pi^2\log e)/(8L^2)$ for BISOCs.

\section{Concluding Remarks}
\label{sec:concluding_remarks}

We constructed finite BSC-decompositions of arbitrary BISOCs with cardinality at most $\ell+1$ and a capacity excess of order $O(\ell^{-2})$.
The leading constant of the bound~\eqref{eq:sharp_finite_BSC_asymptotic} on the capacity excess is asymptotically optimal in the worst case.
These decompositions facilitate channel simulation of BISOCs.
Combined with RSSE~\cite{zhaoRejectionSampledLinearCodes2026}, the resulting scheme achieves a communication rate with $O((\log n/n)^{2/3})$ redundancy above the channel capacity.

The finite BSC-decomposition constructed in the proof of Theorem~\ref{thm:finite_BSC_decomp} uses a fixed set of crossover probabilities depending only on $\ell$, independently of the BISOC.
The corresponding upgraded channels for different BISOCs can therefore be represented using a common set of BSCs, with only the mixture weights varying.
This feature may be useful in applications involving successive polar transforms.
\fi



\iflongver
\else
\clearpage
\fi

\section*{AI Statement}
The authors used OpenAI's GPT-5.6, GPT-6, and GPT-6.1 models for language editing throughout the manuscript, assistance with working out proof details in all appendices, and code development for numerical computations and plotting.
The proof ideas were developed primarily by the authors.
All AI-assisted material used in this work was checked by the authors for accuracy and potential plagiarism.
The authors take full responsibility for the content of this paper.

\bibliographystyle{IEEEtran}
\bibliography{ref}

\iflongver
\appendices

\section{Proof of Proposition~\ref{prop:BSC_decomp_upgrading}}
\label{sec:BSC_decomp_upgrading_proof}

Since $\hat{A}$ is independent of $X$, $P_{Z|X}$ is a mixture of BSCs and hence is a BISOC.

To establish the upgrading relation, we generate $Y$ from $Z$ as follows.
First compute $\hat{A}=1/2-|Z|$.
Take $\hat{Y}=\mathrm{sgn}(Z)$ if $Z\neq0$; if $Z=0$, then $\hat{A}=1/2$, so generate $\hat{Y}\sim\mathrm{Unif}(\{\pm1\})$.
The recovered $(X,\hat{A},\hat{Y})$ has the same joint distribution as that constructed in the proposition.
Next, generate $W\sim P_{W|\hat{A}}(\cdot|\hat{A})$ and compute
\begin{equation*}
    Y=\beta(P_{Y|X,W}(\cdot|\cdot,W))\hat{Y}.
\end{equation*}
Since $W$ is generated using only $\hat{A}$, the conditional channel from $X$ to $\hat{Y}$ given $W$ remains $\mathrm{BSC}(\hat{A})$.
By Definition~\ref{def:BSC}, the reconstructed $Y$ satisfies 
\begin{equation*}
    P_{X,W,Y}=P_X P_W P_{Y|X,W}.
\end{equation*}
Marginalizing over $W$ gives 
\begin{equation*}
    P_{X,Y}= P_X P_{Y|X}.
\end{equation*}
by Definition~\ref{def:BSC_decomp}.
This procedure generates $Y$ from $Z$ without using $X$, so $X\to Z\to Y$ forms a Markov chain and thus $P_{Z|X}$ is upgraded with respect to $P_{Y|X}$.

When $Z \neq 0$, $(\hat{A},\hat{Y})$ can be recovered from $Z$; when $Z=0$, $\hat{A}=1/2$ and $\hat{Y}$ carries no information about $X$.
Therefore,
\begin{align*}
    I(X;Z)
    &=I(X;\hat{A},\hat{Y}) \\
    &\overset{(a)}{=}I(X;\hat{Y}\,|\,\hat{A}) \displaybreak[0]\\
    &=\mathbb{E}[C(\hat{A})] \displaybreak[0]\\
    &\overset{(b)}{=}I(X;Y\,|\, W),
\end{align*}
where $(a)$ follows from $X\perp\hat{A}$;
$(b)$ uses \eqref{eq:weighted_average_capacity}.

Finally, $Z$ takes at most two values for each value of $\hat{A}$, so the output alphabet size is at most twice the cardinality of the BSC-decomposition.

\section{Proof of Theorem~\ref{thm:cano_BSC_decomp} \label{sec:cano_BSC_decomp_proof}}

Let $(X,Y) \sim P_X P_{Y|X}$ under the convention in Section~\ref{sec:finite_BSC_decomp}.
For any Borel set $S \subseteq [0,+\infty)$,
\begin{align*}
    \mathbb{P}(|Y| \in S \,|\, X = 1)
    &= P_{Y|X}(S \cup (-S) \,|\, 1) \displaybreak[0]\\
    &\overset{(a)}{=} P_{Y|X}(-(S \cup (-S)) \,|\, -1) \displaybreak[0]\\
    &= P_{Y|X}((-S) \cup S \,|\, -1) \displaybreak[0]\\
    &= \mathbb{P}(|Y| \in S \,|\, X = -1),
\end{align*}
where $(a)$ follows from Definition~\ref{def:BISOC}.
Since $X \in \{\pm 1\}$, $|Y|$ is independent of $X$.
By the law of total probability, $(P_{|Y|}, P_{Y|X,|Y|})$ satisfies Condition~1) of Definition~\ref{def:BSC_decomp}.

For any Borel sets $S_1 \subseteq \mathbb{R}, S_2 \subseteq [0,+\infty)$,
\begin{align*}
    &\quad~\int_{S_2} P_{Y|X,|Y|}(S_1 \,|\, 1, y)\,dP_{|Y|}(y) \\*
    &= \mathbb{P}(Y \in S_1, |Y| \in S_2 \,|\, X = 1) \displaybreak[0]\\
    &= P_{Y|X}(\{y \in S_1 : |y| \in S_2\} \,|\, 1) \displaybreak[0]\\
    &\overset{(a)}{=} P_{Y|X}(-\{y \in S_1 : |y| \in S_2\} \,|\, -1) \displaybreak[0]\\
    &= \mathbb{P}(Y \in -S_1, |Y| \in S_2 \,|\, X = -1) \displaybreak[0]\\
    &= \int_{S_2} P_{Y|X,|Y|}(-S_1 \,|\, -1, y)\,dP_{|Y|}(y).
\end{align*}
where $(a)$ again uses Definition~\ref{def:BISOC}.
Since $S_2$ is arbitrary, $P_{Y|X,|Y|}(S_1 \,|\, 1, y) = P_{Y|X,|Y|}(-S_1 \,|\, -1, y)$ for $P_{|Y|}$-almost every $y$.
Moreover, $P_{Y|X,|Y|}(\{\pm y\} \,|\, x,y) = 1$ for $x \in \{\pm1\}$ and $P_{|Y|}$-almost every $y$.
Thus, $P_{Y|X,|Y|}(\cdot|\cdot,y)$ is a BSC for $P_{|Y|}$-almost every $y$, verifying Condition~2) of Definition~\ref{def:BSC_decomp}.

Finally, the mutual information equality follows from
\begin{align*}
    I(X;Y \,|\, |Y|) &= I(X;Y,|Y|) - I(X;|Y|) \\*
    &\overset{(a)}{=} I(X;Y,|Y|) \\*
    &\overset{(b)}{=} I(X;Y),
\end{align*}
where $(a)$ follows from the independence of $X$ and $|Y|$; $(b)$ holds because $|Y|$ is a function of $Y$.

\section{Construction via Randomized Quantization} \label{sec:alpha_nodes_to_BSC_decomp_proof}

\begin{lem} \label{thm:alpha_nodes_to_BSC_decomp}
Given a BISOC $P_{Y|X}$, let $(P_{|Y|},P_{Y|X,|Y|})$ be its canonical BSC-decomposition and $A := \alpha(P_{Y|X,|Y|}(\cdot|\cdot,|Y|))$.
For any integer $\ell \geq 1$, choose $\alpha_0 < \cdots < \alpha_\ell$ in $[0,1/2]$ such that $\mathrm{supp}(P_A) \subseteq [\alpha_0,\alpha_\ell]$.
Let $J$ be a random variable over $[0:\ell]$, conditionally independent of $(X,|Y|)$ given $A$, with
\begin{equation}
\begin{aligned}
    \mathbb{P}(J=j-1 \,|\, A=\alpha)
    &= \frac{\alpha_j - \alpha}{\alpha_j - \alpha_{j-1}}, \\
    \mathbb{P}(J=j \,|\, A=\alpha)
    &= \frac{\alpha - \alpha_{j-1}}{\alpha_j - \alpha_{j-1}},
\end{aligned} \label{eq:P_J_given_A}
\end{equation}
for $\alpha \in [\alpha_{j-1},\alpha_j]$ and $j \in [1:\ell]$.
Let $W := (|Y|,J)$.
For $P_W$-almost every $(y,j)$, let $P_{Y|X,W}(\cdot|\cdot,(y,j))$ be the $\mathrm{BSC}(\alpha_j)$ with output coefficient $\beta(P_{Y|X,|Y|}(\cdot|\cdot,y))$.
Then $(P_W,P_{Y|X,W})$ is a BSC-decomposition of $P_{Y|X}$ such that
\begin{equation*}
    \hat{A} := \alpha(P_{Y|X,W}(\cdot|\cdot,W))
    = \alpha_J \in \{\alpha_0,\ldots,\alpha_\ell\}
\end{equation*}
almost surely, and $I(X;Y\,|\, W) = \mathbb{E}[\hat{C}(A)]$, where $\hat{C}$ is the piecewise-linear interpolant of $C$ at $\alpha_0,\ldots,\alpha_\ell$, defined by
\begin{equation*}
    \hat{C}(\alpha) :=
    \frac{C(\alpha_j) - C(\alpha_{j-1})}{\alpha_j - \alpha_{j-1}}
    (\alpha - \alpha_{j-1}) + C(\alpha_{j-1})
\end{equation*}
for $\alpha \in [\alpha_{j-1},\alpha_j]$ and $j \in [1:\ell]$.
\end{lem}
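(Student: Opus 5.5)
The plan is to verify the two conditions of Definition~\ref{def:BSC_decomp} directly, building on Theorem~\ref{thm:cano_BSC_decomp}. After that, the weighted average capacity is computed by conditioning on $A$.

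\textbf{Step 1: the law of $J$ is well defined.} Since $\mathrm{supp}(P_A)\subseteq[\alpha_0,\alpha_\ell]$, almost every $\alpha$ lies in some $[\alpha_{j-1},\alpha_j]$. The two probabilities in \eqref{eq:P_J_given_A} are nonnegative and sum to one. At a shared endpoint $\alpha=\alpha_j$, both adjacent intervals assign $J=j$ with probability one, so the definition is consistent. By construction,
\begin{equation*}
\mathbb{E}[\alpha_J\,|\,A=\alpha]=\alpha .
\end{equation*}
Also, $W=(|Y|,J)$ is independent of $X$. Indeed, $|Y|$ is independent of $X$ (Theorem~\ref{thm:cano_BSC_decomp}), $A$ is a function of $|Y|$, and $J$ depends on $(X,|Y|)$ only through $A$.

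\textbf{Step 2: condition 2).} For $P_W$-almost every $(y,j)$, the kernel $P_{Y|X,W}(\cdot|\cdot,(y,j))$ is by definition a BSC. Its support is $\{\pm\beta(y)\}$ and its crossover probability is $\alpha_j$. One subtlety must be checked. If $\beta(y)=0$, then $A=1/2=\alpha_\ell$ forces $j=\ell$, so the definition is consistent. Hence $\hat A=\alpha_J$ almost surely.

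\textbf{Step 3: condition 1), the mixture identity.} This is the main step. Fix $y$ with $A=\alpha\in[\alpha_{j-1},\alpha_j]$ and $\beta=\beta(y)\neq 0$. Conditionally on $|Y|=y$, we mix over $J$ with the weights in \eqref{eq:P_J_given_A}. The probability that input $1$ goes to $-\beta$ is then
\begin{equation*}
\frac{\alpha_j-\alpha}{\alpha_j-\alpha_{j-1}}\,\alpha_{j-1}+\frac{\alpha-\alpha_{j-1}}{\alpha_j-\alpha_{j-1}}\,\alpha_j=\alpha .
\end{equation*}
So, given $|Y|=y$, the mixture is exactly $\mathrm{BSC}(\alpha)$ with output coefficient $\beta$, which is $P_{Y|X,|Y|}(\cdot|\cdot,y)$.

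It remains to integrate over $P_{|Y|}$. Since $J$ is conditionally independent of $X$ given $|Y|$, we have
\begin{equation*}
\mathbb{E}_W[P_{Y|X,W}(\cdot|x,W)]=\mathbb{E}_{|Y|}[P_{Y|X,|Y|}(\cdot|x,|Y|)]=P_{Y|X}(\cdot|x),
\end{equation*}
where the last equality is Condition 1) for the canonical decomposition. The only care needed is measurability of $y\mapsto\beta(y)$ and of $y\mapsto A(y)$. These hold because the regular conditional probability is measurable and the relevant sets are Borel. If a fully rigorous argument is wanted, I would state the identity on sets $S_1\times S_2$, as in the proof of Theorem~\ref{thm:cano_BSC_decomp}.

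\textbf{Step 4: capacity.} By \eqref{eq:weighted_average_capacity}, $I(X;Y\,|\,W)=\mathbb{E}[C(\alpha_J)]$. Conditioning on $A=\alpha\in[\alpha_{j-1},\alpha_j]$ gives
\begin{equation*}
\mathbb{E}[C(\alpha_J)\,|\,A=\alpha]=\frac{\alpha_j-\alpha}{\alpha_j-\alpha_{j-1}}\,C(\alpha_{j-1})+\frac{\alpha-\alpha_{j-1}}{\alpha_j-\alpha_{j-1}}\,C(\alpha_j)=\hat C(\alpha),
\end{equation*}
which is a direct rearrangement of the interpolant formula. Taking expectation over $A$ yields $I(X;Y\,|\,W)=\mathbb{E}[\hat C(A)]$. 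The cardinality bound of at most $\ell+1$ is immediate, since $\hat A=\alpha_J\in\{\alpha_0,\ldots,\alpha_\ell\}$.
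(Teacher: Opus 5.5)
Your proposal is correct and follows essentially the same route as the paper: establish $W\perp X$, use $\mathbb{E}[\alpha_J\,|\,|Y|]=\mathbb{E}[\alpha_J\,|\,A]=A$ together with the fact that BSC transition probabilities are linear in the crossover probability to recover $P_{Y|X,|Y|}$ and hence $P_{Y|X}$, and then condition on $A$ to get $\mathbb{E}[C(\alpha_J)\,|\,A]=\hat{C}(A)$. Your extra checks, namely consistency at shared endpoints and the $\beta=0$ case forcing $J=\ell$ with $\alpha_\ell=1/2$, are details the paper leaves implicit.
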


\begin{IEEEproof}
The condition $\mathrm{supp}(P_A) \subseteq [\alpha_0,\alpha_\ell]$ ensures that $P_{J|A}$ is well-defined for $P_A$-almost every $\alpha$.
Since $A$ is a function of $|Y|$ and $J \perp (X,|Y|) \,|\, A$, we have $J \perp X \,|\, |Y|$.
Together with $|Y|\perp X$, this gives $W = (|Y|,J) \perp X$.

By construction, the component channels $P_{Y|X,W}(\cdot|\cdot,w)$ for $P_W$-almost every $w$ are BSCs, so it remains to verify Condition~1) of Definition~\ref{def:BSC_decomp}.
Based on \eqref{eq:P_J_given_A} and $J \perp |Y| \,|\, A$,
\begin{equation*}
    \mathbb{E}_J[\alpha_J\,|\, |Y|]
    = \mathbb{E}_J[\alpha_J\,|\, A] = A
\end{equation*}
almost surely.
Write $B := \beta(P_{Y|X,|Y|}(\cdot|\cdot,|Y|))$.
For any Borel set $S \subseteq \mathbb{R}$ and $x \in \{\pm1\}$,
\begin{align*}
    &\quad~\mathbb{E}_W[P_{Y|X,W}(S \,|\, x,W)] \\*
    &\overset{(a)}{=} \mathbb{E}_{J,|Y|}\big[
        (1 - \alpha_J)\cdot\mathbf{1}\{xB \in S\} 
        + \alpha_J\cdot\mathbf{1}\{-xB \in S\}\big] \displaybreak[0]\\
    &\overset{(b)}{=} \mathbb{E}_{|Y|}\big[
        (1 - A)\cdot\mathbf{1}\{xB \in S\}
        + A\cdot\mathbf{1}\{-xB \in S\}\big] \displaybreak[0]\\
    &= \mathbb{E}_{|Y|}[P_{Y|X,|Y|}(S \,|\, x,|Y|)] \displaybreak[0]\\
    &\overset{(c)}{=} P_{Y|X}(S \,|\, x),
\end{align*}
where $\mathbf{1}\{\cdot\}$ denotes the indicator function that equals $1$ if the event occurs or $0$ otherwise;
$(a)$ expresses the component BSCs' transition probabilities according to Definition~\ref{def:BSC};
$(b)$ is obtained by taking the conditional expectation given $|Y|$ and using $\mathbb{E}_J[\alpha_J\,|\, |Y|] = A$ almost surely;
$(c)$ follows from the canonical BSC-decomposition in Theorem~\ref{thm:cano_BSC_decomp}.
Therefore, $(P_W,P_{Y|X,W})$ is a BSC-decomposition satisfying $\hat{A} = \alpha_J \in \{\alpha_0,\ldots,\alpha_\ell\}$ almost surely.

Finally, for $j \in [1:\ell]$ and $\alpha \in [\alpha_{j-1},\alpha_j]$,
\begin{align*}
    \mathbb{E}[C(\alpha_J)\,|\, A=\alpha]
    &= \!\frac{\alpha_j - \alpha}{\alpha_j - \alpha_{j-1}}C(\alpha_{j-1}) \!+\! \frac{\alpha - \alpha_{j-1}}{\alpha_j - \alpha_{j-1}}C(\alpha_j) \displaybreak[0]\\
    &= \hat{C}(\alpha).
\end{align*}
Taking expectations and applying \eqref{eq:weighted_average_capacity}, we obtain
\begin{equation*}
    I(X;Y\,|\, W) = \mathbb{E}[C(\alpha_J)]
    = \mathbb{E}[\hat{C}(A)].
\end{equation*}
\end{IEEEproof}

\section{Proof of Theorem~\ref{thm:finite_BSC_decomp} \label{sec:finite_BSC_decomp_proof}} 

Fix any $\ell \geq 1$ and any BISOC $P_{Y|X}$.
Let $(P_{|Y|}, P_{Y|X,|Y|})$ be the canonical BSC-decomposition of $P_{Y|X}$ and $A := \alpha(P_{Y|X,|Y|}(\cdot|\cdot, |Y|))$.
By Theorem~\ref{thm:cano_BSC_decomp} and \eqref{eq:weighted_average_capacity},
\begin{equation} \label{eq:capacity_identity}
    I(X; Y) = I(X; Y \,|\, |Y|) = \mathbb{E}[C(A)].
\end{equation}
Consider any $\ell+1$ points $\alpha_0 < \cdots < \alpha_\ell$ in $[0,1/2]$ such that $\mathrm{supp}(P_A) \subseteq [\alpha_0,\alpha_\ell]$.
Let $(P_W,P_{Y|X,W})$ and $\hat{C}$ be the BSC-decomposition and the piecewise-linear interpolant constructed in Lemma~\ref{thm:alpha_nodes_to_BSC_decomp}.
Combining the lemma with \eqref{eq:capacity_identity}, we have
\begin{equation*}
    I(X;Y\,|\, W)-I(X;Y)=\mathbb{E}[\hat{C}(A)-C(A)].
\end{equation*}
Therefore, the capacity excess is upper bounded by the maximal linear interpolation error over all intervals $[\alpha_{j-1}, \alpha_j]$ for $j \in [1:\ell]$:
\begin{align}
    &\quad~I(X;Y\,|\, W)-I(X;Y) \notag\\
    &\leq \sup_{\alpha \in \mathrm{supp}(P_A)} (\hat{C}(\alpha) - C(\alpha)) \notag\\
    &\leq \max_{j \in [1:\ell]} \max_{\alpha \in [\alpha_{j-1}, \alpha_j]} (\hat{C}(\alpha) - C(\alpha)). \label{eq:max_linear_approx_error}
\end{align}
To bound the final term of \eqref{eq:max_linear_approx_error}, we first derive an upper bound on the linear interpolation error of $C$ over an arbitrary interval $[\alpha_{\mathrm{L}}, \alpha_{\mathrm{R}}] \subseteq [0,1/2]$, and then choose a group of $\alpha_0,\ldots,\alpha_\ell$.

We use the angular parameterization $\alpha(\theta)=(1-\cos\theta)/2$ introduced in \eqref{eq:angular_parameterization}.
Let $0 \leq \theta_{\mathrm{L}} < \theta_{\mathrm{R}} \leq \pi/2$ and $\Delta := \theta_{\mathrm{R}} - \theta_{\mathrm{L}}$.
For $\theta \in [\theta_\mathrm{L}, \theta_\mathrm{R}]$, define the linear interpolation error of $C$ between the endpoints $\alpha(\theta_\mathrm{L})$ and $\alpha(\theta_\mathrm{R})$ as
\begin{align}
    E(\theta)
    &:= \underbrace{\frac{C(\alpha(\theta_\mathrm{R})) - C(\alpha(\theta_\mathrm{L}))}
    {\alpha(\theta_\mathrm{R}) - \alpha(\theta_\mathrm{L})}}_{K}
    (\alpha(\theta) - \alpha(\theta_\mathrm{L})) \notag\\
    &\quad~+ C(\alpha(\theta_\mathrm{L})) - C(\alpha(\theta)).
    \label{eq:E_theta}
\end{align}
For $\theta \in (\theta_\mathrm{L},\theta_\mathrm{R})$, the first derivative of $E(\theta)$ is
\begin{equation*}
    E'(\theta)=\sin\theta\Big(\frac{K}{2}-\log\tan\frac{\theta}{2}\Big).
\end{equation*}
Since $E(\theta_\mathrm{L}) = E(\theta_\mathrm{R}) = 0$,  $\sin\theta > 0$ and $K/2-\log\tan(\theta/2)$ is strictly decreasing, 
$E$ has a unique interior maximizer
\begin{equation}
    \theta^* := \arg\max_{\theta \in [\theta_\mathrm{L}, \theta_\mathrm{R}]} E(\theta)
    \label{eq:theta_star}
\end{equation}
such that $E'(\theta^*)=0$, which implies $K/2 = \log\tan(\theta^*/2)$.
Therefore, $E'(\theta)$ can be rewritten as
\begin{equation}
    E'(\theta) = \sin\theta
    \int_\theta^{\theta^*}\frac{\log e}{\sin\phi}\,d\phi.
    \label{eq:E_derivative_integral_form}
\end{equation}
Moreover, for $\theta \in (\theta_\mathrm{L},\theta_\mathrm{R})$, the second derivative of $E(\theta)$ is
\begin{equation}
    E''(\theta)=\frac{E'(\theta)}{\tan\theta}-\log e.
    \label{eq:error_second_derivative}
\end{equation}

\begin{lem} \label{lem:angular_interpolation_error_bound}
For any $\theta \in [\theta_\mathrm{L},\theta_\mathrm{R}]$,
\begin{equation*}
    E(\theta) \leq \frac{\log e}{8}\Delta^2.
\end{equation*}
\end{lem}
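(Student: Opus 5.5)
The plan is to turn $E$ into an explicit double integral using the structure already obtained, namely the maximizer $\theta^*$, the identity \eqref{eq:E_derivative_integral_form}, and the second-derivative identity \eqref{eq:error_second_derivative}. I then compare it with the model quadratic
\begin{equation*}
q(\theta) := \frac{\log e}{2}(\theta-\theta_\mathrm{L})(\theta_\mathrm{R}-\theta),
\end{equation*}
whose maximum is exactly $(\log e)\Delta^2/8$ and which satisfies $q''\equiv-\log e$. Since $E \leq E(\theta^*)$ on $[\theta_\mathrm{L},\theta_\mathrm{R}]$, it suffices to prove $E(\theta^*) \leq (\log e)\Delta^2/8$. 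Write $a:=\theta^*-\theta_\mathrm{L}$ and $b:=\theta_\mathrm{R}-\theta^*$, so that $a+b=\Delta$.

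First I would record the sign information. By \eqref{eq:E_derivative_integral_form}, $E'\geq 0$ on $[\theta_\mathrm{L},\theta^*]$ and $E'\leq 0$ on $[\theta^*,\theta_\mathrm{R}]$. Because $\tan\theta>0$ on $(0,\pi/2)$, \eqref{eq:error_second_derivative} then gives $E''\geq-\log e$ on the left piece and $E''\leq-\log e$ on the right piece. I would also record a one-sided slope bound: on the left piece, $\sin$ is increasing on $[0,\pi/2]$, so $\sin\theta/\sin\phi\leq 1$ for $\phi\in[\theta,\theta^*]$. Hence $E'(\theta)\leq(\log e)(\theta^*-\theta)$ there, and integrating gives $E(\theta^*)\leq(\log e)a^2/2$.

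Next I would pin down where $\theta^*$ lies by a reflection argument. Set $h(s):=E(\theta^*-s)-E(\theta^*+s)$. Then $h(0)=h'(0)=0$ and $h''(s)=E''(\theta^*-s)-E''(\theta^*+s)\geq 0$, so $h\geq 0$. Thus $E$ reaches $0$ on the right no later than on the left, which gives $b\leq a$. This already settles the case $a\leq\Delta/2$, but it does not close the argument in general, because the cheap bound controls $E(\theta^*)$ by the longer side.

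The main step is to show $g:=E-q\leq 0$, which gives the claim directly since $q\leq(\log e)\Delta^2/8$. We have $g(\theta_\mathrm{L})=g(\theta_\mathrm{R})=0$ and $g''=E'/\tan\theta$. So $g$ is convex on $[\theta_\mathrm{L},\theta^*]$ and concave on $[\theta^*,\theta_\mathrm{R}]$. A convex-then-concave function vanishing at both ends is nonpositive as soon as $g'(\theta_\mathrm{R})\geq 0$. Indeed, if $g$ were positive somewhere, the concave branch would have to come back down to $0$ with negative slope at $\theta_\mathrm{R}$. Hence the crux is the endpoint inequality
\begin{equation*}
-E'(\theta_\mathrm{R}) = \sin\theta_\mathrm{R}\int_{\theta^*}^{\theta_\mathrm{R}}\frac{\log e}{\sin\phi}\,d\phi \;\leq\; \frac{\log e}{2}\Delta .
\end{equation*}
To prove it, I would substitute $u=\ln\tan(\theta/2)$, so that $d\theta=\operatorname{sech}u\,du$ and $\sin\theta=\operatorname{sech}u$. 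The inequality then becomes a statement about the increasing weight $\operatorname{sech}$ on $u\leq 0$ (recall $\theta\leq\pi/2$). The balancing condition $E(\theta_\mathrm{L})=E(\theta_\mathrm{R})$ becomes $\int_{u_\mathrm{L}}^{u^*}\operatorname{sech}^2u\,(u^*-u)\,du=\int_{u^*}^{u_\mathrm{R}}\operatorname{sech}^2u\,(u-u^*)\,du$. I would try to derive the endpoint inequality from this balancing identity together with the monotonicity of $\operatorname{sech}$.

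I expect this endpoint estimate to be the main obstacle, since the naive comparisons on the right piece go in the wrong direction. If it does not yield directly, I would fall back on a two-variable calculus argument: treat $(\theta_\mathrm{L},\theta_\mathrm{R})$ as free parameters with $\theta^*$ determined by $K/2=\log\tan(\theta^*/2)$, and show that the ratio $E(\theta^*)/\Delta^2$ is maximized in the limit $\Delta\to 0$, where it tends to $(\log e)/8$.
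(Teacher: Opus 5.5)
\textbf{There is a genuine gap: the main step is false.} Your preliminary facts are correct. These are the sign pattern of $E'$, the bound $E(\theta^*)\le\frac{\log e}{2}a^2$, and your reflection argument for $b\le a$, which is a valid alternative to the paper's sandwich $\frac{\log e}{2}b^2\le E(\theta^*)\le\frac{\log e}{2}a^2$. The failure is the claim that $E\le q$ pointwise on $[\theta_\mathrm{L},\theta_\mathrm{R}]$. This is false, and so is the equivalent endpoint inequality $-E'(\theta_\mathrm{R})\le\frac{\log e}{2}\Delta$. No manipulation of the $\mathrm{sech}$-weighted balancing identity can prove it.

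To see this, take a short interval $0<\theta_\mathrm{L}<\theta_\mathrm{R}<\pi/2$ with midpoint $\theta_m$. There $g''=E'/\tan\theta\approx\frac{\log e}{\tan\theta_m}(\theta_m-\theta)$. Solving with $g(\theta_\mathrm{L})=g(\theta_\mathrm{R})=0$ gives, to leading order,
\[
g(\theta)\approx\frac{\log e}{6\tan\theta_m}(\theta-\theta_m)(\theta-\theta_\mathrm{L})(\theta_\mathrm{R}-\theta),\qquad g'(\theta_\mathrm{R})\approx-\frac{\log e}{12\tan\theta_m}\Delta^2<0 .
\]
So $E>q$ on most of the right half. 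A concrete check: for $[\theta_\mathrm{L},\theta_\mathrm{R}]=[0.5,0.7]$ one gets $\ln\tan(\theta^*/2)\approx-1.1690$. Hence $-E'(0.7)\approx0.1039\log e>0.1\log e=\frac{\log e}{2}\Delta$, and $E(0.68)\approx0.00185\log e>q(0.68)=0.0018\log e$. Your fallback (show that $E(\theta^*)/\Delta^2$ is maximized as $\Delta\to0$) restates the goal without supplying a mechanism.

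\textbf{The missing idea.} Compare $E$ with $q$ only at $\theta^*$. That is all you need, since $E(\theta^*)\le\frac{\log e}{2}ab\le\frac{\log e}{8}\Delta^2$ by AM--GM, and it is what the paper proves.

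Because $g$ vanishes at both endpoints,
\[
g(\theta^*)=-\int_{\theta_\mathrm{L}}^{\theta_\mathrm{R}}g''(\phi)\,\mathcal{K}(\phi,\theta^*)\,d\phi ,
\]
where $\mathcal{K}\ge0$ is the hat-shaped Peano kernel (Green's function). Now linearize $g''$ about $\theta^*$ rather than about the midpoint:
\[
-\frac{E'(\phi)}{\tan\phi}=(\log e)\int_{\theta^*}^{\phi}\frac{\cos\phi}{\sin\psi}\,d\psi\le\frac{\log e}{\tan\theta^*}(\phi-\theta^*)
\]
on the whole interval. To verify this, compare $\cos\phi/\sin\psi$ with $\cot\theta^*$ separately for $\phi\ge\theta^*$ and for $\phi<\theta^*$.

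Integrate this against $\mathcal{K}\ge0$ and use $\int(\phi-\theta^*)\mathcal{K}(\phi,\theta^*)\,d\phi=\frac{ab}{6}(b-a)$. This gives $g(\theta^*)\le\frac{\log e}{6\tan\theta^*}ab(b-a)\le0$, precisely because of your $b\le a$.

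So your ingredients do suffice, but only through this point evaluation at $\theta^*$. There $\theta^*-\theta_m=O(\Delta^2)$, so the leading cubic term above nearly vanishes and the sign is decided by the moment identity.
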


\begin{IEEEproof}
Fix any $\theta \in [\theta_\mathrm{L},\theta_\mathrm{R}]$.
Define the linear functional
\begin{equation*}
    \mathcal{E}_\theta[f] :=
    \frac{f(\theta_\mathrm{R}) - f(\theta_\mathrm{L})}{\Delta}
    (\theta - \theta_\mathrm{L}) + f(\theta_\mathrm{L}) - f(\theta)
\end{equation*}
for continuous functions $f$ on $[\theta_\mathrm{L},\theta_\mathrm{R}]$.
It satisfies $\mathcal{E}_\theta[f] = 0$ if $f$ is a polynomial of degree at most one.
Since $E(\theta_\mathrm{L}) = E(\theta_\mathrm{R}) = 0$, we have $\mathcal{E}_\theta[E] = -E(\theta)$.
To apply the Peano kernel theorem \cite[Theorem~43]{Sard1982LinearApproximation}, we verify that $E'$ is absolutely continuous on $[\theta_\mathrm{L},\theta_\mathrm{R}]$.
When $\theta_\mathrm{L} > 0$, $E'$ is continuously differentiable on $[\theta_\mathrm{L},\theta_\mathrm{R}]$ and thus absolutely continuous.
When $\theta_\mathrm{L} = 0$, $E'(\epsilon) = O(\epsilon|\log\epsilon|)$ and $E''(\epsilon) = O(|\log\epsilon|)$ as $\epsilon \downarrow 0$.
For any $\phi \in (0,\theta_\mathrm{R}]$,
\begin{equation*}
    E'(\phi)
    = \lim_{\epsilon \downarrow 0}
    \left(E'(\epsilon) + \int_\epsilon^\phi E''(t)\,dt\right)
    = \int_0^\phi E''(t)\,dt.
\end{equation*}
The limit exists because $E'(\epsilon) \to 0$ and $\int_0^\phi |E''(t)|\,dt < \infty$.
Hence, $E'$ is absolutely continuous on $[0,\theta_\mathrm{R}]$.

Let $\Delta_\mathrm{L} := \theta - \theta_\mathrm{L}$ and $\Delta_\mathrm{R} := \theta_\mathrm{R} - \theta$.
Based on the Peano kernel theorem,
\begin{equation*}
    \mathcal{E}_\theta[E]
    = \int_{\theta_\mathrm{L}}^{\theta_\mathrm{R}}
    E''(\phi)\mathcal{K}(\phi,\theta)\,d\phi,
\end{equation*}
where
\begin{equation}
    \mathcal{K}(\phi,\theta) =
    \begin{cases}
        \dfrac{\Delta_\mathrm{R}}{\Delta}
        (\phi - \theta_\mathrm{L}),
        &\phi \in [\theta_\mathrm{L},\theta],\\[1ex]
        \dfrac{\Delta_\mathrm{L}}{\Delta}
        (\theta_\mathrm{R} - \phi),
        &\phi \in [\theta,\theta_\mathrm{R}],
    \end{cases}
    \label{eq:peano_kernel}
\end{equation} 
is the Peano kernel, satisfying

\begin{equation}
\begin{aligned}
    \int_{\theta_\mathrm{L}}^{\theta_\mathrm{R}}
    \mathcal{K}(\phi,\theta)\,d\phi
    &= \frac{\Delta_\mathrm{L}\Delta_\mathrm{R}}{2},
    \\
    \int_{\theta_\mathrm{L}}^{\theta_\mathrm{R}}
    (\phi - \theta)\mathcal{K}(\phi,\theta)\,d\phi
    &= \frac{\Delta_\mathrm{L}\Delta_\mathrm{R}}{6}
    (\Delta_\mathrm{R} - \Delta_\mathrm{L}).
\end{aligned}
\label{eq:peano_kernel_integral}
\end{equation}
By \eqref{eq:error_second_derivative} and $\mathcal{E}_\theta[E] = -E(\theta)$,
\begin{equation}
    E(\theta) = \int_{\theta_\mathrm{L}}^{\theta_\mathrm{R}}
    \left(\log e - \frac{E'(\phi)}{\tan\phi}\right)
    \mathcal{K}(\phi,\theta)\,d\phi.
    \label{eq:E_integral}
\end{equation}

According to \eqref{eq:E_derivative_integral_form}, for $t \in (\theta_\mathrm{L},\theta_\mathrm{R})$,
\begin{equation*}
    -\frac{E'(t)}{\tan t}
    = (\log e)
    \int_{\theta^*}^t\frac{\cos t}{\sin \phi}\,d \phi \leq
    \frac{\log e}{\tan\theta^*}(t - \theta^*), 
\end{equation*}
where the inequality follows from
\begin{equation*}
    \frac{\cos t}{\sin \phi}
    \begin{cases}
        \leq \dfrac{\cos\theta^*}{\sin \phi}
        \leq \dfrac{\cos\theta^*}{\sin\theta^*}
        = \dfrac{1}{\tan\theta^*},
        & t \geq \theta^*, \\[2ex]
        \geq \dfrac{\cos\theta^*}{\sin \phi}
        \geq \dfrac{\cos\theta^*}{\sin\theta^*}
        = \dfrac{1}{\tan\theta^*},
        & t < \theta^*.
    \end{cases}
\end{equation*}
Therefore, $E(\theta^*)$ can be upper bounded as 
\begin{align*}
    E(\theta^*)
    &\leq \int_{\theta_\mathrm{L}}^{\theta_\mathrm{R}}
    \left(\log e + \frac{\log e}{\tan\theta^*}(\phi - \theta^*)\right)
    \mathcal{K}(\phi,\theta^*)\,d\phi \displaybreak[0]\\
    &\leq \frac{\log e}{2}\Delta_\mathrm{L}^*\Delta_\mathrm{R}^* + \frac{\log e}{6\tan\theta^*}
    \Delta_\mathrm{L}^*\Delta_\mathrm{R}^*
    (\Delta_\mathrm{R}^* - \Delta_\mathrm{L}^*),
\end{align*}
where $\Delta_\mathrm{L}^* := \theta^* - \theta_\mathrm{L}$ and $\Delta_\mathrm{R}^* := \theta_\mathrm{R} - \theta^*$.
The last inequality applies \eqref{eq:peano_kernel_integral}.

We now show that $\Delta_\mathrm{R}^* \leq \Delta_\mathrm{L}^*$.
By \eqref{eq:E_derivative_integral_form}, for $t \in (\theta_\mathrm{L},\theta^*]$,
\begin{equation} \label{eq:E_derivative_on_left_side}
    0 \leq E'(t)
    = (\log e)\int_t^{\theta^*}
    \frac{\sin t}{\sin\phi}\,d\phi 
    \leq (\log e)(\theta^* - t),
\end{equation}
where the last inequality uses $\sin t/\sin\phi \leq 1$ for $\phi \in [t,\theta^*]$.
Similarly, for $t \in [\theta^*,\theta_\mathrm{R})$,
\begin{equation} \label{eq:E_derivative_on_right_side}
    -E'(t)
    = (\log e)\int_{\theta^*}^{t}
    \frac{\sin t}{\sin\phi}\,d\phi 
    \geq (\log e)(t - \theta^*),
\end{equation}
where the last inequality uses $\sin t/\sin\phi \geq 1$ for $\phi \in [\theta^*,t]$.
Since $E(\theta_\mathrm{L}) = E(\theta_\mathrm{R}) = 0$, we have
\begin{align*}
    E(\theta^*)
    &= \int_{\theta_\mathrm{L}}^{\theta^*}E'(t)\,dt
    \leq \frac{\log e}{2}(\Delta_\mathrm{L}^*)^2, \displaybreak[0]\\
    E(\theta^*)
    &= -\int_{\theta^*}^{\theta_\mathrm{R}}E'(t)\,dt
    \geq \frac{\log e}{2}(\Delta_\mathrm{R}^*)^2.
\end{align*}
Thus, $\Delta_\mathrm{R}^* \leq \Delta_\mathrm{L}^*$.
This implies
\begin{align*}
    E(\theta^*)
    &\leq \frac{\log e}{2}\Delta_\mathrm{L}^*\Delta_\mathrm{R}^* \displaybreak[0]\\
    &\overset{(a)}{\leq} \frac{\log e}{2}\left(\frac{\Delta_\mathrm{L}^* + \Delta_\mathrm{R}^*}{2}\right)^2 \displaybreak[0]\\
    &= \frac{\log e}{8}\Delta^2.
\end{align*}
where $(a)$ is obtained by the arithmetic mean-geometric mean (AM-GM) inequality.
Finally, \eqref{eq:theta_star} completes the proof.
\end{IEEEproof}

We now choose the points $\alpha_0, \ldots, \alpha_\ell$ that yield \eqref{eq:gamma_contruct}.
Let 
\begin{equation*}
    \alpha_0 := \inf\mathrm{supp}(P_A), \quad \alpha_\ell := \sup\mathrm{supp}(P_A),
\end{equation*}
and $\theta_0, \theta_\ell \in [0, \pi/2]$ satisfy $\alpha(\theta_0) = \alpha_0$ and $\alpha(\theta_\ell) = \alpha_\ell$.
The case $\alpha_0=\alpha_\ell$ follows immediately from the canonical BSC-decomposition, so we assume $\alpha_0<\alpha_\ell$.
For $j \in [1:\ell-1]$, define $\theta_j := \theta_0 + j(\theta_\ell - \theta_0)/\ell$ and $\alpha_j := \alpha(\theta_j)$.
This choice satisfies $\alpha_0 < \cdots < \alpha_\ell$ and $\mathrm{supp}(P_A) \subseteq [\alpha_0, \alpha_\ell]$, as required by Lemma~\ref{thm:alpha_nodes_to_BSC_decomp}.
For $j \in [1:\ell]$, each interval $[\theta_{j-1}, \theta_j]$ has the same length $\Delta = (\theta_\ell - \theta_0)/\ell$.
Hence,
\begin{align*}
    \max_{\alpha \in [\alpha_{j-1}, \alpha_j]}
    (\hat{C}(\alpha) - C(\alpha))
    &\overset{(a)}{=} \max_{\theta \in [\theta_{j-1}, \theta_j]} E(\theta) \displaybreak[0]\\
    &\overset{(b)}{\leq} \frac{\log e}{8}\Delta^2 \displaybreak[0]\\
    &\overset{(c)}{\leq} \frac{\pi^2\log e}{32\ell^2},
\end{align*}
where $(a)$ follows from \eqref{eq:E_theta} with $\theta_\mathrm{L} = \theta_{j-1}$ and $\theta_\mathrm{R} = \theta_j$; $(b)$ applies Lemma~\ref{lem:angular_interpolation_error_bound}; and $(c)$ is because $\theta_\ell - \theta_0 \leq \pi/2$.
Together with \eqref{eq:max_linear_approx_error}, this proves \eqref{eq:gamma_contruct}.
In particular, the same construction with $\alpha_0 = 0$ and $\alpha_\ell = 1/2$ also yields \eqref{eq:gamma_contruct}.

\section{Proof of Theorem~\ref{thm:sharp_finite_BSC_decomp}}
\label{sec:sharp_finite_BSC_decomp_proof}

We use the notation from the proof of Theorem~\ref{thm:finite_BSC_decomp} in Appendix~\ref{sec:finite_BSC_decomp_proof}.
The same construction reduces this proof to finding $\ell+1$ angles $\theta_0<\cdots<\theta_\ell$ for which the resulting BSC-decomposition satisfies \eqref{eq:sharp_finite_BSC_asymptotic}.
Unlike the proof of Theorem~\ref{thm:finite_BSC_decomp}, which chooses $\theta_0,\ldots,\theta_\ell$ explicitly, here we randomize some of them and show that there exists a realization yielding \eqref{eq:sharp_finite_BSC_asymptotic}.

Let $\theta_0 := 0$ and $\theta_\ell := \pi/2$.
Then, $\alpha_0 := \alpha(\theta_0) = 0$ and $\alpha_\ell := \alpha(\theta_\ell) = 1/2$ satisfy $\mathrm{supp}(P_A) \subseteq [\alpha_0,\alpha_\ell]$.
We split $[0,\pi/2]$ at $\theta_\mathrm{B} \in (0,\pi/2)$ and treat the two cases differently.
For $[0,\theta_\mathrm{B}]$, we partition it into $\ell_\mathrm{I}$ intervals of equal length $\Delta_\mathrm{I}$, and apply Lemma~\ref{lem:angular_interpolation_error_bound} to bound the linear interpolation error $E(\theta)$ on each interval.
For $[\theta_\mathrm{B},\pi/2]$, we partition it into $\ell_\mathrm{II}$ intervals of length at most $\Delta_\mathrm{II}$, and randomly shift the interior $\ell_\mathrm{II}-1$ partition points.
Applying Lemma~\ref{lem:sharp_regular_interval_bound} to these intervals, we average the upper bound \eqref{eq:sharp_regular_interval_bound} on $E(\theta)$ over the random shift for every fixed $\theta \in [\theta_\mathrm{B},\pi/2]$.

\begin{lem} \label{lem:sharp_regular_interval_bound}
Let $0 < \theta_\mathrm{L} < \theta_\mathrm{R} \leq \pi/2$ and $\Delta := \theta_\mathrm{R} - \theta_\mathrm{L}$.
Fix any $\theta \in [\theta_\mathrm{L},\theta_\mathrm{R}]$.
Define $\Delta_\mathrm{L} := \theta - \theta_\mathrm{L}$ and $\Delta_\mathrm{R} := \theta_\mathrm{R} - \theta$.
Then $E(\theta)$ defined in \eqref{eq:E_theta} satisfies
\begin{equation}
    E(\theta) \leq \frac{\log e}{2}
    \left(
        1 + \frac{\Delta}{\theta_\mathrm{L}} + \left(\frac{\Delta}{\theta_\mathrm{L}}\right)^2
    \right)
    \Delta_\mathrm{L}\Delta_\mathrm{R}.
    \label{eq:sharp_regular_interval_bound}
\end{equation}
\end{lem}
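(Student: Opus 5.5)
The plan is to reuse the Peano kernel representation \eqref{eq:E_integral} from the proof of Lemma~\ref{lem:angular_interpolation_error_bound}. Since $\theta_\mathrm{L}>0$, $E'$ is continuously differentiable on $[\theta_\mathrm{L},\theta_\mathrm{R}]$, so the absolute-continuity issue at $0$ does not arise. The representation gives, for every $\theta\in[\theta_\mathrm{L},\theta_\mathrm{R}]$,
\begin{equation*}
    E(\theta)=\int_{\theta_\mathrm{L}}^{\theta_\mathrm{R}}\Big(\log e-\frac{E'(\phi)}{\tan\phi}\Big)\mathcal{K}(\phi,\theta)\,d\phi .
\end{equation*}
The kernel $\mathcal{K}(\cdot,\theta)$ in \eqref{eq:peano_kernel} is nonnegative, and by \eqref{eq:peano_kernel_integral} it integrates to $\Delta_\mathrm{L}\Delta_\mathrm{R}/2$. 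It therefore suffices to prove the pointwise bound
\begin{equation*}
    -\frac{E'(\phi)}{\tan\phi}\le (\log e)\Big(\frac{\Delta}{\theta_\mathrm{L}}+\Big(\frac{\Delta}{\theta_\mathrm{L}}\Big)^2\Big),\qquad \phi\in(\theta_\mathrm{L},\theta_\mathrm{R}).
\end{equation*}
In fact I expect the simpler bound $(\log e)\Delta/\theta_\mathrm{L}$ to hold, so the quadratic term is only slack.

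For the pointwise bound, I will use \eqref{eq:E_derivative_integral_form}. Dividing it by $\tan\phi$ gives
\begin{equation*}
    -\frac{E'(\phi)}{\tan\phi}=(\log e)\int_{\theta^*}^{\phi}\frac{\cos\phi}{\sin\psi}\,d\psi .
\end{equation*}
I then split into two cases according to the position of $\phi$ relative to $\theta^*$.
\begin{itemize}
\item If $\phi<\theta^*$, the integral runs backwards over a positive integrand, so the right-hand side is nonpositive and the bound is trivial.
\item If $\phi\ge\theta^*$, then every $\psi\in[\theta^*,\phi]$ satisfies $\psi\le\phi$, hence $\cos\phi\le\cos\psi$. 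Consequently
\begin{equation*}
    \frac{\cos\phi}{\sin\psi}\le\cot\psi\le\frac{1}{\psi}\le\frac{1}{\theta_\mathrm{L}},
\end{equation*}
using $\tan\psi\ge\psi$ on $(0,\pi/2]$ and $\psi\ge\theta^*\ge\theta_\mathrm{L}$. Since $\phi-\theta^*\le\Delta$, the integral is at most $(\log e)\Delta/\theta_\mathrm{L}$.
\end{itemize}

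Inserting this into the kernel integral and using $\mathcal{K}\ge 0$ gives
\begin{equation*}
    E(\theta)\le\frac{\log e}{2}\Big(1+\frac{\Delta}{\theta_\mathrm{L}}\Big)\Delta_\mathrm{L}\Delta_\mathrm{R},
\end{equation*}
which is at most the right-hand side of \eqref{eq:sharp_regular_interval_bound}.

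The only delicate points are the sign handling of the integral when $\phi<\theta^*$ and the endpoint behaviour at $\theta_\mathrm{R}=\pi/2$. At $\pi/2$, $\tan\phi\to\infty$ and the expression $E'(\phi)/\tan\phi$ should be read as $E'(\phi)\cot\phi$, which is continuous there, so no singularity arises. I do not expect a real obstacle here. If some subtlety forces a cruder estimate, for example bounding $\cos\phi/\sin\psi$ via $\sin\psi\ge\sin\theta_\mathrm{L}$ together with a Taylor bound on $\cos\phi$, then the quadratic $(\Delta/\theta_\mathrm{L})^2$ term in the statement leaves room to absorb it.
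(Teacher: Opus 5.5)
Your proof is correct, and it actually establishes the stronger bound $E(\theta)\le\frac{\log e}{2}\bigl(1+\Delta/\theta_\mathrm{L}\bigr)\Delta_\mathrm{L}\Delta_\mathrm{R}$.

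The overall structure is the same as the paper's. Both arguments start from the Peano kernel representation \eqref{eq:E_integral} and use that $\mathcal{K}(\cdot,\theta)\ge 0$ has total mass $\Delta_\mathrm{L}\Delta_\mathrm{R}/2$. Both then reduce the lemma to a pointwise bound on $-E'(\phi)/\tan\phi$.

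The difference lies only in that pointwise bound. The paper first proves the uniform estimate \eqref{eq:sharp_error_derivative_bound}, $\max|E'|\le(\log e)\Delta(1+\Delta/\theta_\mathrm{L})$, using $\sin t/\sin\phi\le t/\theta^*$ to the right of $\theta^*$. It then bounds $1/\tan\phi\le 1/\tan\theta_\mathrm{L}\le 1/\theta_\mathrm{L}$ as a separate step. Treating these two factors independently, and ignoring that the term is nonpositive on $[\theta_\mathrm{L},\theta^*]$, is what produces the $(\Delta/\theta_\mathrm{L})^2$ term.

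You instead keep $\cos\phi$ paired with $1/\sin\psi$ inside \eqref{eq:E_derivative_integral_form}, so the integrand is at most $\cot\psi\le 1/\theta_\mathrm{L}$. You also use the sign of the term to the left of $\theta^*$. This is the same comparison the paper already makes in the proof of Lemma~\ref{lem:angular_interpolation_error_bound}, where it shows $-E'(t)/\tan t\le\frac{\log e}{\tan\theta^*}(t-\theta^*)$. Combined with $t-\theta^*\le\Delta$ and $\tan\theta^*\ge\theta^*>\theta_\mathrm{L}$, that inequality gives your estimate in one line.

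So your route is shorter and sharper, while the paper's is a cruder two-step bound that discards this structure. Since $1+x\le 1+x+x^2\le 1/(1-x)$, the next step \eqref{eq:sharp_case_II_asymptotic} and the rest of the proof of Theorem~\ref{thm:sharp_finite_BSC_decomp} go through unchanged with your version. Re-balancing \eqref{eq:sharp_case_balance} with your bound could only improve lower-order constants, not the leading term.
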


\begin{IEEEproof}
Let $\theta^*$ be the maximizer of $E(\theta)$ defined in \eqref{eq:theta_star}.
For $t \in [\theta_\mathrm{L},\theta^*]$, based on \eqref{eq:E_derivative_on_left_side},
\begin{align*}
    0 \leq E'(t) \leq (\log e)(\theta^* - t) \leq (\log e)\Delta.
\end{align*}
Moreover, for $t \in [\theta^*,\theta_\mathrm{R}]$,
\begin{align*}
    0 \leq -E'(t)
    &= (\log e)\int_{\theta^*}^{t}
    \frac{\sin t}{\sin\phi}\,d\phi \displaybreak[0]\\
    &\overset{(a)}{\leq}
    (\log e)(t - \theta^*)\frac{t}{\theta^*}
    \displaybreak[0]\\
    &\overset{(b)}{\leq}
    (\log e)\Delta
    \left(1 + \frac{\Delta}{\theta_\mathrm{L}}\right),
\end{align*}
where $(a)$ uses that $\sin x/x$ is decreasing on $(0,\pi/2]$, which implies $\sin t/\sin\phi \leq t/\phi \leq t/\theta^*$ for $\phi \in [\theta^*,t]$; $(b)$ follows from $t - \theta^* \leq \Delta$ and $t/\theta^* \leq 1 + \Delta/\theta_\mathrm{L}$.
Consequently,
\begin{equation}
    \max_{t\in[\theta_\mathrm{L},\theta_\mathrm{R}]}
    |E'(t)|
    \leq (\log e)\Delta
    \left(1 + \frac{\Delta}{\theta_\mathrm{L}}\right).
    \label{eq:sharp_error_derivative_bound}
\end{equation}

Let $\mathcal{K}$ be the Peano kernel defined in \eqref{eq:peano_kernel}.
By \eqref{eq:E_integral},
\begin{align*}
    E(\theta)
    &= \int_{\theta_\mathrm{L}}^{\theta_\mathrm{R}}
    \left(\log e - \frac{E'(\phi)}{\tan\phi}\right)
    \mathcal{K}(\phi,\theta)\,d\phi
    \displaybreak[0]\\
    &\overset{(c)}{\leq}
    \frac{\log e}{2}\Delta_\mathrm{L}\Delta_\mathrm{R}
    +\frac{\Delta_\mathrm{L}\Delta_\mathrm{R}}{2\tan\theta_\mathrm{L}}
    \max_{t \in [\theta_\mathrm{L},\theta_\mathrm{R}]}|E'(t)|
    \displaybreak[0]\\
    &\overset{(d)}{\leq}
    \frac{\log e}{2}\Delta_\mathrm{L}\Delta_\mathrm{R}
    \left(1 + \frac{\Delta}{\tan\theta_\mathrm{L}}
    \left(1 + \frac{\Delta}{\theta_\mathrm{L}}\right)\right)
    \displaybreak[0]\\
    &\overset{(e)}{\leq}
    \frac{\log e}{2}
    \left(1 + \frac{\Delta}{\theta_\mathrm{L}}
    +\left(\frac{\Delta}{\theta_\mathrm{L}}\right)^2\right)
    \Delta_\mathrm{L}\Delta_\mathrm{R},
\end{align*}
where $(c)$ uses \eqref{eq:peano_kernel_integral} and $\tan\phi \geq \tan\theta_\mathrm{L}$; $(d)$ applies \eqref{eq:sharp_error_derivative_bound}; $(e)$ follows from $\tan\theta_\mathrm{L} \geq \theta_\mathrm{L}$.
\end{IEEEproof}

With $\theta_\mathrm{B}\in(0,\pi/2)$ to be chosen below, define
\begin{equation}
\begin{aligned}
    &\ell_\mathrm{I} := \left\lfloor 2\sqrt{\ell} \right\rfloor, \quad
    \ell_\mathrm{II} := \ell - \ell_\mathrm{I}, \\
    &\Delta_\mathrm{I} := \frac{\theta_\mathrm{B}}{\ell_\mathrm{I}}, \quad
    \Delta_\mathrm{II} := \frac{\pi/2 - \theta_\mathrm{B}}{\ell_\mathrm{II} - 1}.
\end{aligned}
\label{eq:sharp_case_parameters}
\end{equation}
Since $\ell_\mathrm{II} \geq 2$ for $\ell \geq 6$, these parameters are well-defined.

\textbf{Case I:} $[0,\theta_\mathrm{B}]$.
Partition this interval into $\ell_\mathrm{I}$ subintervals of equal length $\Delta_\mathrm{I}$ by setting $\theta_j := j\Delta_\mathrm{I}$ for $j \in [0:\ell_\mathrm{I}]$.
For any $j \in [1:\ell_\mathrm{I}]$ and $\theta \in [\theta_{j-1},\theta_j]$, applying Lemma~\ref{lem:angular_interpolation_error_bound} with $\theta_\mathrm{L} = \theta_{j-1}$ and $\theta_\mathrm{R} = \theta_j$ gives
\begin{equation}
    E(\theta) \leq \frac{\log e}{8}\Delta_\mathrm{I}^2. \label{eq:sharp_case_I_bound}
\end{equation}

\textbf{Case II:} $[\theta_\mathrm{B},\pi/2]$.
Partition this interval into $\ell_\mathrm{II}$ subintervals of length at most $\Delta_\mathrm{II}$ as follows.
Let $U$ be uniformly distributed on $(0,\Delta_\mathrm{II})$ and independent of $A$.
For $j \in [1:\ell_\mathrm{II}-1]$, define the interior partition points by
\begin{equation*}
    \theta_{\ell_\mathrm{I}+j}(U) := \theta_\mathrm{B} + j\Delta_\mathrm{II} - U.
\end{equation*}
Hence, the points $\theta_\mathrm{B} + j\Delta_\mathrm{II}$ are shifted to the left by the same random amount $U$.
For every $u \in (0,\Delta_\mathrm{II})$, we have
\begin{equation*}
    \theta_0 < \cdots < \theta_{\ell_\mathrm{I}}
    < \theta_{\ell_\mathrm{I}+1}(u) < \cdots
    < \theta_{\ell-1}(u) < \theta_\ell.
\end{equation*}
Let $\hat{C}_u$ be the piecewise-linear interpolant of $C$ at their corresponding crossover probabilities.
Based on \eqref{eq:capacity_excess}, the capacity excess for each shift $u$ is $\mathbb{E}_A[\hat{C}_u(A) - C(A)]$.
Since not all shifts $u \in (0,\Delta_\mathrm{II})$ can yield a capacity excess strictly greater than its expectation over $U$, there exists a fixed $u_0 \in (0,\Delta_\mathrm{II})$ such that
\begin{equation}
    \mathbb{E}_A[\hat{C}_{u_0}(A) - C(A)]
    \leq \mathbb{E}_U[\mathbb{E}_A[\hat{C}_U(A) - C(A)]].
    \label{eq:sharp_average_capacity_gap}
\end{equation}
Thus, it suffices to upper bound this expectation over $U$.
For $\theta \in [0,\pi/2]$, write the linear interpolation error for shift $u$ as
\begin{equation*}
    E(u;\theta) := \hat{C}_u(\alpha(\theta)) - C(\alpha(\theta)).
\end{equation*}
Based on the independence of $U$ and $A$,
\begin{align}
    \mathbb{E}_U[\mathbb{E}_A[\hat{C}_U(A) - C(A)]]
    &= \mathbb{E}_A[\mathbb{E}_U[\hat{C}_U(A) - C(A)]] \notag\displaybreak[0]\\
    &\leq \max_{\theta \in [0,\pi/2]} \mathbb{E}_U[E(U;\theta)].
    \label{eq:sharp_averaged_error_bound}
\end{align}
Therefore, it remains to estimate $\mathbb{E}_U[E(U;\theta)]$ for every fixed $\theta \in [0,\pi/2]$.
As the partition of $[0,\theta_\mathrm{B}]$ does not depend on $U$, \eqref{eq:sharp_case_I_bound} already provides an upper bound over this interval.

Fix any $\theta \in [\theta_\mathrm{B},\pi/2]$.
For $u \in (0,\Delta_\mathrm{II})$, write $\theta_{\ell_\mathrm{I}}(u) := \theta_\mathrm{B}$ and $\theta_\ell(u) := \pi/2$, and define
\begin{equation*}
    j(u) := \max\{j \in [\ell_\mathrm{I}:\ell-1] : \theta_j(u) \leq \theta\}.
\end{equation*}
Then $\theta \in [\theta_{j(u)}(u),\theta_{j(u)+1}(u)]$.
Its distances to the left and right endpoints are
\begin{align*}
    \Delta_\mathrm{L}(u)
    &:= \theta - \theta_{j(u)}(u) 
    = \min\left(\theta - \theta_\mathrm{B},\Lambda(u)\right), \\
    \Delta_\mathrm{R}(u)
    &:= \theta_{j(u)+1}(u) - \theta 
    = \min\left(\frac{\pi}{2} - \theta,\Delta_\mathrm{II} - \Lambda(u)\right),
\end{align*}
where
\begin{equation*}
    \Lambda(u)
    := \theta - \theta_\mathrm{B} + u
    - \Delta_\mathrm{II}\left\lfloor \frac{\theta - \theta_\mathrm{B} + u}{\Delta_\mathrm{II}} \right\rfloor
\end{equation*}
is the result of taking $\theta - \theta_\mathrm{B} + u$ modulo $\Delta_\mathrm{II}$.
Together with $U \sim \mathrm{Unif}(0,\Delta_\mathrm{II})$, we have $\Lambda(U) \sim \mathrm{Unif}([0,\Delta_\mathrm{II}))$ and
\begin{equation}
    \mathbb{E}_U[
        \Delta_\mathrm{L}(U)
        \Delta_\mathrm{R}(U)]
    \leq
    \mathbb{E}_U[\Lambda(U)(\Delta_\mathrm{II} - \Lambda(U))] 
    = \frac{\Delta_\mathrm{II}^2}{6}.
    \label{eq:sharp_shifted_distance_product}
\end{equation}
For every $u$, the interval containing $\theta$ has length at most $\Delta_\mathrm{II}$ and left endpoint at least $\theta_\mathrm{B}$.
By Lemma~\ref{lem:sharp_regular_interval_bound} and \eqref{eq:sharp_shifted_distance_product},
\begin{align}
    \mathbb{E}_U[E(U;\theta)]
    &\leq \frac{\log e}{12}\Delta_\mathrm{II}^2
    \left(
        1 + \frac{\Delta_\mathrm{II}}{\theta_\mathrm{B}}
        + \left(\frac{\Delta_\mathrm{II}}{\theta_\mathrm{B}}\right)^2
    \right) \notag\displaybreak[0]\\
    &\leq \frac{\log e}{12}
    \frac{\Delta_\mathrm{II}^2}{1 - \Delta_\mathrm{II}/\theta_\mathrm{B}},
    \quad \text{if } \Delta_\mathrm{II} < \theta_\mathrm{B}.
    \label{eq:sharp_case_II_asymptotic}
\end{align}
The last step uses $1 + x + x^2 \leq 1/(1-x)$ for $0 < x < 1$.

To balance the error bounds \eqref{eq:sharp_case_I_bound} over $[0,\theta_\mathrm{B}]$ and \eqref{eq:sharp_case_II_asymptotic} over $[\theta_\mathrm{B},\pi/2]$, we choose $\theta_\mathrm{B}$ to satisfy
\begin{equation}
    \frac{\log e}{8}\Delta_\mathrm{I}^2 = \frac{\log e}{12}
    \frac{\Delta_\mathrm{II}^2}{1 - \Delta_\mathrm{II}/\theta_\mathrm{B}}.
    \label{eq:sharp_case_balance}
\end{equation}
Solving this equation, we obtain
\begin{equation}
    \theta_\mathrm{B} := \frac{\pi}{2}
    \frac{1 + \sqrt{1 + \frac{8}{3}\ell_\mathrm{I}^2}}
    {2(\ell - \ell_\mathrm{I} - 1) + 1 + \sqrt{1 + \frac{8}{3}\ell_\mathrm{I}^2}} \in (0,\pi/2),
    \label{eq:sharp_boundary_angle}
\end{equation}
which satisfies $\Delta_\mathrm{II}/\theta_\mathrm{B} < 1$, so that \eqref{eq:sharp_case_II_asymptotic} is valid.
It remains to show that the common value in \eqref{eq:sharp_case_balance} is upper bounded by 
\begin{equation*}
    \tilde{\gamma}(\ell) := \frac{\pi^2\log e}{48\left(\ell - \frac{7}{10}\sqrt{\ell} - \frac{2}{5}\right)^2}.
\end{equation*}
This is equivalent to
\begin{equation*}
    \frac{\pi}{\sqrt{6}\Delta_\mathrm{I}} \geq \ell - \frac{7}{10}\sqrt{\ell} - \frac{2}{5}.
\end{equation*}
Based on \eqref{eq:sharp_case_parameters} and \eqref{eq:sharp_boundary_angle},
\begin{align*}
    \frac{\pi}{\sqrt{6}\Delta_\mathrm{I}} 
    &\overset{(a)}{=} \sqrt{\frac{2}{3}}\ell_\mathrm{I} + (\ell - \ell_\mathrm{I} - 1)\left(\sqrt{\frac{3}{8\ell_\mathrm{I}^2} + 1} - \frac{\sqrt{6}}{4\ell_\mathrm{I}}\right)
    \displaybreak[0]\\
    &\overset{(b)}{\geq} \ell - \ell_\mathrm{I} - 1
    + \sqrt{\frac{2}{3}}\left(\ell_\mathrm{I} - \frac{3(\ell - \ell_\mathrm{I} - 1)}{4\ell_\mathrm{I}}\right)
    \displaybreak[0]\\
    &\overset{(c)}{\geq} \ell - \frac{1}{5}\left(\ell_\mathrm{I} + \frac{3(\ell - 1)}{\ell_\mathrm{I}}\right) - \frac{2}{5}
    \displaybreak[0]\\
    &\overset{(d)}{\geq} \ell - \frac{7}{10}\sqrt{\ell} - \frac{2}{5},
\end{align*}
where $(a)$ is obtained by rationalizing the denominator; 
$(b)$ uses $\sqrt{3/(8\ell_\mathrm{I}^2) + 1} > 1$; 
$(c)$ follows from 
\begin{equation*}
    \ell_\mathrm{I} - \frac{3(\ell - \ell_\mathrm{I} - 1)}{4\ell_\mathrm{I}} > 0
\end{equation*}
and $\sqrt{2/3} > 4/5$;
$(d)$ uses
\begin{equation*}
    \ell_\mathrm{I} + \frac{3(\ell - 1)}{\ell_\mathrm{I}}
    \leq \ell_\mathrm{I} + \frac{3\ell}{\ell_\mathrm{I}}
    \leq \frac{7}{2}\sqrt{\ell}.
\end{equation*}
The above inequality follows by expanding
\begin{equation*}
    \Big(\ell_\mathrm{I} - \frac{3\sqrt{\ell}}{2}\Big)\left(\ell_\mathrm{I} - 2\sqrt{\ell}\right) \leq 0,
\end{equation*}
which holds for $\ell \geq 6$.
Together with \eqref{eq:sharp_case_I_bound}, \eqref{eq:sharp_case_II_asymptotic}, and \eqref{eq:sharp_case_balance}, this proves $\mathbb{E}_U[E(U;\theta)] \leq \tilde{\gamma}(\ell)$ for every $\theta \in [0,\pi/2]$.

\section{Proof of Proposition~\ref{thm:uniform_theta_lower_bound}} \label{sec:uniform_theta_lower_bound}

Throughout this appendix, we consider a BISOC $P_{Y|X}$ with $A := \alpha(P_{Y|X,|Y|}(\cdot|\cdot,|Y|))$ satisfying
\begin{equation} \label{eq:uniform_Theta}
    \Theta := \arccos(1 - 2A) \sim \mathrm{Unif}([0,\pi/2]).
\end{equation}
Such a channel can be obtained by uniformly mixing BSCs with crossover probability $\alpha(\theta)$ and output coefficient $\beta(\theta)$ for $\theta \in [0,\pi/2]$, where $\beta(\theta) \in (0,\infty)$ is strictly increasing.
We first show that the capacity excess for any finite BSC-decomposition of $P_{Y|X}$ is lower bounded by the expected error of piecewise-linear interpolation of $C$.
We then derive a lower bound on the expected interpolation error that holds for every choice of interpolation points.

\begin{lem} \label{thm:BSC_decomp_interpolation_lower_bound}
Let $(P_W,P_{Y|X,W})$ be any BSC-decomposition of $P_{Y|X}$, and $\hat{A} := \alpha(P_{Y|X,W}(\cdot|\cdot,W))$ where $W \sim P_W$.
Then
\begin{equation} \label{eq:conditional_crossover_comparison}
    \mathbb{E}[\hat{A} \,|\, |Y|] \leq A
\end{equation}
almost surely.
If $\mathrm{supp}(P_{\hat{A}}) \subseteq \{\alpha_0,\ldots,\alpha_\ell\}$ for some $0 = \alpha_0 < \alpha_1 < \cdots < \alpha_{\ell-1} < \alpha_\ell = 1/2$, then
\begin{equation} \label{eq:arbitrary_decomp_interpolation_bound}
    I(X;Y\,|\, W) - I(X;Y)
    \geq \mathbb{E}[\hat{C}(A) - C(A)],
\end{equation}
where $\hat{C}$ is the piecewise-linear interpolant of $C$ at $\alpha_0,\ldots,\alpha_\ell$.
\end{lem}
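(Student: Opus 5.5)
The plan is to exploit the specific structure of the channel in \eqref{eq:uniform_Theta}. For this channel the output magnitude $|Y|$ determines the component BSC of the canonical decomposition, because $\beta(\theta)$ is strictly increasing and positive. So $A$ is a function of $|Y|$, and $Y\neq 0$ almost surely.

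\textbf{Step 1: $|Y|$ is a function of $W$.} Given an arbitrary BSC-decomposition $(P_W,P_{Y|X,W})$, generate $(X,W,Y)\sim P_XP_WP_{Y|X,W}$. By Definition~\ref{def:BSC}, $Y\in\{\pm\beta_W\}$ almost surely, where $\beta_W:=\beta(P_{Y|X,W}(\cdot|\cdot,W))$. Hence $|Y|=|\beta_W|$ almost surely, so $|Y|$ is a function of $W$. Components with $\beta_W=0$ would place mass at $Y=0$, which has probability zero for our channel, so they can be ignored.

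\textbf{Step 2: the comparison \eqref{eq:conditional_crossover_comparison}.} Define $\rho_W$ as the probability that the component outputs $-|\beta_W|$ given $X=1$. Then $\rho_W=\alpha_W$ if $\beta_W>0$, and $\rho_W=1-\alpha_W$ if $\beta_W<0$ (the argmax may select the negative output). In either case $\rho_W\ge \alpha_W=\hat A$, since $\alpha_W\le 1/2$. Because $W\perp X$ and $|Y|$ is a function of $W$, conditioning on $|Y|$ and averaging over $W$ reproduces the canonical component. That component is a channel on $\{\pm|Y|\}$ with flip probability $A$ relative to $+|Y|$, using $\beta(\theta)>0$. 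This gives
\[
A=\mathbb{E}[\rho_W\,|\,|Y|]\ge \mathbb{E}[\hat A\,|\,|Y|]
\]
almost surely. This is the step I expect to be the main obstacle. The identity has to be justified measure-theoretically by integrating $P_{Y|X,W}(S|1,W)$ over events $\{|Y|\in S_2\}$, as in the proof of Theorem~\ref{thm:cano_BSC_decomp}, and comparing with the canonical decomposition.

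\textbf{Step 3: the bound \eqref{eq:arbitrary_decomp_interpolation_bound}.} Since $\hat A\in\{\alpha_0,\ldots,\alpha_\ell\}$ almost surely, $C(\hat A)=\hat C(\hat A)$, so by \eqref{eq:weighted_average_capacity}
\[
I(X;Y\,|\,W)=\mathbb{E}[\hat C(\hat A)].
\]
The function $C$ is convex and decreasing on $[0,1/2]$. Its piecewise-linear interpolant $\hat C$, with nodes including the endpoints $0$ and $1/2$, is therefore also convex and nonincreasing on $[0,1/2]$. Conditional Jensen followed by monotonicity and Step 2 give
\[
\mathbb{E}[\hat C(\hat A)\,|\,|Y|]\ge \hat C(\mathbb{E}[\hat A\,|\,|Y|])\ge \hat C(A).
\]
Taking expectations, and using $I(X;Y)=\mathbb{E}[C(A)]$ from \eqref{eq:capacity_identity}, yields \eqref{eq:arbitrary_decomp_interpolation_bound}.
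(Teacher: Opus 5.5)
Your proposal is correct and follows essentially the same route as the paper: both use that $|Y|$ is almost surely a function of $W$ with $W\perp X$ to write the canonical component as the $P_{W\,|\,|Y|}$-average of the decomposition's components, bound each $\hat{A}$ pointwise, and finish with conditional Jensen and the monotonicity of $\hat{C}$. The only difference is that you fix the reference output $+|Y|$ using $\beta(\theta)>0$, whereas the paper takes $\max_{z\in\{\pm y\}}$ and uses that an integral of a maximum dominates the maximum of the integrals; the paper's argument therefore does not need the sign of the output coefficients and holds for any BISOC.
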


\begin{IEEEproof}
For $P_{|Y|}$-almost every $y > 0$,
\begin{align*}
    &\quad~\mathbb{E}[\hat{A}\,|\, |Y| = y] \\*
    &= \int \mathbb{E}[\hat{A}\,|\, W = w, |Y| = y]\,
    dP_{W\,\,|\,\, |Y|}(w|y) \displaybreak[0]\\
    &\overset{(a)}{=} \int
    \alpha(P_{Y|X,W}(\cdot|\cdot,w))\,
    dP_{W\,\,|\,\, |Y|}(w|y) \displaybreak[0]\\
    &= \int 
    (1 - \max_{z \in \mathbb{R}} P_{Y|X,W}(\{z\}\,|\, 1,w))\,
    dP_{W\,\,|\,\, |Y|}(w|y)
    \displaybreak[0]\\
    &\overset{(b)}{=} 1 - \int 
    \max_{z \in \{\pm y\}} P_{Y|X,W}(\{z\}\,|\, 1,w)\,
    dP_{W\,\,|\,\, |Y|}(w|y) \displaybreak[0]\\
    &\leq 1 - \max_{z \in \{\pm y\}} \int 
     P_{Y|X,W}(\{z\}\,|\, 1,w)\,
    dP_{W\,\,|\,\, |Y|}(w|y) \displaybreak[0]\\
    &\overset{(c)}{=} 1 - \max_{z \in \{\pm y\}} P_{Y|X,|Y|}(\{z\}\,|\, 1,y) 
    \displaybreak[0]\\
    &= \alpha(P_{Y|X,|Y|}(\cdot|\cdot,y)),
\end{align*}
where $(a)$ follows from the definition of $\hat{A}$;
$(b)$ holds because $P_{Y|X,W}(\cdot \,|\, 1, w)$ is concentrated on $\{\pm y\}$ for $P_{W\,\,|\,\, |Y|}(\cdot \,|\, y)$-almost every $w$;
$(c)$ uses that $|Y|$ is almost surely a function of $W$ by Definitions~\ref{def:BSC} and \ref{def:BSC_decomp}, and $W$ is independent of $X$.
On the event $|Y| = 0$, both $A$ and $\hat{A}$ are equal to $1/2$ almost surely.
Therefore, \eqref{eq:conditional_crossover_comparison} holds almost surely.

Since $C$ is convex and decreasing on $[0,1/2]$, so is its piecewise-linear interpolant $\hat{C}$.
Hence, we have
\begin{align*}
    I(X;Y\,|\, W)
    &\overset{(a)}{=} \mathbb{E}[C(\hat{A})] \displaybreak[0]\\
    &\overset{(b)}{=} \mathbb{E}[\hat{C}(\hat{A})] \displaybreak[0]\\
    &= \mathbb{E}\big[\mathbb{E}[\hat{C}(\hat{A}) \,|\, |Y|]\big] \displaybreak[0]\\
    &\overset{(c)}{\geq}
    \mathbb{E}\big[\hat{C}\big(\mathbb{E}[\hat{A}\,|\, |Y|]\big)\big] \displaybreak[0]\\
    &\overset{(d)}{\geq} \mathbb{E}[\hat{C}(A)],
\end{align*}
where $(a)$ uses \eqref{eq:weighted_average_capacity};
$(b)$ holds because $\hat{C}(\alpha_j) = C(\alpha_j)$ for all $j \in [0:\ell]$ and $\mathrm{supp}(P_{\hat{A}}) \subseteq \{\alpha_0,\ldots,\alpha_\ell\}$;
$(c)$ applies Jensen's inequality conditional on $|Y|$;
$(d)$ follows from \eqref{eq:conditional_crossover_comparison} and the fact that $\hat{C}$ is decreasing.
Together with \eqref{eq:capacity_identity}, this proves \eqref{eq:arbitrary_decomp_interpolation_bound}.
\end{IEEEproof}

Equality in \eqref{eq:arbitrary_decomp_interpolation_bound} is attained by the BSC-decomposition constructed in Lemma~\ref{thm:alpha_nodes_to_BSC_decomp}, which is therefore optimal among all BSC-decompositions supported on $\{\alpha_0,\ldots,\alpha_\ell\}$.
Since $\Theta \sim \mathrm{Unif}([0,\pi/2])$, the expectation in \eqref{eq:arbitrary_decomp_interpolation_bound} is proportional to the integral of the interpolation error over $[0,\pi/2]$.
The following lemma provides a lower bound on the integral over an arbitrary interval $[\theta_\mathrm{L},\theta_\mathrm{R}] \subseteq [0,\pi/2]$.

\begin{lem} \label{thm:integrated_error_lower_bound}
Let $0 \leq \theta_\mathrm{L} < \theta_\mathrm{R} \leq \pi/2$ and $\Delta := \theta_\mathrm{R} - \theta_\mathrm{L}$.
Then $E(\theta)$ defined in \eqref{eq:E_theta} satisfies
\begin{equation}
    \int_{\theta_\mathrm{L}}^{\theta_\mathrm{R}}E(\theta)\,d\theta
    \geq \frac{\log e}{12}
    \frac{\Delta^3}{\left(1 + \dfrac{\Delta}{3\theta_\mathrm{R}}\right)^2}.
    \label{eq:integrated_error_lower_bound_simple}
\end{equation}
\end{lem}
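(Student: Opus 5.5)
The plan is to reuse the Peano-kernel representation from the proof of Lemma~\ref{lem:angular_interpolation_error_bound}, integrate it over $\theta$, and lower-bound the resulting weighted integral of $-E''$. By \eqref{eq:E_integral}, $E(\theta)=\int_{\theta_\mathrm{L}}^{\theta_\mathrm{R}} g(\phi)\mathcal{K}(\phi,\theta)\,d\phi$ with $g:=\log e - E'/\tan$. Since $\int_{\theta_\mathrm{L}}^{\theta_\mathrm{R}}\mathcal{K}(\phi,\theta)\,d\theta=(\phi-\theta_\mathrm{L})(\theta_\mathrm{R}-\phi)/2$, Fubini gives
\begin{equation*}
    \int_{\theta_\mathrm{L}}^{\theta_\mathrm{R}}E(\theta)\,d\theta=\frac12\int_{\theta_\mathrm{L}}^{\theta_\mathrm{R}} g(\phi)(\phi-\theta_\mathrm{L})(\theta_\mathrm{R}-\phi)\,d\phi .
\end{equation*}
The absolute-continuity check for $\theta_\mathrm{L}=0$ is already done in that proof. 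The constant part $g\equiv\log e$ contributes exactly $(\log e)\Delta^3/12$. Hence the whole task is to show that the correction term $-E'/\tan\phi$ costs at most the factor $(1+\Delta/(3\theta_\mathrm{R}))^{-2}$.

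Next I would make $g$ explicit. From \eqref{eq:E_derivative_integral_form} and $\int d\psi/\sin\psi=\ln\tan(\psi/2)$,
\begin{equation*}
    \frac{g(\phi)}{\log e}=1+\cos\phi\,\ln\frac{\tan(\phi/2)}{\tan(\theta^*/2)} .
\end{equation*}
This is negative-leaning left of $\theta^*$ and positive right of it. Here $\theta^*$ is determined implicitly by $E(\theta_\mathrm{L})=E(\theta_\mathrm{R})=0$. Equivalently, $\int_{\theta_\mathrm{L}}^{\theta_\mathrm{R}} g(\phi)(\phi-\theta_\mathrm{L})\,d\phi=\Delta\,E'(\theta_\mathrm{R})$-type moment identities hold, which I would derive by integrating $E''$ against linear functions. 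I would not solve for $\theta^*$. Instead I would lower-bound $g$ pointwise by a function of the form $\log e\,(1+c(\phi-\theta^*))$ or a ratio form like $\log e\,(\phi/\theta^*)^{\kappa}$. For this I would use $\ln x\ge 1-1/x$, the facts that $\tan(\phi/2)/\phi$ is increasing and $\cos\phi\le 1$, and the constraint $\theta^*\le\theta_\mathrm{R}$.

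With such a surrogate, the weighted integral becomes an elementary function of $\theta^*$. I would then take the worst case over $\theta^*\in[\theta_\mathrm{L},\theta_\mathrm{R}]$. Alternatively, I would eliminate $\theta^*$ using the zero-boundary constraint, as in the argument that showed $\Delta_\mathrm{R}^*\le\Delta_\mathrm{L}^*$. The expected shape is that the surrogate effectively replaces the $\theta$-variable by one whose local scale is distorted by at most $1+\Delta/(3\theta_\mathrm{R})$. I would then apply Cauchy--Schwarz or Jensen to the identity $\int E'=0$. This should produce the squared denominator, with the $1/3$ arising as the normalized first moment of the weight $(\phi-\theta_\mathrm{L})(\theta_\mathrm{R}-\phi)$ relative to $\theta_\mathrm{R}$.

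The main obstacle is the step left of $\theta^*$ when $\theta_\mathrm{L}$ is near $0$. There the logarithm is unbounded below, so a naive pointwise bound on $g$ fails. The first thing I would try is to compare $E$ directly with the quadratic $q(\theta)=\frac{\log e}{2}(\theta-\theta_\mathrm{L})(\theta_\mathrm{R}-\theta)$ rescaled by the change of variable $\theta\mapsto\theta_\mathrm{R}\tan(\theta/2)/\tan(\theta_\mathrm{R}/2)$. This change is monotone and distorts lengths by at most a factor $1+O(\Delta/\theta_\mathrm{R})$. The idea is to bound $\int E$ from below via a convexity or Jensen step in that variable, rather than pointwise in $\phi$.
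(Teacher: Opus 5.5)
\textbf{Verdict: there is a genuine gap.} Your setup is correct, but the one concrete route you commit to fails, and the steps meant to replace it are not specified.

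\textbf{What is correct.} The Fubini identity $\int_{\theta_\mathrm{L}}^{\theta_\mathrm{R}}E\,d\theta=\frac12\int_{\theta_\mathrm{L}}^{\theta_\mathrm{R}}g(\phi)(\phi-\theta_\mathrm{L})(\theta_\mathrm{R}-\phi)\,d\phi$ with $g=-E''$ holds. So does the value $\frac{\log e}{12}\Delta^3$ of the constant part, and so does the closed form $g/\log e=1+\cos\phi\,\ln(\tan(\phi/2)/\tan(\theta^*/2))$.

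\textbf{Why the committed route fails.} Since $\cos\phi\ge0$, $g$ is pointwise nonincreasing in $\theta^*$. So any surrogate $\tilde g\le g$, followed by a worst case over $\theta^*\in[\theta_\mathrm{L},\theta_\mathrm{R}]$, yields at most the value of the functional at $\theta^*=\theta_\mathrm{R}$. Take $\theta_\mathrm{L}=0$ and $\theta_\mathrm{R}\to0$. Then $g/\log e\approx1+\ln(\phi/\theta^*)$, and at $\theta^*=\theta_\mathrm{R}$ you get, to leading order, $\frac{\log e}{2}\theta_\mathrm{R}^3\int_0^1(1+\ln s)s(1-s)\,ds=\frac{\log e}{72}\theta_\mathrm{R}^3$. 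The lemma demands $\frac{3\log e}{64}\theta_\mathrm{R}^3$. For comparison, with the actual $\theta^*$ (which $E(\theta_\mathrm{R})=0$ forces to be $e^{-1/2}\theta_\mathrm{R}$ in this limit), the integral is $\frac{\log e}{18}\theta_\mathrm{R}^3$.

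So in your formulation the location of $\theta^*$ is indispensable, and the proposal has no mechanism to supply it:
\begin{itemize}
\item The fact $\Delta_\mathrm{R}^*\le\Delta_\mathrm{L}^*$ only puts $\theta^*$ in the right half, which still contains $\theta_\mathrm{R}$.
\item The ``Cauchy--Schwarz or Jensen on $\int E'=0$'' step is not specified.
\item The substitution $\theta\mapsto\theta_\mathrm{R}\tan(\theta/2)/\tan(\theta_\mathrm{R}/2)$ is asymptotically the identity in exactly the regime above ($\theta_\mathrm{L}=0$, small $\theta_\mathrm{R}$). It therefore contributes nothing there, and the convexity step meant to do the work is left open.
\item Your heuristic for the $\frac13$ does not hold up: the weight $(\phi-\theta_\mathrm{L})(\theta_\mathrm{R}-\phi)$ has its first moment at the midpoint.
\end{itemize}

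\textbf{The missing idea: avoid $E''$ altogether.} Split at $\theta^*$ and integrate by parts once toward $\theta^*$, using only $E(\theta_\mathrm{R})=0$ on the right and $E(\theta_\mathrm{L})=0$ on the left. This gives $\int_{\theta^*}^{\theta_\mathrm{R}}E=-\int_{\theta^*}^{\theta_\mathrm{R}}(\theta-\theta^*)E'(\theta)\,d\theta$ and $\int_{\theta_\mathrm{L}}^{\theta^*}E=\int_{\theta_\mathrm{L}}^{\theta^*}(\theta^*-\theta)E'(\theta)\,d\theta$. Here $E'(t)=(\log e)\int_t^{\theta^*}\frac{\sin t}{\sin\phi}\,d\phi$ has a bounded integrand, so the logarithmic singularity never appears.
\begin{itemize}
\item On the right, $\sin t/\sin\phi\ge1$ gives $\frac{\log e}{3}(\Delta_\mathrm{R}^*)^3$.
\item On the left, $\frac{\sin t}{\sin\phi}\ge\frac{t}{\phi}\ge1-\frac{\Delta_\mathrm{L}^*}{\theta^*}\frac{\phi-t}{\phi-\theta_\mathrm{L}}$ gives $\frac{\log e}{3}(\Delta_\mathrm{L}^*)^3(1-\frac{7\Delta_\mathrm{L}^*}{12\theta^*})$. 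By convexity and $\Delta_\mathrm{L}^*/\theta^*\le\Delta/\theta_\mathrm{R}$, this is at least $\frac{\log e}{3}(\Delta_\mathrm{L}^*)^3(1+\frac{2\Delta}{3\theta_\mathrm{R}})^{-2}$.
\end{itemize}
Each one-sided bound is a cubic in its own length, with a constant that does not depend on where $\theta^*$ lies. So you never need to locate $\theta^*$. H\"older's inequality $\frac{a^3}{c^2}+b^3\ge\frac{(a+b)^3}{(1+c)^2}$, applied over all splits $a+b=\Delta$ with $c=1+\frac{2\Delta}{3\theta_\mathrm{R}}$, yields exactly $\frac{\log e}{12}\Delta^3(1+\frac{\Delta}{3\theta_\mathrm{R}})^{-2}$. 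The $\frac13$ is simply half of the $\frac23$ from the left-side estimate.
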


\begin{IEEEproof}
Let $\theta^*$ be the maximizer of $E(\theta)$ defined in \eqref{eq:theta_star}. 
Define $\Delta_\mathrm{L}^* := \theta^* - \theta_\mathrm{L}$ and $\Delta_\mathrm{R}^* := \theta_\mathrm{R} - \theta^*$.
We first estimate the integrals of $E(\theta)$ on either side of $\theta^*$ and then combine them using H\"older's inequality.

On $[\theta^*,\theta_\mathrm{R}]$, integrating by parts, we obtain
\begin{align}
    &\quad~\int_{\theta^*}^{\theta_\mathrm{R}}E(\theta)\,d\theta \notag\\*
    &= \left[(\theta - \theta^*)E(\theta)\right]_{\theta^*}^{\theta_\mathrm{R}}
    - \int_{\theta^*}^{\theta_\mathrm{R}}
    (\theta - \theta^*)E'(\theta)\,d\theta \notag\displaybreak[0]\\
    &\overset{(a)}{=} -\int_{\theta^*}^{\theta_\mathrm{R}}
    (\theta - \theta^*)E'(\theta)\,d\theta \notag\displaybreak[0]\\
    &\overset{(b)}{\geq} (\log e)\int_{\theta^*}^{\theta_\mathrm{R}}
    (\theta - \theta^*)^2\,d\theta \notag\displaybreak[0]\\
    &= \frac{\log e}{3}(\Delta_\mathrm{R}^*)^3,
    \label{eq:integrated_error_right_bound}
\end{align}
where $(a)$ follows from $E(\theta_\mathrm{R}) = 0$; $(b)$ applies \eqref{eq:E_derivative_on_right_side}.

On $[\theta_\mathrm{L},\theta^*]$, we first derive a lower bound on $\sin\theta/\sin\phi$ in \eqref{eq:E_derivative_integral_form}.
For $\theta_\mathrm{L} < \theta \leq \phi \leq \theta^*$,
\begin{align}
    \frac{\sin\theta}{\sin\phi}
    &\overset{(a)}{\geq} \frac{\theta}{\phi} \notag\displaybreak[0]\\
    &= 1 - \frac{\phi - \theta_\mathrm{L}}{\phi}
    \frac{\phi - \theta}{\phi - \theta_\mathrm{L}} \notag\displaybreak[0]\\
    &\overset{(b)}{\geq} 1 - \frac{\Delta_\mathrm{L}^*}{\theta^*}
    \frac{\phi - \theta}{\phi - \theta_\mathrm{L}},
    \label{eq:integrated_error_sine_ratio}
\end{align}
where $(a)$ uses that $\sin x/x$ is decreasing on $(0,\pi/2]$;
$(b)$ uses $(\phi - \theta)/(\phi - \theta_\mathrm{L}) \geq 0$ and
\begin{equation*}
    \frac{\phi - \theta_\mathrm{L}}{\phi}
    \leq 1 - \frac{\theta_\mathrm{L}}{\phi}
    \leq 1 - \frac{\theta_\mathrm{L}}{\theta^*}
    = \frac{\Delta_\mathrm{L}^*}{\theta^*}.
\end{equation*}
Since $E(\theta)$ is continuously differentiable on $[\theta_\mathrm{L},\theta_\mathrm{R}]$, integration by parts is valid even when $\theta_\mathrm{L} = 0$.
Hence, we have
\begin{align}
    &\quad~\int_{\theta_\mathrm{L}}^{\theta^*}E(\theta)\,d\theta \notag\\
    &\overset{(a)}{=} \int_{\theta_\mathrm{L}}^{\theta^*}
    (\theta^* - \theta)E'(\theta)\,d\theta \notag\displaybreak[0]\\
    &\overset{(b)}{=} (\log e)\int_{\theta_\mathrm{L}}^{\theta^*}
    (\theta^* - \theta)\int_\theta^{\theta^*}
    \frac{\sin\theta}{\sin\phi}\,d\phi\,d\theta \notag\displaybreak[0]\\
    &\overset{(c)}{\geq} \frac{\log e}{3}(\Delta_\mathrm{L}^*)^3 - \frac{\log e}{\theta^*}\Delta_\mathrm{L}^*
    \int_{\theta_\mathrm{L}}^{\theta^*}(\theta^* - \theta)
    \int_\theta^{\theta^*}
    \frac{\phi - \theta}{\phi - \theta_\mathrm{L}}\,d\phi\,d\theta \notag\displaybreak[0]\\
    &= \frac{\log e}{3}(\Delta_\mathrm{L}^*)^3
    \left(1 - \frac{7\Delta_\mathrm{L}^*}{12\theta^*}\right),
    \label{eq:integrated_error_left_bound}
\end{align}
where $(a)$ uses integration by parts and $E(\theta_\mathrm{L}) = 0$;
$(b)$ uses \eqref{eq:E_derivative_integral_form};
$(c)$ applies \eqref{eq:integrated_error_sine_ratio}.
The last equality is obtained by interchanging the order of integration as follows:
\begin{align*}
    &\quad~\int_{\theta_\mathrm{L}}^{\theta^*}(\theta^* - \theta)
    \int_\theta^{\theta^*}
    \frac{\phi - \theta}{\phi - \theta_\mathrm{L}}\,d\phi\,d\theta \\*
    &= \int_{\theta_\mathrm{L}}^{\theta^*}
    \frac{1}{\phi - \theta_\mathrm{L}}
    \int_{\theta_\mathrm{L}}^\phi
    (\theta^* - \theta)(\phi - \theta)\,d\theta\,d\phi \displaybreak[0]\\
    &= \int_{\theta_\mathrm{L}}^{\theta^*}
    \left(
        \frac{\Delta_\mathrm{L}^*}{2}
        - \frac{\phi - \theta_\mathrm{L}}{6}
    \right)(\phi - \theta_\mathrm{L})\,d\phi \displaybreak[0]\\
    &= \frac{7}{36}(\Delta_\mathrm{L}^*)^3,
\end{align*}
For $0 \leq x \leq 1$, the convexity of $(1 + 2x/3)^{-2}$ gives
\begin{align}
    \left(1 + \frac{2x}{3}\right)^{-2}
    &\leq (1 - x)\left(1 + \frac{2 \cdot 0}{3}\right)^{-2}
    + x\left(1 + \frac{2 \cdot 1}{3}\right)^{-2} \notag\displaybreak[0]\\
    &= (1 - x) + \frac{9x}{25} \notag\displaybreak[0]\\
    &\leq 1 - \frac{7x}{12}.
    \label{eq:integrated_error_loss_bound}
\end{align}
Since $\Delta_\mathrm{L}^*/\theta^* = 1 - \theta_\mathrm{L}/\theta^* \leq 1 - \theta_\mathrm{L}/\theta_\mathrm{R} = \Delta/\theta_\mathrm{R} \leq 1$,
\eqref{eq:integrated_error_left_bound} and \eqref{eq:integrated_error_loss_bound} yield
\begin{equation}
    \int_{\theta_\mathrm{L}}^{\theta^*}E(\theta)\,d\theta
    \geq \frac{\log e}{3}(\Delta_\mathrm{L}^*)^3
    \left(1 + \frac{2\Delta}{3\theta_\mathrm{R}}\right)^{-2}.
    \label{eq:integrated_error_left_bound_simple}
\end{equation}

Combining \eqref{eq:integrated_error_right_bound} and \eqref{eq:integrated_error_left_bound_simple}, we obtain
\begin{align*}
    &\quad~\int_{\theta_\mathrm{L}}^{\theta_\mathrm{R}}E(\theta)\,d\theta \\*
    &\geq \frac{\log e}{3}
    \left(
        (\Delta_\mathrm{L}^*)^3
        \left(1 + \frac{2\Delta}{3\theta_\mathrm{R}}\right)^{-2}
        + (\Delta_\mathrm{R}^*)^3
    \right) \displaybreak[0]\\
    &\overset{(a)}{\geq} \frac{\log e}{3}
    \frac{(\Delta_\mathrm{L}^* + \Delta_\mathrm{R}^*)^3}
    {\left(2 + \dfrac{2\Delta}{3\theta_\mathrm{R}}\right)^2} \displaybreak[0]\\
    &= \frac{\log e}{12}
    \frac{\Delta^3}{\left(1 + \dfrac{\Delta}{3\theta_\mathrm{R}}\right)^2},
\end{align*}
where $(a)$ is obtained by applying H\"older's inequality with exponents $3$ and $3/2$ as follows:
\begin{align*}
    &\quad~(\Delta_\mathrm{L}^* + \Delta_\mathrm{R}^*)^3 \\*
    &= \left(
        \Delta_\mathrm{L}^*
        \left(1 + \frac{2\Delta}{3\theta_\mathrm{R}}\right)^{-2/3}
        \left(1 + \frac{2\Delta}{3\theta_\mathrm{R}}\right)^{2/3}
        + \Delta_\mathrm{R}^* \cdot 1
    \right)^3 \displaybreak[0]\\
    &\leq \left(
        (\Delta_\mathrm{L}^*)^3
        \left(1 + \frac{2\Delta}{3\theta_\mathrm{R}}\right)^{-2}
        + (\Delta_\mathrm{R}^*)^3
    \right)
    \left(2 + \frac{2\Delta}{3\theta_\mathrm{R}}\right)^2.
\end{align*}
\end{IEEEproof}

To obtain a lower bound that holds for every choice of interpolation points, we sum \eqref{eq:integrated_error_lower_bound_simple} over the subintervals of $[0,\pi/2]$.
The following lemma shows that the resulting sum is minimized only when the interval lengths are non-increasing.

\begin{lem} \label{thm:optimal_interval_ordering}
Let $\ell \geq 1$ be an integer, and $\theta_0,\ldots,\theta_\ell$ be $\ell+1$ angles that satisfy
\begin{equation*}
    0 = \theta_0 < \theta_1 < \cdots < \theta_{\ell-1} < \theta_\ell = \frac{\pi}{2}.
\end{equation*}
For $j \in [1:\ell]$, define $\Delta_j := \theta_j - \theta_{j-1}$.
Then
\begin{equation} \label{eq:interval_length_objective}
    \sum_{j=1}^{\ell}
    \frac{\Delta_j^3}
    {\left(1+\dfrac{\Delta_j}{3\theta_j}\right)^2}
\end{equation}
is minimized over $\theta_1,\ldots,\theta_{\ell-1}$ only when $
\Delta_1 \geq \cdots \geq \Delta_\ell$.
\end{lem}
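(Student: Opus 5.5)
The plan is an exchange argument: any configuration with an adjacent pair of increasing lengths can be strictly improved by swapping those two lengths, so it cannot be a minimizer.

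\textbf{Setup.} Write the summand as a function of an interval length $x$ and its right endpoint $\theta$:
\begin{equation*}
    f(x,\theta) := \frac{x^3}{\left(1+\frac{x}{3\theta}\right)^2} = \frac{9x^3\theta^2}{(3\theta+x)^2}.
\end{equation*}
Suppose $\theta_0,\ldots,\theta_\ell$ is a minimizer, but $\Delta_j<\Delta_{j+1}$ for some $j\in[1:\ell-1]$. Put $a:=\Delta_j$, $b:=\Delta_{j+1}$ and $T:=\theta_{j+1}$. Replace the single point $\theta_j=T-b$ by $\theta_j':=\theta_{j-1}+b=T-a$, and leave every other point unchanged. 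The new points are still strictly increasing with the same endpoints. Only the $j$-th and $(j+1)$-th terms of \eqref{eq:interval_length_objective} change. The old value minus the new value equals
\begin{equation*}
    D := \big[f(b,T)-f(b,T-a)\big] - \big[f(a,T)-f(a,T-b)\big].
\end{equation*}
It therefore suffices to prove $D>0$ whenever $0<a<b$ and $a+b\le T$. This contradicts minimality, so at a minimizer $\Delta_1\ge\cdots\ge\Delta_\ell$.

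\textbf{Integral form of $D$.} Since $\partial_\theta f(x,\theta)=k_x(\theta):=18x^4\theta/(3\theta+x)^3>0$,
\begin{equation*}
    D=\int_0^a k_b(T-t)\,dt-\int_0^b k_a(T-t)\,dt .
\end{equation*}
For every $\theta>0$ the pointwise ratio satisfies
\begin{equation*}
    \frac{k_b(\theta)}{k_a(\theta)}=\Big(\frac{b}{a}\Big)^4\Big(\frac{3\theta+a}{3\theta+b}\Big)^3>\frac{b}{a}.
\end{equation*}
Hence the first integral exceeds $\frac{b}{a}\int_0^a k_a(T-t)\,dt$. It would then remain to show that the average of $k_a$ over $[T-a,T]$ is at least its average over $[T-b,T]$.

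\textbf{Main obstacle.} That averaging comparison is the delicate step, because $k_a$ is not monotone. It increases for $\theta<a/6$ and decreases afterwards, so the crude ratio bound discards exactly the slack that is needed.

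First, I would keep the sharper, $\theta$-dependent ratio $(b/a)^4\big((3\theta+a)/(3\theta+b)\big)^3$. This ratio is increasing in $\theta$ and lies well above $b/a$ when $\theta\gtrsim b$. Here every relevant $\theta\ge T-b\ge a$, so it is at least $(b/a)^4\big(4a/(3a+b)\big)^3$.

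Second, if that is not enough, I would use the homogeneity $f(\lambda x,\lambda\theta)=\lambda^3 f(x,\theta)$ to normalize $T=1$. This reduces $D>0$ to an explicit rational inequality in $(a,b)$ on the region $0<a<b$, $a+b\le1$. I would then clear the denominators $(3+b)^2$, $(3-3a+b)^2$, $(3+a)^2$ and $(3-3b+a)^2$ and check the sign of the resulting polynomial. It should factor as $(b-a)$ times a positive polynomial, since $D$ vanishes identically when $a=b$.

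\textbf{Existence of a minimizer.} Strictly speaking, the lemma only asserts a property of minimizers, so the exchange argument already proves it. For completeness I would also note that the objective extends continuously to the closed simplex, on which some intervals may be degenerate. Degenerate intervals contribute $0$, and merging or splitting them does not destroy the argument. So minimizers exist, and they all have non-increasing lengths.
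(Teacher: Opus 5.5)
\textbf{The step $D>0$ is not proved.} Your exchange set-up, the expression for $D$, and $\partial_\theta f=k_x$ are all correct, and the strategy is the same as the paper's. However, the one nontrivial step, $D>0$, is never established, and none of the routes you sketch closes it.

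The averaging comparison you reduce to is not just delicate but false. In the second integral, $\theta$ ranges over $[T-b,T]$ with $T-b=\theta_{j-1}+a\ge a>a/6$. So $k_a$ is strictly \emph{decreasing} on that whole range, and its average over the right-hand piece $[T-a,T]$ is strictly smaller than its average over $[T-b,T]$.

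Your first fallback also fails when $b/a$ is large. It bounds the ratio from below by its value $(b/a)^4\big(4a/(3a+b)\big)^3$ at $\theta=a$. For $\theta_{j-1}=0$, $a=1$, $b=1000$, this value times $\int_{T-a}^{T}k_a$ is about $0.04$, whereas $\int_{T-b}^{T}k_a\approx 0.44$. The first integral only involves $\theta\ge T-a\ge b$, so evaluating the ratio at $\theta=a$ discards far too much.

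The second fallback is an unexecuted polynomial check with a conjectured factorization. As written, there is no argument for $D>0$.

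\textbf{The missing idea.} Rescale the two integrals to a common parameter instead of comparing $k_b$ and $k_a$ at the same $\theta$. Homogeneity gives $f(x,\theta)=x^3\phi(\theta/x)$ with $\phi(v)=9v^2/(3v+1)^2$. Hence $k_x(\theta)=x^2\phi'(\theta/x)$ with $\phi'(v)=18v/(3v+1)^3$. Substitute $\theta=T-a+au$ in the first integral and $\theta=T-b+bu$ in the second, with $u\in[0,1]$. This gives
\begin{equation*}
    D=ab\int_0^1\Big(b\,\phi'(v_1(u))-a\,\phi'(v_2(u))\Big)\,du,
    \qquad v_1(u):=\frac{T-a+au}{b},\quad v_2(u):=\frac{T-b+bu}{a}.
\end{equation*}
Since $T-a\ge b$ and $T-b\ge a$, both $v_1,v_2\ge 1$, and $\phi'$ is positive and decreasing there. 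Also $v_1\le v_2$, because $b(T-b+bu)-a(T-a+au)=(b-a)\big(T-(a+b)(1-u)\big)\ge 0$. Hence $b\,\phi'(v_1)>a\,\phi'(v_1)\ge a\,\phi'(v_2)$, and $D>0$.

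This is precisely the paper's proof. The paper writes each term as $\Delta_j^3g(\theta_{j-1}/\Delta_j)$ with $g(t)=\phi(t+1)=\big((t+1)/(t+4/3)\big)^2$. Then $g'$ is positive and decreasing on all of $[0,\infty)$, and the same identity follows in terms of the left endpoint $s=\theta_{j-1}$.

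\textbf{On existence.} Your existence paragraph is not needed for the lemma as stated. On the closed simplex, a swap involving a degenerate interval gives $D=0$, so that claim would need more care. What the later application actually uses is that finitely many adjacent swaps sort any configuration without increasing the objective.
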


\begin{IEEEproof}
For $t \geq 0$, define
\begin{equation*}
    g(t) := \left(\frac{t+1}{t+4/3}\right)^2.
\end{equation*}
For $j \in [1:\ell]$, since $\theta_j = \theta_{j-1} + \Delta_j$, 
\begin{equation*}
    \frac{\Delta_j^3}
    {\left(1+\dfrac{\Delta_j}{3(\theta_{j-1}+\Delta_j)}\right)^2} = \Delta_j^3 g\left(\frac{\theta_{j-1}}{\Delta_j}\right).
\end{equation*}
The first two derivatives of $g$ are
\begin{align*}
    g'(t)&=\frac{18(t+1)}{(3t+4)^3}>0,\displaybreak[0]\\
    g''(t)&=-\frac{18(6t+5)}{(3t+4)^4}<0,
\end{align*}
so $g'$ is positive and decreasing.

Suppose that a minimizing choice of $\theta_1,\ldots,\theta_{\ell-1}$ does not satisfy $\Delta_1\geq\cdots\geq\Delta_\ell$.
Then there exists $j\in[1:\ell-1]$ such that $\Delta_j<\Delta_{j+1}$.
Let $s:=\theta_{j-1}$, $a:=\Delta_j$, and $b:=\Delta_{j+1}$.
Then $\theta_j=s+a$ and $\theta_{j+1}=s+a+b$, so the contribution of the two intervals $[\theta_{j-1}, \theta_j]$ and $[\theta_j, \theta_{j+1}]$ to \eqref{eq:interval_length_objective} is
\begin{align*}
    a^3g\left(\frac{s}{a}\right)
    +b^3g\left(\frac{s+a}{b}\right).
\end{align*}
If we interchange their lengths, i.e., take $\theta_j = s+b$ while leaving $\theta_{j-1}$ and $\theta_{j+1}$ unchanged, then the contribution of the two intervals becomes 
\begin{align*}
    b^3g\left(\frac{s}{b}\right)
    +a^3g\left(\frac{s+b}{a}\right).
\end{align*}
The contributions of all other intervals remain unchanged.
The difference between the contributions before and after the interchange is
\begin{align*}
    &\quad~a^3g\left(\frac{s}{a}\right)
    +b^3g\left(\frac{s+a}{b}\right)
    -b^3g\left(\frac{s}{b}\right)
    -a^3g\left(\frac{s+b}{a}\right) \displaybreak[0]\\
    &=b^3\left(
        g\left(\frac{s+a}{b}\right)-g\left(\frac{s}{b}\right)
    \right)
    -a^3\left(
        g\left(\frac{s+b}{a}\right)-g\left(\frac{s}{a}\right)
    \right) \displaybreak[0]\\
    &=ab\int_0^1\left(
        bg'\left(\frac{s+au}{b}\right)
        -ag'\left(\frac{s+bu}{a}\right)
    \right)du.
\end{align*}
Since $g'$ is positive and decreasing, $a < b$, and $(s+au)/b \leq (s+bu)/a$ for $u\in[0,1]$, we have
\begin{equation*}
    bg'\left(\frac{s+au}{b}\right) > ag'\left(\frac{s+au}{b}\right) \geq ag'\left(\frac{s+bu}{a}\right).
\end{equation*}
Consequently, the difference is strictly positive.
Interchanging the two interval lengths strictly decreases \eqref{eq:interval_length_objective}, which contradicts the assumption.
Therefore, every minimizing choice of $\theta_1,\ldots,\theta_{\ell-1}$ satisfies $\Delta_1\geq\cdots\geq\Delta_\ell$.
\end{IEEEproof}

We now use the above lemmas to prove the proposition.
Fix any integer $\ell \geq 1$.
Let $(P_W,P_{Y|X,W})$ be any BSC-decomposition of $P_{Y|X}$ with cardinality at most $\ell+1$, and $\hat{A} := \alpha(P_{Y|X,W}(\cdot|\cdot,W))$ where $W \sim P_W$.
By \eqref{eq:conditional_crossover_comparison},
\begin{equation*}
    0 \leq \min\mathrm{supp}(P_{\hat{A}})
    \leq \mathbb{E}[\hat{A}\,|\, |Y|] \leq A
\end{equation*}
almost surely.
Based on \eqref{eq:uniform_Theta}, we have $\mathrm{supp}(P_A) = [0,1/2]$ and thus $\min\mathrm{supp}(P_{\hat{A}}) = 0$.
We enumerate the elements of $\mathrm{supp}(P_{\hat{A}}) \cup \{1/2\}$ as
\begin{equation*}
    0 = \alpha_0 < \alpha_1 < \cdots < \alpha_{m-1} < \alpha_m = 1/2,
\end{equation*}
where $1 \leq m \leq \ell+1$.
Then the corresponding angles $\theta_j := \arccos(1 - 2\alpha_j)$ for $j \in [0:m]$ satisfy
\begin{equation*}
    0 = \theta_0 < \theta_1 < \cdots < \theta_{m-1} < \theta_m = \pi/2.
\end{equation*}

Let $\hat{C}$ be the piecewise-linear interpolant of $C$ at the points $\alpha_0,\ldots,\alpha_m$.
Define $E(\theta) := \hat{C}(\alpha(\theta)) - C(\alpha(\theta))$ for $\theta \in [0,\pi/2]$.
Based on \eqref{eq:arbitrary_decomp_interpolation_bound} and \eqref{eq:uniform_Theta},
\begin{align}
    I(X;Y\,|\, W) - I(X;Y) &\geq \mathbb{E}[\hat{C}(A) - C(A)] \notag\\*
    &= \mathbb{E}[E(\Theta)] \notag\displaybreak[0]\\
    &= \frac{2}{\pi}\sum_{j=1}^{m}
    \int_{\theta_{j-1}}^{\theta_j}E(\theta)\,d\theta,
    \label{eq:uniform_theta_gap_integral}
\end{align}
where the last equality follows from $\Theta \sim \mathrm{Unif}([0,\pi/2])$.
For $j \in [1:m]$, define $\Delta_j := \theta_j - \theta_{j-1}$.
Applying \eqref{eq:integrated_error_lower_bound_simple} to each integral in \eqref{eq:uniform_theta_gap_integral}, we obtain
\begin{equation}
    I(X;Y\,|\, W) - I(X;Y)
    \geq \frac{\log e}{6\pi}\sum_{j=1}^{m}
    \frac{\Delta_j^3}{\left(1 + \frac{\Delta_j}{3\theta_j}\right)^2}.
    \label{eq:uniform_theta_gap_interval_bound}
\end{equation}
According to Lemma~\ref{thm:optimal_interval_ordering}, it suffices to derive a lower bound on the right-hand side of \eqref{eq:uniform_theta_gap_interval_bound} under the ordering $\Delta_1 \geq \cdots \geq \Delta_m$.
Then for $j \in [1:m]$,
\begin{equation*}
    \theta_j = \sum_{i=1}^j\Delta_i \geq j\Delta_j,
\end{equation*}
and thus $\Delta_j/\theta_j \leq 1/j$.
Consequently,
\begin{align}
    &\quad~\sum_{j=1}^{m}
    \frac{\Delta_j^3}{\left(1 + \frac{\Delta_j}{3\theta_j}\right)^2}
    \notag\\*
    &\geq \sum_{j=1}^{m}
    \frac{\Delta_j^3}{\left(1 + \frac{1}{3j}\right)^2}
    \notag\displaybreak[0]\\
    &\overset{(a)}{\geq}
    \frac{\left(\sum_{j=1}^{m}\Delta_j\right)^3}
    {\left(\sum_{j=1}^{m}\left(1 + \frac{1}{3j}\right)\right)^2}
    \notag\displaybreak[0]\\
    &= \frac{(\pi/2)^3}
    {\left(m + \frac{1}{3}\sum_{j=1}^{m}\frac{1}{j}\right)^2}
    \notag\displaybreak[0]\\
    &\overset{(b)}{\geq}
    \frac{\pi^3}
    {8\left(m + \frac{1 + \ln m}{3}\right)^2},
    \label{eq:uniform_theta_holder_bound}
\end{align}
where $(a)$ is obtained by applying H\"older's inequality with exponents $3$ and $3/2$ as follows:
\begin{align*}
    &\quad~\bigg(\sum_{j=1}^{m}\Delta_j\bigg)^3 \\*
    &= \bigg(
        \sum_{j=1}^{m}\Delta_j
        \bigg(1 + \frac{1}{3j}\bigg)^{-2/3}
        \bigg(1 + \frac{1}{3j}\bigg)^{2/3}
    \bigg)^3 \displaybreak[0]\\
    &\leq \bigg(
        \sum_{j=1}^{m}\Delta_j^3
        \bigg(1 + \frac{1}{3j}\bigg)^{-2}
    \bigg)
    \bigg(
        \sum_{j=1}^{m}\bigg(1 + \frac{1}{3j}\bigg)
    \bigg)^2.
\end{align*}
For $(b)$, we use
\begin{equation*}
    \sum_{j=1}^{m}\frac{1}{j}
    \leq 1 + \int_1^m\frac{1}{t}\,dt
    = 1 + \ln m.
\end{equation*}
Combining \eqref{eq:uniform_theta_gap_interval_bound} with \eqref{eq:uniform_theta_holder_bound} and $m \leq \ell+1$, we obtain \eqref{eq:uniform_theta_lower_bound_log}.

\section{Proof of Lemma~\ref{thm:comm_rate}}

The random variables $\hat{A}_1,\ldots,\hat{A}_n$ are i.i.d. with $p_{\hat{A}}$.
Hence, \eqref{eq:N_alpha} implies $N_\alpha \sim \mathrm{Binom}(n,p_{\hat{A}}(\alpha))$ for each $\alpha \in \hat{\mathcal{A}}$.
Then the communication rate is
\begin{align*}
    &\quad~\frac{1}{n}\mathbb{E}\bigg[
        \sum_{\alpha \in \hat{\mathcal{A}}}N_\alpha(C(\alpha) + D_{N_\alpha}(\alpha))
    \bigg] \\*
    &\overset{(a)}{=} \sum_{\alpha \in \hat{\mathcal{A}}}p_{\hat{A}}(\alpha)C(\alpha)
    + \frac{1}{n}\sum_{\alpha \in \hat{\mathcal{A}}}\mathbb{E}[N_\alpha D_{N_\alpha}(\alpha)] \displaybreak[0]\\
    &= \mathbb{E}[C(\hat{A})]
    + \frac{1}{n}\sum_{\alpha \in \hat{\mathcal{A}}}\mathbb{E}[N_\alpha D_{N_\alpha}(\alpha)] \displaybreak[0]\\
    &\overset{(b)}{=} I(X;Y\,|\, W)
    + \frac{1}{n}\sum_{\alpha \in \hat{\mathcal{A}}}\mathbb{E}[N_\alpha D_{N_\alpha}(\alpha)],
\end{align*}
where $(a)$ uses $\mathbb{E}[N_\alpha] = np_{\hat{A}}(\alpha)$; $(b)$ follows from \eqref{eq:weighted_average_capacity}.

\section{Proof of Theorem~\ref{thm:rate_upper_bound}}

For any $k \geq 1$ and $\alpha \in (0,1/2)$, let $W_{k,\alpha} \sim \mathrm{Binom}(k,\alpha)$ and $U \sim \mathrm{Unif}(-1/2,1/2)$ be independent.
Based on Theorem~9 and Remark~10 of \cite{zhaoRejectionSampledLinearCodes2026}, there exists a simulation algorithm that simulates $k$ copies of $\mathrm{BSC}(\alpha)$ with a communication rate $C(\alpha) + D_k(\alpha)$ satisfying
\begin{align*}
    k D_k(\alpha) &\leq H(W_{k,\alpha}) + \underbrace{\log e + 1}_{\delta_0'} \displaybreak[0]\\
    &\overset{(a)}{=} h(W_{k,\alpha} + U) + \delta_0' \displaybreak[0]\\
    &\overset{(b)}{\leq} \frac{1}{2}\log\left(2\pi e\mathrm{Var}(W_{k,\alpha} + U)\right) + \delta_0' \displaybreak[0]\\
    &= \frac{1}{2}\log\left(2\pi e\left(k\alpha(1-\alpha) + \frac{1}{12}\right)\right) + \delta_0',
\end{align*}
where $(a)$ holds because adding the uniform variable $U$ distributed over $(-1/2, 1/2)$ preserves the entropy of the integer-valued variable $W_{k,\alpha}$;
$(b)$ uses the maximum entropy bound for continuous random variables.
For $\alpha=0$ and $1/2$, the BSC is noiseless and purely noisy, respectively, and can be simulated with $D_k(0)=D_k(1/2)=0$.

We now bound the redundancy term over $I(X;Y\,|\, W)$ in \eqref{eq:BISOC_sim_comm_rate} for any finite BSC-decomposition.
Let $\hat{A}$, $\hat{\mathcal{A}}$, and $N_\alpha$ be as defined in Section~\ref{sec:BISOC_sim}.
Write $p_\alpha := p_{\hat{A}}(\alpha)$ and $\hat{\mathcal{A}}^\circ := \hat{\mathcal{A}}\setminus\{0,1/2\}$.
Then we have 
\begin{align}
    &\quad~\frac{1}{n} \sum_{\alpha \in \hat{\mathcal{A}}} \mathbb{E}[N_\alpha D_{N_\alpha}(\alpha)] \notag\\*
    &= \frac{1}{n} \sum_{\alpha \in \hat{\mathcal{A}}^\circ} \mathbb{E}[N_\alpha D_{N_\alpha}(\alpha)] \notag\displaybreak[0]\\
    &\leq \frac{1}{n}\sum_{\alpha \in \hat{\mathcal{A}}^\circ} \mathbb{E}\Big[\frac{1}{2}\log \Big(2\pi e \Big(N_\alpha\alpha(1-\alpha) + \frac{1}{12}\Big) \Big) + \delta_0' \Big] \notag\displaybreak[0]\\
    &\!\overset{(a)}{\leq} \frac{1}{n}\sum_{\alpha \in \hat{\mathcal{A}}^\circ} \Big(\frac{1}{2}\log \Big(2\pi e \Big(np_\alpha\alpha(1-\alpha) + \frac{1}{12}\Big) \Big) + \delta_0'\Big) \notag\displaybreak[0]\\
    &= \frac{|\hat{\mathcal{A}}^\circ|}{n}\!\sum_{\alpha \in \hat{\mathcal{A}}^\circ}\!\frac{1}{|\hat{\mathcal{A}}^\circ|} \Big(\frac{1}{2}\log \Big(2\pi e \Big(np_\alpha\alpha(1-\alpha) + \frac{1}{12}\Big) \Big) + \delta_0'\Big) \notag\displaybreak[0]\\
    &\!\overset{(b)}{\leq} \frac{|\hat{\mathcal{A}}^\circ|}{n} \Big(\frac{1}{2}\log \Big(\sum_{\alpha \in \hat{\mathcal{A}}^\circ} \frac{2\pi e}{|\hat{\mathcal{A}}^\circ|} \Big(np_\alpha\alpha(1-\alpha) + \frac{1}{12}\Big) \Big) + \delta_0'\Big) \notag\displaybreak[0]\\
    &\!\overset{(c)}{\leq} \frac{|\hat{\mathcal{A}}^\circ|}{n} \Big(\frac{1}{2}\log \Big(\frac{2\pi e}{|\hat{\mathcal{A}}^\circ|} \Big(n\mathbb{E}[\hat{A}(1-\hat{A})] + \frac{|\hat{\mathcal{A}}^\circ|}{12}\Big) \Big) + \delta_0'\Big) \notag\displaybreak[0]\\
    &\!\overset{(d)}{\leq} \frac{|\hat{\mathcal{A}}^\circ|}{n} \Big(\frac{1}{2}\log \Big(\frac{\pi e}{2|\hat{\mathcal{A}}^\circ|} \Big(n + \frac{|\hat{\mathcal{A}}^\circ|}{3}\Big) \Big) + \delta_0'\Big) \notag\displaybreak[0]\\
    &= \frac{|\hat{\mathcal{A}}^\circ|}{2n} \log \Big(\frac{n}{|\hat{\mathcal{A}}^\circ|} + \frac{1}{3}\Big) + \frac{|\hat{\mathcal{A}}^\circ|}{n} \Big(\underbrace{\frac{1}{2}\log \frac{\pi e}{2} + \delta_0'}_{\delta_0}\Big), \label{eq:proof_rate_upper_bound}
\end{align}
where $(a)$ applies Jensen's inequality to the concave function $\log(\cdot)$ and uses $\mathbb{E}[N_\alpha] = n p_\alpha$;
$(b)$ also applies Jensen's inequality to $\log(\cdot)$, but over the uniform distribution on $\hat{\mathcal{A}}^\circ$;
$(c)$ uses $\sum_{\alpha \in \hat{\mathcal{A}}} p_\alpha \alpha(1-\alpha) = \mathbb{E}[\hat{A}(1-\hat{A})]$;
and $(d)$ bounds $\mathbb{E}[\hat{A}(1-\hat{A})] \leq 1/4$ via $\alpha(1-\alpha) \leq 1/4$ for $\alpha \in [0,1/2]$.

By Theorems~\ref{thm:finite_BSC_decomp} and~\ref{thm:sharp_finite_BSC_decomp}, for each $\ell \geq 1$, there exists a BSC-decomposition $(P_W, P_{Y|X,W})$ of cardinality at most $\ell + 1$ such that $I(X;Y \,|\, W) - I(X;Y) \leq \gamma(\ell)$.
Moreover, the constructions in the proofs of Theorems~\ref{thm:finite_BSC_decomp} and~\ref{thm:sharp_finite_BSC_decomp} (see Appendices~\ref{sec:finite_BSC_decomp_proof} and~\ref{sec:sharp_finite_BSC_decomp_proof}) provide $\hat{\mathcal{A}} \subseteq \{0, \alpha_1, \ldots, \alpha_{\ell-1}, 1/2\}$ for some $0 < \alpha_1 < \ldots < \alpha_{\ell-1} < 1/2$, and thus $|\hat{\mathcal{A}}^\circ| \leq \ell - 1$.
Together with Lemma~\ref{thm:comm_rate}, the communication rate of Algorithm~\ref{alg:BISOC_sim} based on this decomposition is upper bounded by
\begin{align*}
    &\quad~ I(X;Y) + \gamma(\ell) + \frac{1}{n}\sum_{\alpha \in \hat{\mathcal{A}}} \mathbb{E}[N_\alpha D_{N_\alpha}(\alpha)] \\
    &\leq I(X;Y) + \gamma(\ell) + \frac{\ell-1}{2n} \log \Big(\frac{n}{\ell-1} + \frac{1}{3}\Big) + \frac{\ell-1}{n} \delta_0,
\end{align*}
where the inequality follows from \eqref{eq:proof_rate_upper_bound} with $|\hat{\mathcal{A}}^\circ| \leq \ell - 1$.
Optimizing over $\ell \geq 1$ yields \eqref{eq:rate_upper_bound}.

\section{Proof of Corollary~\ref{cor:rate_scaling}}

To upper bound the minimum in \eqref{eq:rate_upper_bound}, it suffices to evaluate it at a suitable integer $\ell$.
Let $c_1 := (\pi^2\log e)/48$.
For a constant $c_0>0$ to be chosen below, define
\begin{align*}
    \tilde{\ell} &:= c_0\left(\frac{n}{\log n}\right)^{1/3}, \displaybreak[0]\\
    \ell &:= \left\lceil\tilde{\ell}\right\rceil + 1.
\end{align*}
For sufficiently large $n$, this choice satisfies $\ell \geq 19$.
Based on Theorems~\ref{thm:rate_upper_bound} and~\ref{thm:sharp_finite_BSC_decomp},
\begin{equation}
    \gamma(\ell) = \frac{c_1}{\tilde{\ell}^2} + O\left(\frac{1}{\tilde{\ell}^{5/2}}\right).
    \label{eq:rate_proof_gamma_ell}
\end{equation}
Since $\ell-1 = \tilde{\ell} + O(1)$, the remaining terms satisfy
\begin{align}
    &\quad~\frac{\ell-1}{2n}\log\left(\frac{n}{\ell-1} + \frac{1}{3}\right) + \frac{\ell-1}{n}\delta_0 \notag\\*
    &= \frac{\tilde{\ell}}{2n}\log\frac{n}{\tilde{\ell}} + \frac{\tilde{\ell}}{n}\delta_0 + O\left(\frac{\log n}{n}\right),
    \label{eq:rate_proof_remaining_terms}
\end{align}
where the rounding error contributes the $O((\log n)/n)$ term.
Combining \eqref{eq:rate_proof_gamma_ell} and \eqref{eq:rate_proof_remaining_terms}, we obtain
\begin{align}
    &\quad~\gamma(\ell)
    + \frac{\ell-1}{2n}\log\left(\frac{n}{\ell-1} + \frac{1}{3}\right)
    + \frac{\ell-1}{n}\delta_0 \notag\\*
    &= \left(\frac{c_1}{c_0^2} + \frac{c_0}{3}\right)
    \left(\frac{\log n}{n}\right)^{2/3}  + \frac{c_0\left(\frac{1}{6}\log\log n + \delta_0\right)}
    {n^{2/3}(\log n)^{1/3}} \notag\displaybreak[0]\\
    &\quad~- \frac{c_0\log c_0}
    {2n^{2/3}(\log n)^{1/3}} + O\left(\left(\frac{\log n}{n}\right)^{5/6}\right).
    \label{eq:rate_proof_g}
\end{align}
Let $K_0 := c_1/c_0^2 + c_0/3$ denote the leading coefficient in \eqref{eq:rate_proof_g}.
We choose $c_0 = (6c_1)^{1/3}$ to minimize $K_0$, which yields $K_0 = c_0/2$.
Since $c_0 > 1$ the third term of \eqref{eq:rate_proof_g} is negative.
Moreover,
\begin{equation*}
    \left(\frac{\log n}{n}\right)^{5/6}
    = o\left(\frac{1}{n^{2/3}(\log n)^{1/3}}\right).
\end{equation*}
Therefore, the redundancy is at most
\begin{equation*}
    K_0\left(
        \left(\frac{\log n}{n}\right)^{2/3}
        + \frac{\frac{1}{3}\log\log n + 2\delta_0}
        {n^{2/3}(\log n)^{1/3}}
    \right).
\end{equation*}

\section{Proof of Theorem~\ref{thm:finite_output_degraded_channel}} \label{sec:finite_output_degraded_channel}

Fix any $\ell \geq 1$ and any BISOC $P_{Y|X}$.
Let $(P_{|Y|}, P_{Y|X,|Y|})$ be the canonical BSC-decomposition of $P_{Y|X}$ and $A := \alpha(P_{Y|X,|Y|}(\cdot|\cdot, |Y|))$.
Consider any $\ell+1$ crossover probabilities  $0=\alpha_0<\cdots<\alpha_\ell=1/2$.
Let $J \in [1:\ell]$ be the index of the interval containing $A$, defined by
\begin{equation*}
    J:=
    \begin{cases}
        j, & A\in[\alpha_{j-1},\alpha_j)\text{ and } j \in [1:\ell-1],\\
        \ell, & A\in[\alpha_{\ell-1},\alpha_\ell].
    \end{cases}
\end{equation*}
Let $\bar{A}=\mathbb{E}[A\,|\,J]$ be the conditional mean crossover probability.
Then $\bar{A}$ takes at most $\ell$ distinct values and $\bar{A}\in[\alpha_{J-1},\alpha_J]$ almost surely.
By Theorem~\ref{thm:cano_BSC_decomp}, $|Y|\perp X$.
Since $A$, $J$, and $\bar{A}$ are functions of $|Y|$,
they are jointly independent of $X$.

We first normalize the outputs of the component BSCs in the canonical decomposition to $\{\pm1\}$.
Let $U\sim\mathrm{Unif}(\{\pm1\})$ be independent of $(X,Y)$.
Define
\begin{equation*}
    \hat{Y}:=
    \begin{cases}
        Y/\beta(P_{Y|X,|Y|}(\cdot|\cdot,|Y|)), & A<1/2,\\
        U, & A=1/2.
    \end{cases}
\end{equation*}
The first case is well-defined because $\beta(P_{Y|X,|Y|}(\cdot|\cdot,|Y|)) \neq 0$ whenever $A < 1/2$.
By Definition~\ref{def:BSC} and the choice of $U$,
\begin{equation*}
    \mathbb{P}(\hat{Y}\neq X\,|\, X,|Y|)=A
\end{equation*}
almost surely.
We now construct the degraded channel $p_{Z|X}$ by defining
\begin{equation} \label{eq:degraded_Z}
    Z:=(1/2-\bar{A})\hat{Y}.
\end{equation}
Since $Z$ is a function of $Y$ and the independent random variable $U$, $X \to Y \to Z$ forms a Markov chain.
Therefore, $p_{Z|X}$ is degraded with respect to $P_{Y|X}$.

To verify that $p_{Z|X}$ is a BISOC, observe that
\begin{align*}
    \mathbb{P}(\hat{Y}\neq X\,|\, X,J)
    &\overset{(a)}{=}\mathbb{E}\big[
        \mathbb{P}(\hat{Y}\neq X\,|\, X,|Y|)\,|\, X,J\big] \displaybreak[0]\\
    &=\mathbb{E}[A\,|\, X,J] \displaybreak[0]\\
    &\overset{(b)}{=}\mathbb{E}[A\,|\, J] \\
    &=\bar{A},
\end{align*}
where $(a)$ uses the tower property of conditional expectation and the fact that $J$ is a function of $|Y|$;
$(b)$ follows from $(A,J)\perp X$.
Hence, conditional on $J$, the channel from $X$ to $\hat{Y}$ is a $\mathrm{BSC}(\bar{A})$.
Similarly, since $\bar{A}$ is a function of $J$, the tower property gives
\begin{align*}
    \mathbb{P}(\hat{Y}\neq X\,|\, X,\bar{A})
    &=\mathbb{E}\big[
        \mathbb{P}(\hat{Y}\neq X\,|\, X,J)
        \,|\, X,\bar{A}\big]\\
    &=\mathbb{E}[\bar{A}\,|\, X,\bar{A}]\\
    &=\bar{A}.
\end{align*}
Together with \eqref{eq:degraded_Z}, conditional on $\bar{A}$,
the channel from $X$ to $Z$ is a $\mathrm{BSC}(\bar{A})$
with output coefficient $1/2-\bar{A}$.
Since $\bar{A} \perp X$, $p_{Z|X}$ is a mixture of BSCs and hence is a BISOC.
Moreover, $|Z|=1/2-\bar{A}$ takes at most $\ell$ distinct values,
so its canonical BSC-decomposition has cardinality at most $\ell$.
Based on Theorem~\ref{thm:cano_BSC_decomp}
and \eqref{eq:weighted_average_capacity},
\begin{equation}
    I(X;Z)=I(X;Z\,|\,\bar{A})
    =\mathbb{E}[C(\bar{A})].
    \label{eq:degraded_channel_capacity}
\end{equation}

To bound the capacity loss, let $\hat{C}$ be the piecewise-linear interpolant of $C$ at $\alpha_0,\ldots,\alpha_\ell$.
The convexity of $C$ gives $C\leq\hat{C}$.
Since $\hat{C}$ is affine on each interval $[\alpha_{j-1}, \alpha_j]$, we have
\begin{equation*}
    \mathbb{E}[C(A)\,|\, J] \leq \mathbb{E}[\hat{C}(A)\,|\, J] = \hat{C}(\mathbb{E}[A \,|\, J]) =\hat{C}(\bar{A})
\end{equation*}
almost surely.
Combining this inequality with \eqref{eq:capacity_identity} and \eqref{eq:degraded_channel_capacity},
\begin{align}
    &\quad~I(X;Y)-I(X;Z) \notag\\*
    &=\mathbb{E}\big[C(A)-C(\bar{A})\big] \notag\displaybreak[0]\\
    &=\mathbb{E}\big[\mathbb{E}[C(A) \,|\, J]-C(\bar{A})\big] \notag\displaybreak[0]\\
    &\leq\mathbb{E}\big[\hat{C}(\bar{A})-C(\bar{A})\big] \notag\displaybreak[0]\\
    &\leq\max_{j\in[1:\ell]}
    \max_{\alpha\in[\alpha_{j-1},\alpha_j]}
    \big(\hat{C}(\alpha)-C(\alpha)\big).
    \label{eq:degraded_capacity_loss_interpolation}
\end{align}
The final term of \eqref{eq:degraded_capacity_loss_interpolation} coincides with that of \eqref{eq:max_linear_approx_error}.
Therefore, applying the same argument with $\theta_j=j\pi/(2\ell)$ and $\alpha_j=\alpha(\theta_j)$ for $j\in[0:\ell]$ yields \eqref{eq:finite_output_degraded_bound}.
\fi

\end{document}